\newcommand{\ifarxiv}[2]{\ifthenelse{\boolean{arxv}}{#1}{#2}}
\newcites{sm}{Additional References}
\titlespacing{\section}{0pt}{\parskip}{0pt}
\titlespacing{\subsection}{0pt}{\parskip}{0pt}
\let\todo\sidenote      
\RenewDocumentCommand\sidenotetext{oo+m}{
	\IfNoValueOrEmptyTF{#1}{%
		\@sidenotes@placemarginal{#2}{\textsuperscript{\{\thesidenote\}}{}~\scriptsize{#3}}%
		\refstepcounter{sidenote}%
	}{%
		\@sidenotes@placemarginal{#2}{\textsuperscript{#1}~#3}%
	}%
}
\RenewDocumentCommand\sidenotemark{o}{
	\@sidenotes@multichecker%
	\IfNoValueOrEmptyTF{#1}{%
		\@sidenotes@thesidenotemark{\{\thesidenote\}}%
	}{%
		\@sidenotes@thesidenotemark{#1}%
	}%
	\@sidenotes@multimarker%
}
\renewcommand{\todo}[1]{}
\newcommand{\T}{\mathsf{T}}
\newcommand{\given}{\,\vert\,}
\newcommand{\suff}{\succcurlyeq}
\newcommand{\thinsp}{\:\!}
\newcommand\indept{\protect\mathpalette{\protect\independenT}{\perp}}
\def\independenT#1#2{\mathrel{\rlap{$#1#2$}\mkern2mu{#1#2}}}
\newtheorem{theorem}{Theorem}
\newtheorem{proposition}[theorem]{Proposition}
\newtheorem{corollary}[theorem]{Corollary}
\newtheorem{definition}{Definition}
\newtheorem{property}{Property}
\theoremstyle{definition}
\newtheorem{remark}{Remark}
\newtheorem{example}{Example}
\newtheorem*{example*}{Example}
\def\thm@space@setup{\thm@preskip=\parskip
\thm@postskip=\parskip}
\title{Gaussian Partial Information Decomposition: Bias Correction and Application to High-dimensional Data}
\author{%
	Praveen~Venkatesh\textsuperscript{1,2,*}, Corbett~Bennett\textsuperscript{1}, Sam~Gale\textsuperscript{1}, Tamina~K.~Ramirez\textsuperscript{1}, \\
	\textbf{Greggory~Heller\textsuperscript{1}, S\'everine~Durand\textsuperscript{1}, Shawn~Olsen\textsuperscript{1}, Stefan~Mihalas\textsuperscript{1,$\dagger$}} \\
	\textsuperscript{1}Allen Institute; \textsuperscript{2}University of Washington, Seattle, WA, USA \\
	\textsuperscript{*}\href{mailto:praveen.venkatesh@alleninstitute.org}{\texttt{praveen.venkatesh@alleninstitute.org}}; \textsuperscript{$\dagger$}\href{mailto:stefanm@alleninstitute.org}{\texttt{stefanm@alleninstitute.org}}
}
\begin{document}

\maketitle
\vspace{-2mm}

\begin{abstract}
	\vspace{-1mm}
	Recent advances in neuroscientific experimental techniques have enabled us to simultaneously record the activity of thousands of neurons across multiple brain regions.
	This has led to a growing need for computational tools capable of analyzing how task-relevant information is represented and communicated between several brain regions.
	Partial information decompositions (PIDs) have emerged as one such tool, quantifying how much unique, redundant and synergistic information two or more brain regions carry about a task-relevant message.
	However, computing PIDs is computationally challenging in practice, and statistical issues such as the bias and variance of estimates remain largely unexplored.
	In this paper, we propose a new method for efficiently computing and estimating a PID definition on multivariate Gaussian distributions.
	We show empirically that our method satisfies an intuitive additivity property, and recovers the ground truth in a battery of canonical examples, even at high dimensionality.
	We also propose and evaluate, for the first time, a method to correct the bias in PID estimates at finite sample sizes.
	Finally, we demonstrate that our Gaussian PID effectively characterizes inter-areal interactions in the mouse brain, revealing higher redundancy between visual areas when a stimulus is behaviorally relevant.
\end{abstract}

\setlength{\abovedisplayskip}{3pt}
\setlength{\belowdisplayskip}{3pt}


\vspace{-4mm}
\section{Introduction}

Neuroscientific experiments are increasingly collecting large-scale datasets with simultaneous recordings of multiple brain regions with single-unit resolution~\cite{devries2020large,siegle2021survey,stringer2019spontaneous}.
These experimental advances call for new computational tools that can allow us to probe how multiple brain regions jointly process relevant information in a behaving animal.

Partial Information Decompositions (PIDs) offer a new method for studying how different brain regions carry task-relevant information:
they provide measures to quantify the amount of \emph{unique}, \emph{redundant} and \emph{synergistic} information that one region has with respect to another.
The information itself could pertain to task-relevant variables such as stimuli, behavioral responses, or information contained in a third region.
For example, we may be interested in how much information about a stimulus is communicated or shared (i.e., redundantly present) between two brain regions over time.
Or, we might be interested in the extent to which one region's activity uniquely explains that of another, while excluding information corresponding to spontaneous behaviors.

Ideas such as redundancy and synergy have a long history in neuroscience, having been proposed for understanding noise correlations~\cite{schneidman2003synergy} and to understand differences in encoding complexity between different brain regions~\cite{gat1998synergy}.
PIDs have also been suggested for quantifying how much sensory information is used to execute behaviors~\cite{pica2017quantifying} and for tracking stimulus-dependent information flows between brain regions~\cite{pica2019using, bim2019non}.
Outside of neuroscience, PID has been used to understand interactions between different variables in financial markets~\cite{scagliarini2020synergistic}, to quantify the relevance of different features for the purpose of feature selection in machine learning~\cite{wollstadt2021rigorous}, and to define and quantify bias in the field of fair Machine Learning~\cite{dutta2020information}.

An important constraint that has limited the broader adoption of PIDs in neuroscience is the computational difficulty of estimating PIDs for high-dimensional data.
Many PID definitions that are operationally well-motivated involve solving an optimization problem over a space of probability distributions: the number of optimization variables can thus be exponential in the number of neurons~\cite{venkatesh2022partial}.
This has led to the use of poorly motivated PID definitions that are easy to compute (such as the ``MMI-PID'' of \cite{barrett2015exploration}, in works such as~\cite{scagliarini2020synergistic, colenbier2020disambiguating, boonstra2019information, krohova2019multiscale}), or limited analyses to very few dimensions~\cite{pakman2021estimating}.\todo{See if we can add more references}
Furthermore, due to the limited exploration of estimators for PIDs, issues such as the bias and variance of estimates have received no attention so far, to our knowledge.

In this paper, we make the following contributions:
\begin{enumerate}[leftmargin=*, topsep=-1pt, itemsep=-2pt]
	\item We provide a new and efficient method for computing and estimating a well-known PID definition called the $\sim$-PID or the BROJA-PID~\cite{bertschinger2014quantifying} on Gaussian distributions (Section~\ref{sec_computing_gpid}).
		By restricting our attention to Gaussian distributions, we are able to significantly reduce the number of optimization variables, so that this is just quadratic in the number of neurons, rather than exponential.
	\item We present a set of canonical examples for Gaussian distributions where ground truth is known, and show that our method outperforms others (Section~\ref{sec_examples}).
	\item We also raise (for what we believe is the first time) the issue of bias in PID estimates, propose a method for correcting the bias, and empirically evaluate its performance (Section~\ref{sec_estimation}).
	\item Finally, we show that our Gaussian PID estimator closely agrees with ground truth, even on non-Gaussian distributions, and show an example of its use on real neural data (Section~\ref{sec_neuroscience}).
\end{enumerate}

\textbf{Related work. }
Our method is \ifarxiv{based on our earlier work~\cite{venkatesh2022partial}, where we}{most closely related to that of~\citet{venkatesh2022partial}, who} also examined PIDs for Gaussian distributions.
Our \ifarxiv{current}{} work differs in a few key aspects: \begin{enumerate*}[label=(\roman*)]
	\item we estimate the PID of a different PID definition, the $\sim$-PID rather than the $\delta$-PID, because the $\delta$-PID does not satisfy a basic property called additivity~\cite{rauh2022continuity};
	\item our \ifarxiv{current}{} method provides an exact upper bound to the PID definition being computed, rather than an approximate upper bound; and
	\item we \ifarxiv{now}{} consider the problem of estimation, not just computation, and explore the issue of the bias of PID estimates.
\end{enumerate*}
Several other works have also considered methods for efficient estimation of PIDs:
\citet{banerjee2018computing} consider computing discrete PIDs efficiently, but their method does not scale to higher dimensions;
\citet{pakman2021estimating} estimate PIDs for continuous variables using copulas, but their method would also potentially be computationally prohibitive at high dimensionalities;
\citet{liang2023quantifying} use convex optimization to directly estimate the $\sim$-PID for general high-dimensional distributions, but they do not compare with ground truth at high dimensionality or examine bias in their estimates.


\section{Background: An Introduction to PIDs and the $\sim$-PID}
\label{sec_background}

In this section, we provide an introduction to the concept of partial information decomposition along with an illustrative example.
Let $M$, $X$ and $Y$ be three random variables with joint distribution $P_{MXY}$.
A PID decomposes the total mutual information between the \emph{message} $M$ and two \emph{constituent} random variables $X$ and $Y$ into a sum of four non-negative components that satisfy~\cite{williams2010nonnegative, bertschinger2014quantifying}:
\begin{align}
	I\bigl(M ; (X, Y)\bigr) &= UI(M : X \setminus Y) + UI(M : Y \setminus X) + RI(M : X ; Y) + SI(M : X ; Y) \label{eq_pid} \\
	I(M ; X) &= UI(M : X \setminus Y) + RI(M : X ; Y) \label{eq_pidx} \\
	I(M ; Y) &= UI(M : Y \setminus X) + RI(M : X ; Y) \label{eq_pidy}
\end{align}
Here, $I(A; B)$ is the \emph{Shannon mutual information}~\cite{cover2012elements} between the random variables $A$ and $B$,
and the four terms in the RHS of \eqref{eq_pid} are respectively the information about $M$ that is \begin{enumerate*}[label=(\roman*)]
	\item \emph{uniquely} present in $X$ and not in $Y$;
	\item \emph{uniquely} present in $Y$ and not in $X$;
	\item \emph{redundantly} present in both $X$ and $Y$ and can be extracted from either; and
	\item \emph{synergistically} present in $X$ and in $Y$, i.e., information which cannot be extracted from either of them individually, but can be extracted from their interaction.
\end{enumerate*}
For the sake of brevity, we may also refer to these partial information components as $UI_X$, $UI_Y$, $RI$ and $SI$ respectively.
Notwithstanding notation, they should all properly be understood to be functions of the joint distribution $P_{MXY}$.

Now, $UI_X$, $UI_Y$, $RI$ and $SI$ consist of four undefined quantities, subject to the three equations in~\eqref{eq_pid}--\eqref{eq_pidy}.
In addition, they are typically assumed to be non-negative, $RI$ and $SI$ are each constrained to be symmetric in $X$ and $Y$, and the functional forms of $UI_X$ and $UI_Y$ should be identical when exchanging $X$ for $Y$.
Despite the number of constraints, many definitions satisfy all of them, each differing in its motivation and interpretation~\cite{williams2010nonnegative,bertschinger2014quantifying,griffith2014quantifying,harder2013bivariate,kolchinsky2019novel} (see~\cite{lizier2018information,kolchinsky2019novel} for a review), and we need to formally define one of these partial information components to determine the other three.

\begin{example} \label{ex_binary}
	Before we jump into a specific definition, we provide an intuition into what these terms mean using a simple example.
	Suppose $M = [A, B, C]$, $X = [A, B, C\!\oplus\!Z]$, and $Y = [B, Z]$, where $A, B, C, Z \sim$ i.i.d.~Ber(0.5).%
	\footnote{\,i.i.d.\ stands for ``independent and identically distributed''; $X \indept Y $ means $X$ and $Y$ are independent.}
	Then, $X$ has 1 bit of unique information about $M$, i.e., $A$; $Y$ has no unique information;
	$X$ and $Y$ both have 1 bit of redundant information, i.e., $B$, since it can be obtained from either $X$ or $Y$; and
	$X$ and $Y$ have 1 bit of synergistic information, i.e., $C$, which cannot be obtained from either $X$ or $Y$ individually (since $C\!\oplus\!Z \indept C$), but can only be recovered when both $X$ and $Y$ are known.
	For more examples on binary variables, we refer the reader to~\cite{bertschinger2014quantifying}.
\end{example}

\vspace{-6pt}
In this manuscript, we consider a definition that we refer to as the $\sim$-PID%
\footnote{This PID is also sometimes referred to as the BROJA PID (after the authors of \cite{bertschinger2014quantifying}), or the minimum-synergy PID in the literature.
We prefer to use an author-agnostic nomenclature as introduced \ifarxiv{in our earlier work}{by the authors of}~\cite{venkatesh2022partial}, because this PID was also introduced contemporaneously by \cite{griffith2014quantifying}.}%
~\cite{bertschinger2014quantifying, griffith2014quantifying}, which is defined below.
We chose to build an estimator for \emph{this} definition for two reasons: \begin{enumerate*}[label=(\roman*)]
	\item it is a \emph{Blackwellian PID} definition, i.e., it has a well-defined operational interpretation based on concepts from statistical decision theory (e.g., see~\cite{bertschinger2014quantifying, venkatesh2023capturing} for details); and
	\item it satisfies many desirable properties (e.g., see~\cite{bertschinger2014quantifying, banerjee2018unique}), and in particular, a property that we call \emph{additivity of independent components}.
\end{enumerate*}

\begin{definition}[$\sim$-PID \cite{bertschinger2014quantifying}] \label{def_tilde_pid}
	The unique information about $M$ present in $X$ and not in $Y$ is given by
	\begin{equation} \label{eq_tilde_pid}
		\widetilde{UI}(M : X \setminus Y) \coloneqq \min_{Q \in \Delta_P} I_Q(M ; X \given Y),
	\end{equation}
	where $\Delta_P \coloneqq \{Q_{MXY}: Q_{MX} = P_{MX},\; Q_{MY} = P_{MY}\}$ and $I_Q(\,\cdot\: ; \cdot \given \cdot)$ is the conditional mutual information over the joint distribution $Q_{MXY}$.
	The remaining $\sim$-PID components, $\widetilde{UI}(M : Y \setminus X)$, $\widetilde{RI}(M : X ; Y)$ and $\widetilde{SI}(M : X ; Y)$, follow from equations~\eqref{eq_pid}--\eqref{eq_pidy}.
\end{definition}

\begin{property}[Additivity of independent components] \label{ppty_additivity}
	Suppose $M = [M_1, M_2]$, $X = [X_1, X_2]$, and $Y = [Y_1, Y_2]$, such that $(M_1, X_1, Y_1) \indept (M_2, X_2, Y_2)$. Then, additivity implies that
	\begin{equation}
		UI(M : X \setminus Y) = UI(M_1 : X_1 \setminus Y_1) + UI(M_2 : X_2 \setminus Y_2),
	\end{equation}
	and similarly for the other three partial information components, $UI_Y$, $RI$ and $SI$.
\end{property}
\vspace{-6pt}
Property~\ref{ppty_additivity} ensures that we can compute the PIDs of independent systems separately and then add the components across systems, making it intuitive and highly desirable.
Of the many PID definitions examined by \citet{rauh2022continuity}, only the $\sim$-PID satisfied additivity (as proved in~\cite{bertschinger2014quantifying}).


\section{Computing the $\sim$-PID for Gaussian Distributions}
\label{sec_computing_gpid}

The first contribution of this paper is a method to efficiently compute bounds on the $\sim$-PID for jointly Gaussian random vectors $M$, $X$ and $Y$.
To be precise, our method computes an upper bound for $\widetilde{UI}_X$ and $\widetilde{UI}_Y$, and lower bounds for $\widetilde{RI}$ and $\widetilde{SI}$.
\ifarxiv{Similar to our earlier work}{Closely following}~\cite{venkatesh2022partial}, we present a new PID definition that we call the $\sim_G$-PID, which characterizes an upper bound on the unique information of the $\sim$-PID by restricting the optimization space to jointly Gaussian $Q_{MXY}$:
\begin{definition}[$\sim_G$-PID] \label{def_gauss_tilde_pid}
	The unique information about $M$ present in $X$ and not in $Y$ is given by
	\begin{equation}
		\widetilde{UI}_G(M : X \setminus Y) \coloneqq \min_{Q \in \Delta_P} I_Q(M ; X \given Y),
	\end{equation}
	where $\Delta_P \coloneqq \{Q_{MXY}: Q_{MXY} \text{ jointly Gaussian,} \; Q_{MX} = P_{MX},\; Q_{MY} = P_{MY}\}$ and $I_Q$ is the conditional mutual information over the joint distribution $Q_{MXY}$.
\end{definition}
\vspace{-6pt}
Definition~\ref{def_gauss_tilde_pid} is identical to Definition~\ref{def_tilde_pid}, except for the fact that $Q_{MXY}$ is constrained to be jointly Gaussian.
If the optimal $Q$ in the original $\sim$-PID of Definition~\ref{def_tilde_pid} is in fact Gaussian for some $P_{MXY}$, then the $\sim_G$-PID would be identical to the $\sim$-PID for that $P_{MXY}$.
We conjecture that this happens whenever $P_{MXY}$ is Gaussian: for example, in a similar optimization problem for computing the information bottleneck~\cite{tishby2000information}, the optimal distribution is Gaussian whenever $P$ is Gaussian~\cite{chechik2003information,globerson2004optimality}.
We leave this conjecture as an open question for future work.
For now, the unique information of the $\sim_G$-PID provides only an upper bound on the unique information of the $\sim$-PID, in general.

Nonetheless, restricting the search space to Gaussian $Q_{MXY}$ dramatically simplifies the optimization problem, allowing us to compute the $\sim_G$-PID for much higher dimensionalities of $M$, $X$ and $Y$.
In what follows, we show how the optimization problem for the $\sim_G$-PID can be written out in closed-form and then solved using projected gradient descent.

\subsection{Notation and Preliminaries}

Suppose $M$, $X$ and $Y$ are jointly Gaussian random vectors of dimensions $d_M$, $d_X$ and $d_Y$ respectively, with a joint covariance matrix given by $\Sigma_{MXY}$.
We will make extensive use of the submatrices of $\Sigma_{MXY}$, so we explain their notation here:
\begin{itemize}[leftmargin=*, topsep=-3pt, itemsep=-2pt]
	\item $\Sigma_{XY}$ will denote the $(d_X + d_Y) \times (d_X + d_Y)$ joint (auto-)covariance matrix of the vector $[X^\T, Y^\T]^\T$.
	\item $\Sigma_{X,Y}$ (note the comma) will denote the $d_X \times d_Y$ cross-covariance matrix between $X$ and $Y$.
	\item $\Sigma_{XY,M}$ will denote the $(d_X + d_Y) \times d_M$ cross-covariance matrix between the concatenated vector $[X^\T, Y^\T]^\T$ and the vector $M$.
\end{itemize}
In general, groupings of vectors without commas represent joint covariance, while a comma represents a cross-covariance between the groups on either side of the comma.
The same notation will also be used for conditional covariance matrices: for example, $\Sigma_{XY|M}$ is the conditional \emph{joint} covariance of $(X, Y)$ given $M$, while $\Sigma_{X,Y|M}$ is the conditional \emph{cross}-covariance \emph{between} $X$ and $Y$ given $M$.

We will also use an equivalent notation for the joint distribution~\cite{venkatesh2022partial}, where $P_{MXY}$ is parameterized as a channel from $M$ to $X$ and $Y$:
\begin{equation} \label{eq_mxy_channel}
	X = H_X M + N_X \quad\text{and}\quad  Y = H_Y M + N_Y.
\end{equation}
Here, $H_X \coloneqq \Sigma_{X,M}$ and $H_Y \coloneqq \Sigma_{Y,M}$ represent the channel gain matrices, while $N_X$ and $N_Y$ represent additive noise and are not necessarily independent of each other: $[N_X^\T, N_Y^\T]^\T \sim \mathcal N(0, \Sigma_{XY|M})$.

\begin{remark} \label{rem_whitening}
	Without loss of generality, we can assume that $M$, $X$ and $Y$ are all zero-mean, and that $\Sigma_{M} = I$.
	Further, we explicitly assume that the $X$ and $Y$ channels are individually whitened, i.e., that $\Sigma_{X|M} = I$ and $\Sigma_{Y|M} = I$.
	This assumption precludes deterministic relationships between $M$ and $X$ or $Y$, and is required to ensure that information quantities remain finite~\cite{venkatesh2022partial}.
\end{remark}

\subsection{Optimizing the Union Information}

\citet{bertschinger2014quantifying} showed that the minimizer for the unique information is also the minimizer for the ``union information'', $I^\cup(M : X; Y) \coloneqq UI_X + UI_Y + RI$.
In other words, we can also solve the following optimization problem, which yields simpler expressions for the objective and gradient:
\begin{equation} \label{eq_tilde_union_info}
	\widetilde{I^\cup}(M : X ; Y) \coloneqq \min_{Q_{MXY}} I_Q(M ; X, Y) \quad \text{s.t.} \quad Q_{MX} = P_{MX}, \; Q_{MY} = P_{MY}
\end{equation}
Now, suppose $P_{MXY}$ is Gaussian with covariance $\Sigma^P_{MXY}$ and the solution $Q_{MXY}$ is also assumed to be Gaussian with covariance $\Sigma_{MXY}^Q$.
Then, the constraint in \eqref{eq_tilde_union_info} implies that $\Sigma^Q_{MX} = \Sigma^P_{MX}$ and $\Sigma^Q_{MY} = \Sigma^P_{MY}$.
In other words, $\Sigma_{M}$, $\Sigma_X$, $\Sigma_Y$, and $\Sigma_{M,XY}$ are all constant across $P$ and $Q$.
Therefore, the only part of $\Sigma^Q_{MXY}$ that is variable is $\Sigma^Q_{X,Y}$, or equivalently, $\Sigma^Q_{X,Y|M}$.%
\footnote{We can use $\Sigma_{X,Y|M}$ in place of $\Sigma_{X,Y}$ because they differ by a constant: $\Sigma_{X,Y|M} - \Sigma_{X,Y}$ is an off-diagonal block in $\Sigma_{XY} - \Sigma_{XY|M}$, which is equal to $\Sigma_{XY,M}^{\vphantom{-\T}} \Sigma_M^{-1} \Sigma_{XY,M}^{\vphantom{-}\T}$, which is constant across $P$ and $Q$.}
In what follows, we will drop the superscripts denoting the distribution, as this will be clear from context.
Generally speaking, we will discuss the optimization problem and thus the distribution will be $Q$.

\begin{proposition} \label{prop_gpid_optim}
	The union information for the $\sim_G$-PID of Definition~\ref{def_gauss_tilde_pid} is given by
	\begin{equation} \label{eq_union_info_obj}
		\widetilde{I^\cup_G} \coloneqq \min_{\Sigma_{X,Y|M}} \frac{1}{2} \log\det \bigl(I + \Sigma_M^{-1} \Sigma_{XY,M}^{\vphantom{-}\T} \Sigma_{XY|M}^{-1} \Sigma_{XY,M}^{\vphantom{-}}\bigr) \quad \text{s.t.}\quad \Sigma_{XY|M} \suff 0
	\end{equation}
	where the optimization variable $\Sigma_{X,Y|M}$ is an off-diagonal block embedded within $\Sigma_{XY|M}$; all other matrices in the objective are constants that are derived from $\Sigma^P_{MXY}$.
\end{proposition}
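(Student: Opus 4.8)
The plan is to make the Gaussian instance of the optimization in \eqref{eq_tilde_union_info} completely explicit and then peel the objective down to a function of a single matrix. Write $W \coloneqq [X^\T, Y^\T]^\T$, so that for jointly Gaussian $(M, W)$ under $Q$ we have $I_Q(M ; X, Y) = \tfrac12 \log\det\bigl(\Sigma_{W|M}^{-1}\Sigma_W\bigr)$. First I would substitute the conditional-covariance (Schur complement) identity $\Sigma_W = \Sigma_{W|M} + \Sigma_{W,M}\,\Sigma_M^{-1}\,\Sigma_{W,M}^\T$, which gives
\[
	I_Q(M ; X, Y) = \tfrac12 \log\det\bigl(I + \Sigma_{W|M}^{-1}\,\Sigma_{W,M}\,\Sigma_M^{-1}\,\Sigma_{W,M}^\T\bigr).
\]
Then I would apply Sylvester's determinant identity $\det(I + AB) = \det(I + BA)$ with $A = \Sigma_{W|M}^{-1}\Sigma_{W,M}$ and $B = \Sigma_M^{-1}\Sigma_{W,M}^\T$ to move the $\Sigma_M^{-1}$ to the front, and then substitute $W = XY$; this produces precisely the objective of \eqref{eq_union_info_obj}.

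Next I would pin down which matrices are constant and which vary. By the discussion preceding the proposition, the constraints $Q_{MX} = P_{MX}$ and $Q_{MY} = P_{MY}$ hold $\Sigma_M$, $\Sigma_X$, $\Sigma_Y$ and $\Sigma_{M,XY}$ (hence $\Sigma_{XY,M}$) fixed across $P$ and $Q$, so the only free part of $\Sigma^Q_{MXY}$ is the cross-covariance $\Sigma^Q_{X,Y}$ — equivalently $\Sigma^Q_{X,Y|M}$, since the two differ by a constant. This block sits off-diagonal inside $\Sigma_{XY|M}$, whose diagonal blocks $\Sigma_{X|M}$ and $\Sigma_{Y|M}$ are themselves constants (in fact identities, under Remark~\ref{rem_whitening}). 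Hence the displayed expression depends only on $\Sigma_{X,Y|M}$, every other matrix being determined by $\Sigma^P_{MXY}$, which is the objective half of the claim.

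It then remains to identify the feasible set with $\{\Sigma_{X,Y|M} : \Sigma_{XY|M} \suff 0\}$. In one direction, every Gaussian $Q_{MXY}\in\Delta_P$ has $\Sigma^Q_{MXY} \suff 0$; since $\Sigma_M = I \succ 0$, the block-PSD criterion forces its Schur complement $\Sigma_{XY|M} \suff 0$. Conversely, given any $\Sigma_{XY|M} \suff 0$ with the prescribed diagonal blocks, the channel parameterization \eqref{eq_mxy_channel} with $[N_X^\T, N_Y^\T]^\T \sim \mathcal N(0, \Sigma_{XY|M})$ defines a valid joint Gaussian $Q_{MXY}$ whose $MX$- and $MY$-marginals depend only on the fixed quantities $H_X = \Sigma_{X,M}$, $H_Y = \Sigma_{Y,M}$, $\Sigma_{X|M}$ and $\Sigma_{Y|M}$, so $Q \in \Delta_P$. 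Chaining the three steps yields \eqref{eq_union_info_obj}.

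The main obstacle is not any single deduction but the bookkeeping: verifying that the matrices inverted in the determinant manipulations are genuinely invertible — which holds on the interior, where Remark~\ref{rem_whitening} makes $\Sigma_M$ and $\Sigma_{XY|M}$ positive definite, while on the boundary the objective simply diverges (harmless for a minimization) — and checking carefully that fixing the two bivariate marginals $P_{MX}$ and $P_{MY}$ leaves exactly the $X$-$Y$ cross-covariance free, so that parameterizing the feasible set by $\Sigma_{X,Y|M}$ is a true bijection onto $\{\Sigma_{XY|M} \suff 0\}$ rather than a one-sided inclusion.
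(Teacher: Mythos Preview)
Your proposal is correct and follows essentially the same route as the paper: you derive $I_Q(M;X,Y)$ via the Schur-complement identity for $\Sigma_{XY}$ and then apply $\det(I+AB)=\det(I+BA)$ to reach the displayed objective, and you obtain the constraint $\Sigma_{XY|M}\suff 0$ from the block-PSD (Schur complement) criterion for $\Sigma_{MXY}$. If anything, your treatment of the feasible-set bijection and the invertibility caveats is more careful than the paper's, which simply refers back to the discussion below \eqref{eq_tilde_union_info} for the ``which matrices are free'' part.
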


\vspace{-6pt}
We solve the above optimization problem using projected gradient descent: we analytically derive the gradient and the projection operator for the constraint set as shown below.
Then, we use the RProp~\cite{hinton2018rprop} algorithm for gradient descent, which independently adjusts the learning rates for each optimization parameter (derivations and implementation details are in the supplementary material).

\begin{proposition} \label{prop_gpid_obj_grad_proj}
	The \textbf{objective} in Proposition~\ref{prop_gpid_optim} can be simplified to
	\begin{equation}
		f(\Sigma_{X,Y|M}) = \frac{1}{2} \log\det \bigl(I + H_Y^\T H_Y^{\vphantom{\T}} + B^\T S^{-1} B\bigr),
	\end{equation}
	where $B \coloneqq (H_X - \Sigma_{X,Y|M} H_Y)$ and $S \coloneqq (I - \Sigma_{X,Y|M}^{\vphantom{\T}} \Sigma_{X,Y|M}^\T)$.

	The \textbf{gradient} of the objective with respect to $\Sigma_{X,Y|M}$ is given by\todo{Fix missing parentheses}
	\begin{equation}
		\nabla f(\Sigma_{X,Y|M}) = S^{-1} B \bigl(I + H_Y^\T H_Y^{\vphantom{\T}} + B^\T S^{-1} B\bigr)^{-1} \bigl( B^\T S^{-1} \Sigma_{X,Y|M} - H_Y^\T \bigr).
	\end{equation}
	A \textbf{projection operator} on to the constraint set $\Sigma_{XY|M} \suff 0$ can be obtained as follows: let $\Sigma_{XY|M} \eqqcolon V \Lambda V^\T$ be the eigenvalue decomposition of $\Sigma_{XY|M}$, with $\Lambda \eqqcolon \textup{diag}(\lambda_i)$. Let $\widebar{\lambda}_i \coloneqq \textup{max}(0, \lambda_i)$ represent the rectified eigenvalues, and $\widebar\Lambda \coloneqq \textup{diag}(\widebar{\lambda}_i)$. Then, define
	\newcommand{\proj}{\textit{proj}}
	\begin{align}
		\widebar\Sigma_{XY|M} &\coloneqq V \widebar\Lambda V^\T, \\
		\Sigma_{X,Y|M}^\proj &\coloneqq \widebar\Sigma_{X|M}^{-1/2} \widebar\Sigma_{X,Y|M} \widebar\Sigma_{Y|M}^{-1/2}, \label{eq_norm_projection}
	\end{align}
	where $\widebar\Sigma_{X|M}$, $\widebar\Sigma_{Y|M}$ and $\widebar\Sigma_{X,Y|M}$ are submatrices of $\widebar\Sigma_{XY|M}$.
\end{proposition}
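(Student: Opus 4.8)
The statement packages three largely independent computations — the closed form of the objective, its gradient, and the validity of the projection — and I would handle them in that order. \emph{Objective.} By Remark~\ref{rem_whitening} the conditional joint covariance has identity diagonal blocks, so writing $C \coloneqq \Sigma_{X,Y|M}$ we have $\Sigma_{XY|M} = \left(\begin{smallmatrix} I & C \\ C^{\T} & I\end{smallmatrix}\right)$, while $\Sigma_{XY,M} = \left(\begin{smallmatrix} H_X \\ H_Y\end{smallmatrix}\right)$ with $H_X = \Sigma_{X,M}$, $H_Y = \Sigma_{Y,M}$, and $\Sigma_M = I$. First I would apply the $2\times 2$ block-inversion formula: the Schur complement of the bottom-right block of $\Sigma_{XY|M}$ is exactly $S = I - CC^{\T}$, and substituting the resulting $\Sigma_{XY|M}^{-1}$ into $\Sigma_{XY,M}^{\T}\Sigma_{XY|M}^{-1}\Sigma_{XY,M}$ and multiplying out, the terms reorganize into $B^{\T}S^{-1}B + H_Y^{\T}H_Y$ with $B = H_X - CH_Y$. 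With $\Sigma_M^{-1} = I$, feeding this into \eqref{eq_union_info_obj} gives $f$. This is routine; the only caveat is that $S$ must be invertible, which is guaranteed on the interior of the feasible set (precisely where the objective is finite).

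\emph{Gradient.} I would take the matrix differential $\delta f = \tfrac12\operatorname{tr}\!\bigl(K^{-1}\delta K\bigr)$ with $K = I + H_Y^{\T}H_Y + B^{\T}S^{-1}B$, using $\delta B = -(\delta C)H_Y$, $\delta S = -(\delta C)C^{\T} - C(\delta C)^{\T}$ and $\delta(S^{-1}) = -S^{-1}(\delta S)S^{-1}$, so that $\delta K = (\delta B)^{\T}S^{-1}B + B^{\T}\delta(S^{-1})B + B^{\T}S^{-1}(\delta B)$. Rewriting each of the four resulting trace terms in the canonical form $\operatorname{tr}\!\bigl(G^{\T}\delta C\bigr)$ via cyclicity of the trace and the symmetry of $S$ and $K$, the terms coincide in two pairs; the $\tfrac12$ cancels the doubling and one is left with $\nabla f = S^{-1}B\,K^{-1}\bigl(B^{\T}S^{-1}C - H_Y^{\T}\bigr)$, the claimed formula. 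The one place requiring care is the bookkeeping of transposes when moving a factor $(\delta C)^{\T}$ inside a trace into the canonical form.

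\emph{Projection.} The feasible set is $\{\,C : \left(\begin{smallmatrix} I & C \\ C^{\T} & I\end{smallmatrix}\right)\suff 0\,\}$, i.e.\ the set of $C$ of spectral norm at most $1$. Three points must be checked. (i) $\widebar\Sigma_{XY|M} = V\widebar\Lambda V^{\T}\suff 0$ by construction. (ii) The inverse square roots in \eqref{eq_norm_projection} exist: $\Sigma_{XY|M} - \widebar\Sigma_{XY|M}$ has eigenvalues $\min(\lambda_i,0)\le 0$, so it is negative semidefinite, hence so is its top-left principal block $I - \widebar\Sigma_{X|M}$; thus $\widebar\Sigma_{X|M}\suff I$ (and likewise $\widebar\Sigma_{Y|M}\suff I$), so both are invertible. (iii) The output is feasible: since $\widebar\Sigma_{XY|M}\suff 0$ with $\widebar\Sigma_{Y|M}\succ 0$, its generalized Schur complement $\widebar\Sigma_{X|M} - \widebar\Sigma_{X,Y|M}\widebar\Sigma_{Y|M}^{-1}\widebar\Sigma_{X,Y|M}^{\T}\suff 0$, and conjugating by $\widebar\Sigma_{X|M}^{-1/2}$ turns this into $I - \Sigma_{X,Y|M}^{\textup{proj}}(\Sigma_{X,Y|M}^{\textup{proj}})^{\T}\suff 0$, which by the Schur criterion is exactly positive semidefiniteness of $\left(\begin{smallmatrix} I & \Sigma_{X,Y|M}^{\textup{proj}} \\ (\Sigma_{X,Y|M}^{\textup{proj}})^{\T} & I\end{smallmatrix}\right)$. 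As a bonus, diagonalizing via the SVD $C = UDV^{\T}$ shows the composite map sends each singular value $\sigma_i$ of $C$ to $\min(\sigma_i,1)$, so it is in fact the Euclidean projection onto the feasible set.

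\emph{Main obstacle.} The objective and gradient derivations are purely mechanical. The part needing real care is the projection, where one must verify both that the re-normalization is well-defined — the diagonal blocks of $\widebar\Sigma_{XY|M}$ must stay invertible, for which the observation that a principal submatrix of a negative-semidefinite matrix is itself negative semidefinite is the crux — and that the re-normalized matrix lands back in the PSD cone, which is cleanest via the Schur-complement characterization rather than an explicit eigenvalue computation.
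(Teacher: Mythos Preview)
Your proposal is correct and, for the objective and gradient, follows essentially the paper's approach: the paper writes entry-wise partial derivatives with single-entry matrices $J^{ij}$ where you use the matrix differential, but the computations are line-for-line equivalent and collapse to the same four trace terms pairing off in twos.

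For the projection, your argument is also correct and in fact goes beyond the paper's. The paper explicitly describes its operator as a ``simple heuristic'' and states that it does \emph{not} find the orthogonal projection; it also never verifies that the diagonal blocks $\widebar\Sigma_{X|M}$, $\widebar\Sigma_{Y|M}$ are invertible, just silently taking their inverse square roots. Your observation that $\Sigma_{XY|M} - \widebar\Sigma_{XY|M} \ffus 0$ forces $\widebar\Sigma_{X|M} \suff I$ fills that gap cleanly. Your feasibility argument via the Schur complement is equivalent to the paper's (which instead argues that congruence by the block-diagonal inverse square root preserves positive semidefiniteness and restores identity diagonal blocks). The bonus SVD computation showing that the composite map clips each singular value of $C$ to $\min(\sigma_i,1)$ establishes that this \emph{is} the Frobenius projection onto the spectral unit ball $\{\,C:\lVert C\rVert_{\mathrm{op}}\le 1\,\}$ --- a genuine strengthening of what the paper claims.
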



\section{Canonical Gaussian Examples}
\label{sec_examples}

In this section, we show how well our $\sim_G$-PID estimator performs on a series of examples of increasing complexity, which have known ground truth.
\citet{barrett2015exploration} showed that, for Gaussian distributions, the $\sim$-PID reduces to the MMI-PID (defined below), whenever $M$ is scalar.
These also happen to be cases when the optimal distribution $Q_{MXY}$ is Gaussian~\cite{venkatesh2022partial}, and thus the $\sim_G$-PID should recover the ground truth.
We then leverage additivity (Property~\ref{ppty_additivity}) to combine two or more simple examples into complex ones, where ground truth continues to be known.
\newcommand{\mmi}{{\text{MMI}}}
\begin{definition}[Minimum Mutual Information (MMI) PID] \label{def_mmi_pid}
	Let the redundant information be defined as the minimum of the two mutual informations:
	\begin{equation} \label{eq_mmi_ri}
		RI_\mmi(M : X ; Y) = \min\{ I(M ; X), I(M ; Y) \}.
	\end{equation}
	The remaining MMI-PID components, $UI_\mmi(M : X \setminus Y)$, $UI_\mmi(M : Y \setminus X)$ and $SI_\mmi(M : X ; Y)$, follow from equations~\eqref{eq_pid}--\eqref{eq_pidy}.
\end{definition}

\vspace{-6pt}
We first provide a Gaussian analog of Example~\ref{ex_binary} in Examples~\ref{ex_gauss_unique_only}--\ref{ex_gauss_synergy_only} (for $d_M = d_X = d_Y = 1$).
We will use the channel notation described in Equation~\eqref{eq_mxy_channel}.
Complete derivations for these examples (and some nuances that are omitted here) are presented in the supplementary material.

\begin{example}[Pure uniqueness: variable $A$ from Example~\ref{ex_binary}] 
	\label{ex_gauss_unique_only}
	Suppose $M \sim \mathcal N(0, 1)$, $H_X = 1$ and $H_Y = 0$, with $N_X, N_Y \sim \text{i.i.d.\ } \mathcal N(0, 1)$.
	Here, only $X$ receives information about $M$, while $Y$ is pure noise.
	Thus, $X$ has unique information about $M$ ($UI_X = I(M ; X) > 0$), with no unique information in $Y$, and no redundancy or synergy ($UI_Y = RI = SI = 0$).
\end{example}

\begin{example}[Pure redundancy: variable $B$ from Example~\ref{ex_binary}] 
	\label{ex_gauss_redundant_only}
	Ideally, we would set $M \sim \mathcal N(0, 1)$, $X = M$ and $Y = M$.
	However, for continuous random variables, $I(M ; X) = \infty$ when $M = X$.
	So instead, we set $M \sim \mathcal N(0, 1)$, $H_X = 1$ and $H_Y = 1$, with $N_X \sim \mathcal N(0, 1)$ while $N_Y = N_X$ (i.e., $X = Y$, so they are both the same noisy version of $M$).
	In this case, $X$ and $Y$ are fully redundant since they both contain exactly the same information about $M$.
	Thus, $RI = I(M ; (X, Y)) > 0$, while $UI_X = UI_Y = SI = 0$.
\end{example}

\begin{example}[Pure synergy: variable $C$ from Example~\ref{ex_binary}] 
	\label{ex_gauss_synergy_only}
	We cannot replicate pure synergy for Gaussian variables, but we can approach it in a limit.
	Let $M \sim \mathcal N(0, 1)$, $H_X = 1$ and $H_Y = 0$, with $N_X \sim \mathcal N(0, \sigma^2)$ and $N_Y = N_X$ (i.e., $X = M + Y$).
	Further, let $\sigma^2 \to \infty$.
	In this case, $I(M ; Y) = 0$ and $I(M ; X) \to 0$ as $\sigma^2 \to \infty$, so $X$ and $Y$ \emph{individually} convey little to no information about $M$.
	However, we can recover information about $M$ from $X$ and $Y$ \emph{together} by taking their difference, since $X - Y = M$.
	Thus, $SI > 0$, while $UI_Y = RI = 0$ and $UI_X \to 0$.
\end{example}

\begin{figure}[t]
	\centering
	\includegraphics[width=0.9\linewidth]{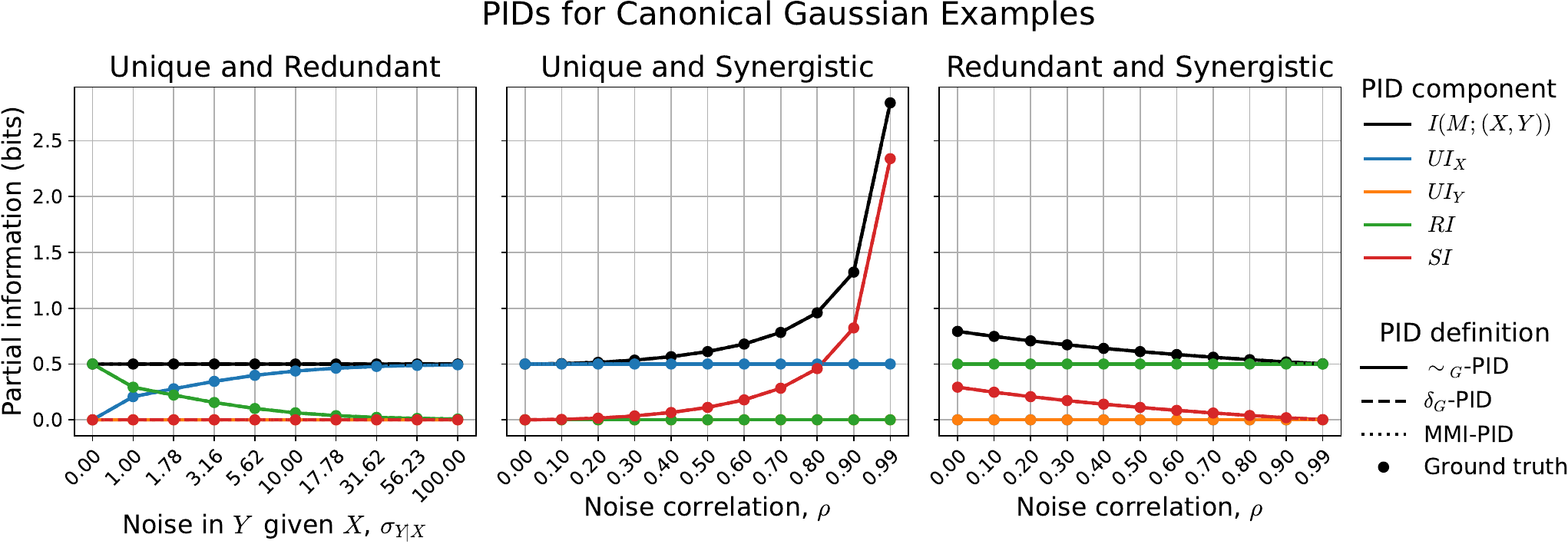}
	\caption{PID values for Examples \ref{ex_gauss_ui_ri}, \ref{ex_gauss_ui_si} and \ref{ex_gauss_ri_si}.
		The $\sim_G$-PID and the $\delta_G$-PID agree exactly with the MMI-PID, which is known to be the ground truth, since $M$ is scalar~\cite{barrett2015exploration}.\vspace{2mm}
	}
	\label{fig_gauss_pairwise_pid}
\end{figure}

\begin{figure}[t]
	\centering
	\includegraphics[width=0.6\linewidth]{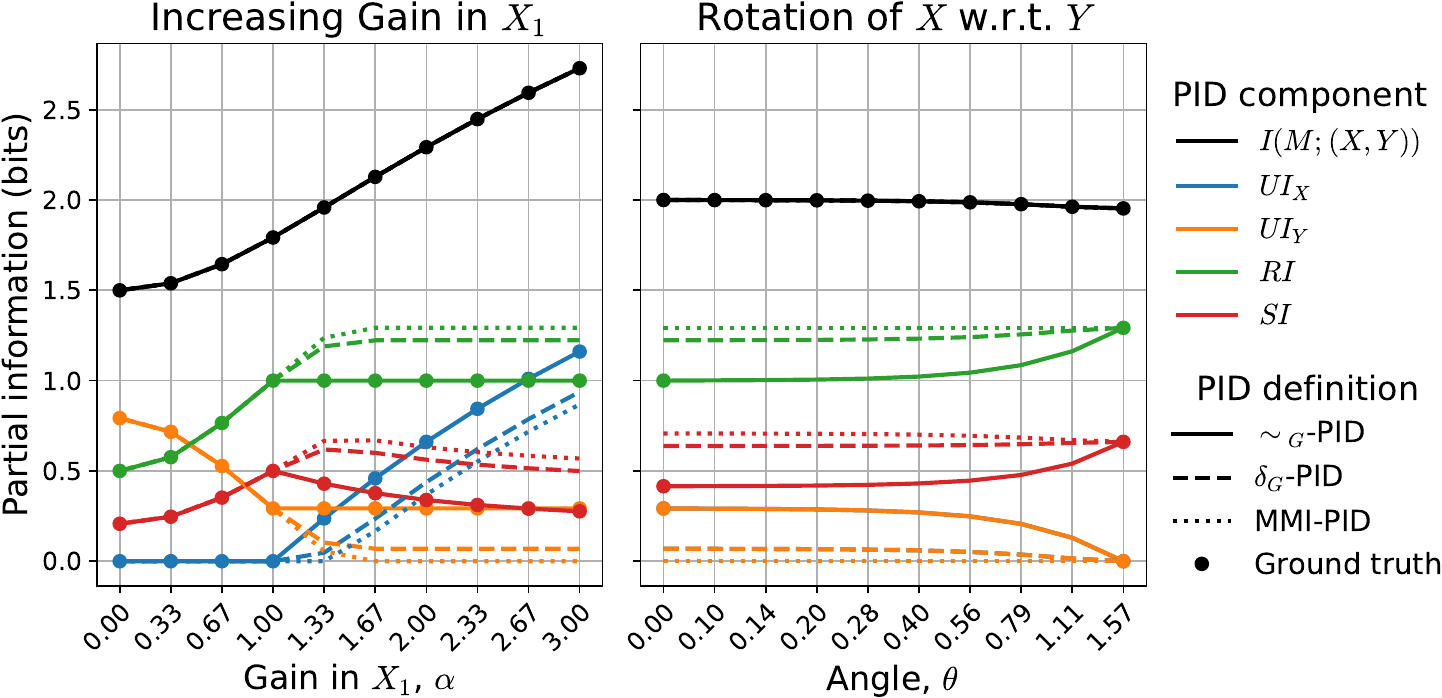}
	\caption{PID values for Examples \ref{ex_gain_sweep} (left) and \ref{ex_angle_sweep} (right), which combine two scalar examples with known ground truth, using Property~\ref{ppty_additivity}.
		The $\sim_G$-PID diverges from the $\delta_G$- and MMI-PIDs, and is the only one that agrees with the ground truth.}
	\label{fig_gauss_gain_angle_sweeps}
\end{figure}

\begin{figure}[t]
	\centering
	\includegraphics[width=0.8\linewidth]{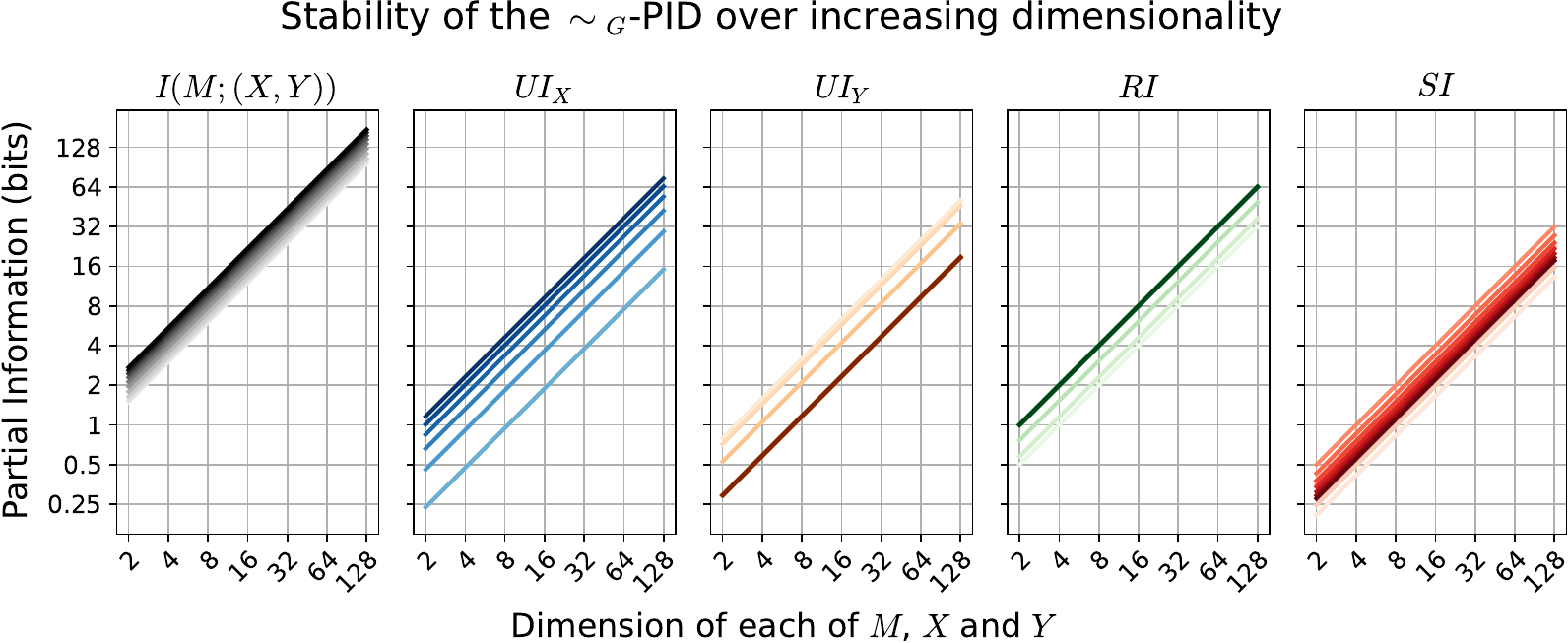}
	\caption{PID values for Example~\ref{ex_doubling}.
		Different shadings represent different values of gain in $X_1$ ($\alpha$) from Example~\ref{ex_gain_sweep}.
		The $\sim_G$-PID doubles every time $d$ doubles as seen by the constant 45\textdegree{} slope on the base-2 log-log plot, even when $d_M = d_X = d_Y = 128$.
	}
	\label{fig_doubling}
\end{figure}

\vspace{-6pt}
Examples~\ref{ex_gauss_unique_only}, \ref{ex_gauss_redundant_only} and \ref{ex_gauss_synergy_only} have been provided solely for intuition.
Their PIDs can be inferred directly from Equations~\eqref{eq_pid}--\eqref{eq_pidy}.
We next describe three one-dimensional examples that have each have \emph{two} non-zero PID components.
For lack of space, we only provide a brief description and defer details to the supplementary material.
We estimate the $\sim_G$-PID (as well as the $\delta_G$-PID \cite{venkatesh2022partial} and the ground-truth MMI-PID \cite{barrett2015exploration}) for these examples and show that all three are equal (see Fig.~\ref{fig_gauss_pairwise_pid}).

\begin{example}[Unique and redundant information] \label{ex_gauss_ui_ri}
	Let $X$ be a noisy representation of $M$, and let $Y$ be a noisy representation of $X$ with standard deviation $\sigma_{Y|X}$.
	When $Y = X$ (zero noise), this example reduces to Example~\ref{ex_gauss_redundant_only}.
	As $\sigma_{Y|X} \to \infty$, $RI$ reduces while $UI_X$ approaches $I(M ; X)$.
\end{example}

\begin{example}[Unique and synergistic information] \label{ex_gauss_ui_si}
	Let $M \sim \mathcal N(0, 1)$, $H_X = 1$, $H_Y = 0$ and $N_X, N_Y \sim \mathcal N(0, \sigma^2)$ such that their correlation is $\rho$.
	When $\sigma^2$ is finite and $\rho = 0$, this example reduces to Example~\ref{ex_gauss_unique_only}, since there can be no synergy between $X$ and $Y$.
	As $\rho \to 1$, $X - Y \to M$; so the total mutual information $I(M ; (X, Y)) \to \infty$, driven by synergy growing unbounded, while the unique component remains constant at $I(M ; X)$.
\end{example}

\begin{example}[Redundant and synergistic information] \label{ex_gauss_ri_si}
	Let $M \sim \mathcal N(0, 1)$, $H_X = H_Y = 1$ and $N_X, N_Y \sim \mathcal N(0, 1)$ such that their correlation is $\rho$.
	When $\rho < 1$, $I(M ; X)$ and $I(M ; Y)$ are both equal by symmetry, and thus equal to $RI$ (see Def.~\ref{def_mmi_pid} for the MMI-PID, which is ground truth here).
	As $\rho$ reduces, the two channels $X$ and $Y$ have noisy representations of $M$ with increasingly independent noise terms.
	Averaging the two, $(X + Y) / 2$, will provide more information about $M$ than either one of them individually (i.e., synergy), and thus $SI$ increases as $\rho$ reduces.
\end{example}

\vspace{-6pt}
The next set of examples will use the examples presented above in different combinations.
This ensures that, where possible, the ground truth remains known in accordance with Property~\ref{ppty_additivity}.
These examples are also designed to reveal the differences between the $\sim$-PID, the MMI-PID and the $\delta$-PID: in particular, they show how the MMI-PID and the $\delta$-PID fail where the $\sim$-PID does not.
These examples use two-dimensional $M$, $X$ and $Y$, i.e., $(d_M, d_X, d_Y) = (2, 2, 2)$.
A diagrammatic representation of Examples~\ref{ex_gain_sweep} and \ref{ex_angle_sweep} is given in the supplementary material.

\begin{example} \label{ex_gain_sweep}
	Let $X_1 = \alpha M_1 + N_{X,1}$, $Y_1 = M_1 + N_{Y,1}$, $X_2 = M_2 + N_{X,2}$ and $Y_2 = 3 M_2 + N_{Y,2}$, where $M_1, M_2, N_{X,i}, N_{Y,i} \sim $ i.i.d.\ $\mathcal N(0, 1)$, $i = 1, 2$.
	Here, $(M_1, X_1, Y_1)$ is independent of $(M_2, X_2, Y_2)$, therefore using Property~\ref{ppty_additivity}, we can add the PID values from their individual decompositions (which each have known ground truth via the MMI-PID since $M_1$ and $M_2$ are scalar).
	Fig.~\ref{fig_gauss_gain_angle_sweeps}(l) compares the $\sim_G$-PID, the $\delta_G$-PID and the MMI-PID for the joint decomposition of $I(M ; (X, Y))$, at different values of $\alpha$, the gain in $X_1$.
	Only the $\sim_G$-PID matches the ground truth, as it is the only definition here that is additive.
\end{example}

\begin{example} \label{ex_angle_sweep}
	Let $M$ and $Y$ be as in Example~\ref{ex_gain_sweep}.
	Suppose $X = H_X \; R(\theta) \; M$, where $H_X$ is a diagonal matrix with diagonal entries 3 and 1, and $R(\theta)$ is a $2 \times 2$ rotation matrix that rotates $M$ by an angle $\theta$.
	When $\theta = 0$, $X$ has higher gain for $M_1$ while $Y$ has higher gain for $M_2$. When $\theta$ increases to $\pi / 2$, $X$ and $Y$ have equal gains for both $M_1$ and $M_2$ (barring a difference in sign).
	Since $(M_1, X_1, Y_1)$ is not independent of $(M_2, X_2, Y_2)$ for all $\theta$, we know the ground truth only at the end-points.
	Nonetheless, the example shows a difference between the three definitions, as shown in Fig.~\ref{fig_gauss_gain_angle_sweeps}(r).
\end{example}

\begin{example} \label{ex_doubling}
	In this example, we test the stability of the $\sim_G$-PID as the dimensionality, $d \coloneqq d_M = d_X = d_Y$ increases.
	By Property~\ref{ppty_additivity}, if we take two i.i.d.\ systems of variables $(M, X, Y)$ at dimensionality $d$ and concatenate their respective variables, every PID component of the composite system of dimensionality $2d$ should be double that of the original.
	This process can be repeated, taking two independent $2d$-dimensional systems and concatenating them to create a $4d$-dimensional system.
	Fig.~\ref{fig_doubling} shows precisely this process starting with the system in Example~\ref{ex_gain_sweep} with $d = 2$, and continually doubling its size until $d = 128$.
	The $\sim_G$-PID accurately matches ground truth by doubling in value, and remains stable with small relative errors (shown in the supplementary material).
\end{example}


\section{Estimation and Bias-correction for the $\sim_G$-PID}
\label{sec_estimation}

\begin{figure}[t]
	\centering
	\includegraphics[width=0.5\linewidth]{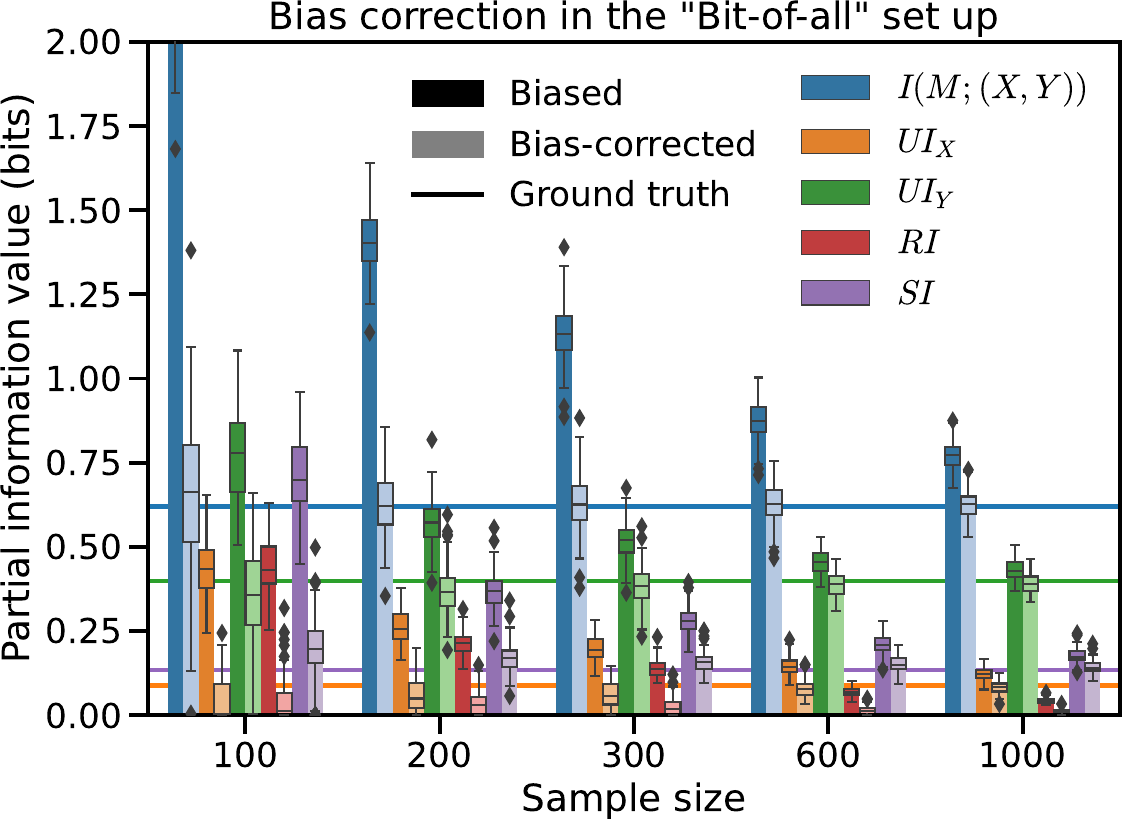}
	\includegraphics[width=0.5\linewidth]{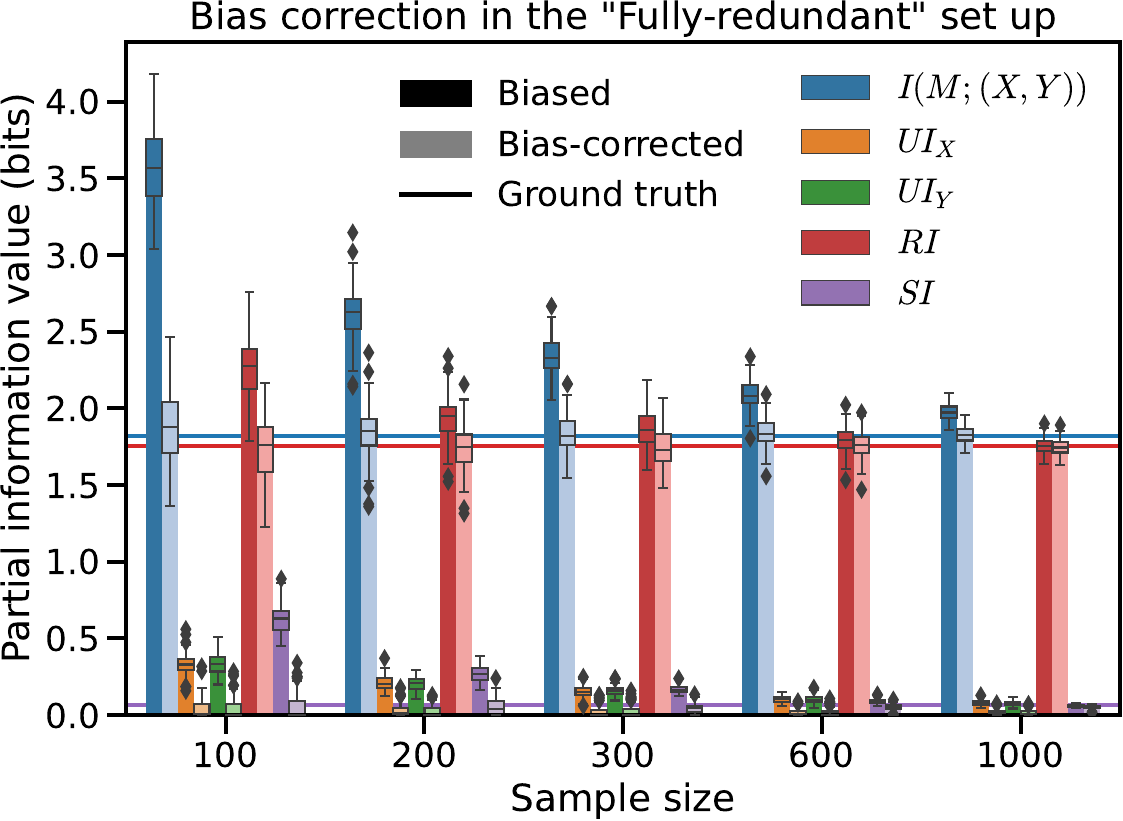}
	\caption{Empirical evaluation of bias-corrected PID estimates with increasing sample size for two configurations with $d_M = d_X = d_Y = 10$, as described in Section~\ref{sec_estimation}.
		Solid horizontal lines represent ground truth (as computed from the true covariance matrix);
	    dark colored bars represent biased PID components and light colored bars represent bias-corrected PID components, as estimated from the sample covariance matrix.
		Overlaid box plots indicate results from 100 random draws.
		Empirically, we find our bias-corrected estimates are both unbiased and consistent.}
	\label{fig_bias_correction}
\end{figure}

Having discussed how to compute the $\sim_G$-PID and shown that it agrees well with ground truth in several canonical examples, we discuss how the $\sim_G$-PID may be estimated from data.
Given a sample of $n$ realizations of $M$, $X$ and $Y$ drawn from $P_{MXY}$, we may estimate the sample joint covariance matrix $\hat{\Sigma}_{MXY}$.
The straightforward, ``plug-in'' estimator for the $\sim_G$-PID is to use the sample covariance matrix in the optimization problem in equation~\eqref{eq_union_info_obj}.

However, it is well-known that estimators of information-theoretic quantities suffer from large biases for moderate sample sizes~\cite{paninski2003estimation}.
\citet{cai2015law} characterized the bias in the entropy of a $d$-dimensional Gaussian random vector, for a fixed sample size $n$.
\begin{proposition}[Bias in Gaussian entropy \cite{cai2015law}]
	Suppose $M \in \mathbb R^{d_M}$ has an auto-covariance matrix $\Sigma_M$.
	The entropy of $M$ is $H(M) = \frac{1}{2} \log\det(2 \pi e \Sigma_M)$ when $\Sigma_M$ is known~\cite{cover2012elements}.
	For the sample covariance matrix $\hat{\Sigma}_M$, the bias is given by:
	\begin{equation} \label{eq_bias_entropy}
		\mathrm{Bias} \thinsp\bigl[\thinsp \hat H(M) \thinsp\bigr] = \sum_{k=1}^{d_M} \log(1 - k/n).
	\end{equation}
\end{proposition}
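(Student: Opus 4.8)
The plan is to reduce the bias of $\hat H(M)$ to an expectation of the log-determinant of the sample covariance matrix and then exploit the Wishart distribution. Since $\hat H(M) = \tfrac{d_M}{2}\log(2\pi e) + \tfrac12\log\det\hat\Sigma_M$ and $H(M) = \tfrac{d_M}{2}\log(2\pi e) + \tfrac12\log\det\Sigma_M$, the additive constants cancel and
\[
	\mathrm{Bias}\bigl[\hat H(M)\bigr] = \tfrac12\Bigl(\mathbb{E}\bigl[\log\det\hat\Sigma_M\bigr] - \log\det\Sigma_M\Bigr) = \tfrac12\,\mathbb{E}\bigl[\log\det(\hat\Sigma_M\Sigma_M^{-1})\bigr].
\]
The matrix $\hat\Sigma_M\Sigma_M^{-1}$ is affine-equivariant, so the law of its determinant does not depend on $\Sigma_M$; without loss of generality one may take $\Sigma_M = I$, and the whitening in Remark~\ref{rem_whitening} makes this concrete.

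Next I would invoke the classical fact that $n\hat\Sigma_M$ is Wishart-distributed, $W_{d_M}(\Sigma_M, \nu)$, with $\nu$ degrees of freedom ($\nu = n$ if the known zero mean of Remark~\ref{rem_whitening} is exploited, $\nu = n-1$ if the sample mean is subtracted; I would fix the convention so the index range below matches \eqref{eq_bias_entropy}). Writing $n\hat\Sigma_M = \Sigma_M^{1/2} W \Sigma_M^{1/2}$ with $W \sim W_{d_M}(I,\nu)$ and applying the Bartlett decomposition to the Cholesky factor of $W$ gives
\[
	\det(\hat\Sigma_M\Sigma_M^{-1}) \;\overset{d}{=}\; n^{-d_M}\prod_{j=1}^{d_M}\chi^2_{\nu-j+1},
\]
a product of \emph{independent} chi-squared variables. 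Taking expectations and using $\mathbb{E}[\chi^2_m] = m$ yields the exact identity $\mathbb{E}[\det(\hat\Sigma_M\Sigma_M^{-1})] = \prod_{k=1}^{d_M}(1-k/n)$, whose logarithm is exactly the sum in \eqref{eq_bias_entropy}; the corresponding statement for the quantity that actually governs the entropy estimator, $\mathbb{E}[\log\det(\hat\Sigma_M\Sigma_M^{-1})]$, then follows from $\mathbb{E}[\log\chi^2_m] = \log 2 + \psi(m/2)$ together with the expansion $\psi(x) = \log x + O(1/x)$.

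The main obstacle is precisely this last passage: the quantity controlling the bias is $\mathbb{E}[\log\det\hat\Sigma_M]$, not $\log\mathbb{E}[\det\hat\Sigma_M]$, and the two differ by the accumulated digamma corrections $\sum_j\bigl(\psi(\tfrac{\nu-j+1}{2}) - \log\tfrac{\nu-j+1}{2}\bigr)$, together with pinning the multiplicative constant that relates $\log\det$ to the differential-entropy estimator. I would dispatch this either by (i) appealing directly to \citet{cai2015law}, whose analysis bounds exactly these corrections and establishes the closed form \eqref{eq_bias_entropy} in the regime of interest (including the high-dimensional case $d_M/n \to c \in (0,1)$), or (ii) by summing the harmonic-type tail $\psi(x)-\log x \sim -1/(2x)$, which contributes only an $O\!\bigl(\log\tfrac{n}{n-d_M}\bigr)$ term. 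A secondary, purely bookkeeping point is to track the degrees-of-freedom convention and the index shift so that the sum runs over $k=1,\dots,d_M$ exactly as stated; this is pinned down by using the MLE $\hat\Sigma_M = \tfrac1n\sum_i(M_i-\bar M)(M_i-\bar M)^\T$ with the sample mean subtracted, for which $\nu = n-1$ and $\nu-j+1 = n-j$.
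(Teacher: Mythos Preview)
The paper does not prove this proposition at all; it simply writes ``For a proof, we refer the reader to \cite[Corollary~2]{cai2015law}.'' Your sketch therefore already goes well beyond what the paper offers, and the route you take---reduce to $\mathbb{E}[\log\det\hat\Sigma_M]$, invoke the Wishart law of $n\hat\Sigma_M$, and factor the determinant via the Bartlett decomposition into a product of independent $\chi^2$ variables---is precisely the machinery behind the result in \citet{cai2015law}. You have also correctly isolated the real issue: the displayed sum equals $\log\mathbb{E}\bigl[\det(\hat\Sigma_M\Sigma_M^{-1})\bigr]$ exactly, whereas the entropy bias is governed by $\mathbb{E}\bigl[\log\det(\hat\Sigma_M\Sigma_M^{-1})\bigr]$, and the two differ by the accumulated digamma residuals you write down. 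In \citet{cai2015law} the formula \eqref{eq_bias_entropy} arises as the \emph{centering sequence} in a CLT for $\log\det\hat\Sigma_M$ in the regime $d_M/n\to c\in(0,1)$, not as an exact finite-$n$ identity; your option~(ii) is essentially how one checks that the digamma residual is lower order in that regime, and your option~(i) is literally what the paper does.

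One bookkeeping point: your own derivation gives $\mathrm{Bias}[\hat H(M)] = \tfrac{1}{2}\,\mathbb{E}[\log\det(\hat\Sigma_M\Sigma_M^{-1})]$, which is $\tfrac{1}{2}\sum_{k}\log(1-k/n)$ at leading order, not $\sum_{k}\log(1-k/n)$ as stated in \eqref{eq_bias_entropy}. Since the downstream use (Corollary~\ref{cor_imxy_bias} and Definition~\ref{def_bias_corr_union_info}) only combines and rescales such terms, this may be an absorbed convention, but it is worth reconciling against the exact statement of Corollary~2 in \citet{cai2015law}. Relatedly, be aware that in the fixed-$d_M$, large-$n$ regime the digamma correction $\sim -d_M/n$ is of the same order as the leading term $\sim -d_M(d_M{+}1)/(2n)$ when $d_M$ is small, so \eqref{eq_bias_entropy} should be read as a high-dimensional approximation rather than an exact bias.
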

\vspace{-2mm}
For a proof, we refer the reader to \cite[Corollary~2]{cai2015law}.
This result may be naturally extended to compute the bias of each of the mutual information quantities in the LHS of equations \eqref{eq_pid}--\eqref{eq_pidy}:
\begin{corollary}[Bias in Gaussian mutual information] \label{cor_imxy_bias}
	For the joint mutual information $I(M ; (X, Y))$,
	\begin{equation}
		\mathrm{Bias} \thinsp\bigl[\thinsp \hat I\bigl(M ; (X, Y)\bigr) \thinsp\bigr] = \sum_{k=1}^{d_M} \log(1 - k/n) \;+\; \sum_{k=1}^{\mathclap{d_X + d_Y}} \log(1 - k/n) \;-\; \sum_{k=1}^{\mathclap{d_M + d_X + d_Y}} \log(1 - k/n)
	\end{equation}
\end{corollary}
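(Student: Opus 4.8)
The plan is to express $I(M ; (X,Y))$ purely in terms of entropies and then apply the bias formula of equation~\eqref{eq_bias_entropy} term by term. First I would write the identity
\begin{equation*}
	I\bigl(M ; (X, Y)\bigr) = H(M) + H(X, Y) - H(M, X, Y),
\end{equation*}
which holds for any joint distribution. Since $M$, $X$, $Y$ are jointly Gaussian, each of the three entropies on the right-hand side is the Gaussian entropy of a multivariate normal vector: $H(M)$ of a $d_M$-dimensional vector, $H(X,Y)$ of the $(d_X+d_Y)$-dimensional concatenation $[X^\T, Y^\T]^\T$, and $H(M,X,Y)$ of the $(d_M+d_X+d_Y)$-dimensional concatenation. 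When the covariances are known, each equals $\frac12\log\det(2\pi e \,\Sigma_{\bullet})$ for the appropriate covariance block.

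Next I would use linearity of expectation. The plug-in estimator $\hat I(M;(X,Y))$ is, by the same identity, $\hat H(M) + \hat H(X,Y) - \hat H(M,X,Y)$, where each $\hat H$ is the Gaussian entropy computed from the corresponding block of the sample covariance matrix $\hat\Sigma_{MXY}$. Crucially, the block of $\hat\Sigma_{MXY}$ indexed by a subset of the coordinates is exactly the sample covariance matrix of that subvector (the sample covariance of a sub-tuple is the corresponding sub-block of the full sample covariance). Therefore $\hat H(M)$, $\hat H(X,Y)$, $\hat H(M,X,Y)$ are each Gaussian-entropy plug-in estimators of the type covered by the cited proposition, with dimensions $d_M$, $d_X+d_Y$, and $d_M+d_X+d_Y$ respectively, all using the \emph{same} sample size $n$. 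Taking expectations and subtracting the true value,
\begin{equation*}
	\mathrm{Bias}\bigl[\hat I(M;(X,Y))\bigr] = \mathrm{Bias}\bigl[\hat H(M)\bigr] + \mathrm{Bias}\bigl[\hat H(X,Y)\bigr] - \mathrm{Bias}\bigl[\hat H(M,X,Y)\bigr],
\end{equation*}
and substituting equation~\eqref{eq_bias_entropy} with the three respective dimensions yields the claimed formula directly.

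The only subtlety — and the step I expect to require the most care — is verifying that the hypotheses of the cited bias proposition genuinely apply to each of the three entropy terms. In particular, the formula in \cite{cai2015law} requires the sample covariance to be the standard (maximum-likelihood or unbiased) estimator with $n$ i.i.d.\ samples, and requires $n$ to exceed the relevant dimension so that the sample covariance is almost surely nonsingular and its log-determinant has finite expectation; here we need $n > d_M + d_X + d_Y$ for the largest block. I would also confirm that the zero-mean / mean-subtraction convention matches (whether $n$ or $n-1$ appears is absorbed into the exact indexing $k=1,\dots,d$ in~\eqref{eq_bias_entropy}), and that the off-diagonal cross-covariance blocks play no role — they do not, because each entropy depends only on the determinant of its own block, which is itself a bona fide sample covariance. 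Once these bookkeeping points are checked, the corollary is immediate from additivity of expectation.
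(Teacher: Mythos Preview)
Your proposal is correct and matches the paper's own argument exactly: the paper simply notes that the corollary follows directly from the identity $I(M;(X,Y)) = H(M) + H(X,Y) - H(M,X,Y)$ together with the entropy-bias formula~\eqref{eq_bias_entropy}. Your additional care about the sub-block of the sample covariance being a bona fide sample covariance and the requirement $n > d_M + d_X + d_Y$ is more detail than the paper provides, but entirely consistent with its approach.
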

\vspace{-2mm}
This follows directly from the fact that $I(M ; (X, Y)) = H(M) + H(X, Y) - H(M, X, Y)$.
Similarly, we can compute the bias of $\hat I(M ; X)$ and $\hat I(M ; Y)$.
But this does not uniquely determine the bias in the individual PID components, and as with defining PIDs, we need to decompose the bias in Corollary~\ref{cor_imxy_bias} across the four PID components such that they agree with these constraints.
We solve this problem by defining a bias-corrected version of the union information from Proposition~\ref{prop_gpid_optim}.
\begin{definition}[Bias-corrected Union Information] \label{def_bias_corr_union_info}
	We assign the bias in the union information to be the \emph{same fraction} as the bias in the joint mutual information $I(M ; (X, Y))$. This gives rise to a bias-corrected estimate of the union information:
	\begin{equation}
		\widetilde{I^\cup_G} \Big|_{\textup{bias-corr}}(M : X ; Y) \coloneqq \widetilde{I^\cup_G}(M : X ; Y) \biggl(1 - \frac{\mathrm{Bias} \thinsp\bigl[\thinsp \hat I\bigl(M ; (X, Y)\bigr) \thinsp\bigr]}{\hat I\bigl(M ; (X, Y)\bigr)}\biggr).
	\end{equation}
\end{definition}
\vspace{-2mm}
We do not analyze theoretically whether the bias-corrected union information is consistent and unbiased, thus, it may still have some residual bias relative to the true union information.
However, we find empirically that this bias correction process works reasonably well and appears both consistent and unbiased, in a number of examples.
Figure~\ref{fig_bias_correction} shows biased and bias-corrected PID values for 100 runs of two configurations called ``Bit-of-all'' (with a little bit of each PID component) and ``Fully-redundant'' (which has predominantly redundancy), with $d_M = d_X = d_Y = 10$ (details and additional setups in the supplementary material).
We find that bias correction brings the PID values closer to their true values even at small sample sizes.
In the supplementary material, we also include a preliminary analysis of the variance of PID estimates using bootstrap~\cite[Ch.~8]{wasserman2004all}.


\section{Application to Simulated and Real Neural Data}
\label{sec_neuroscience}

\begin{figure}
	\centering
	\includegraphics[width=0.8\linewidth]{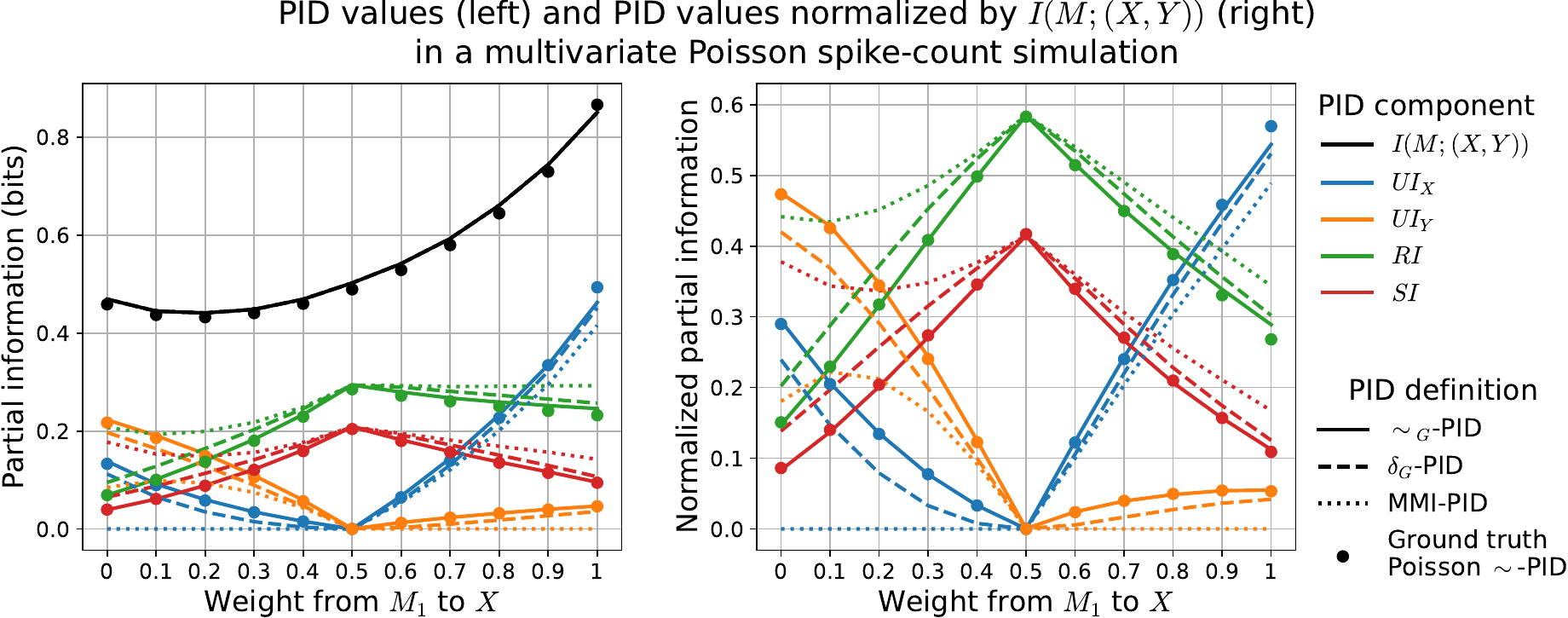}
	\caption{A comparison of the $\sim_G$-PID, the $\delta_G$-PID and the MMI-PID for a multivariate Poisson system.
		The ground truth is a discrete $\sim$-PID computed using the package of \citet{banerjee2018unique}.
		The $\sim_G$-PID comes closest to the ground truth (possibly because they compute the same PID definition), despite the fact that the $\sim_G$-PID only uses the covariance matrix of the Poisson distribution, whereas the ground truth uses knowledge of the distribution itself.}
	\label{fig_mult_poiss}
\end{figure}

\begin{figure}
	\centering
	\includegraphics[width=0.375\linewidth]{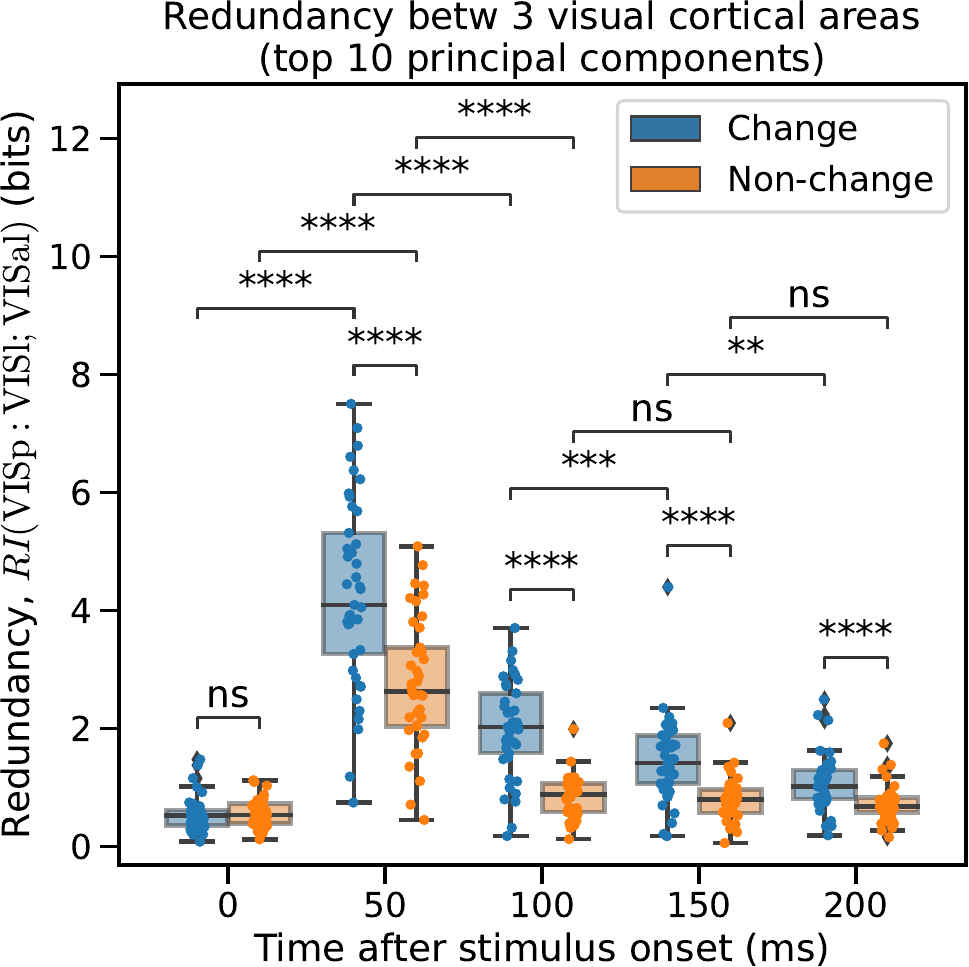}\;\;
	\includegraphics[width=0.39\linewidth]{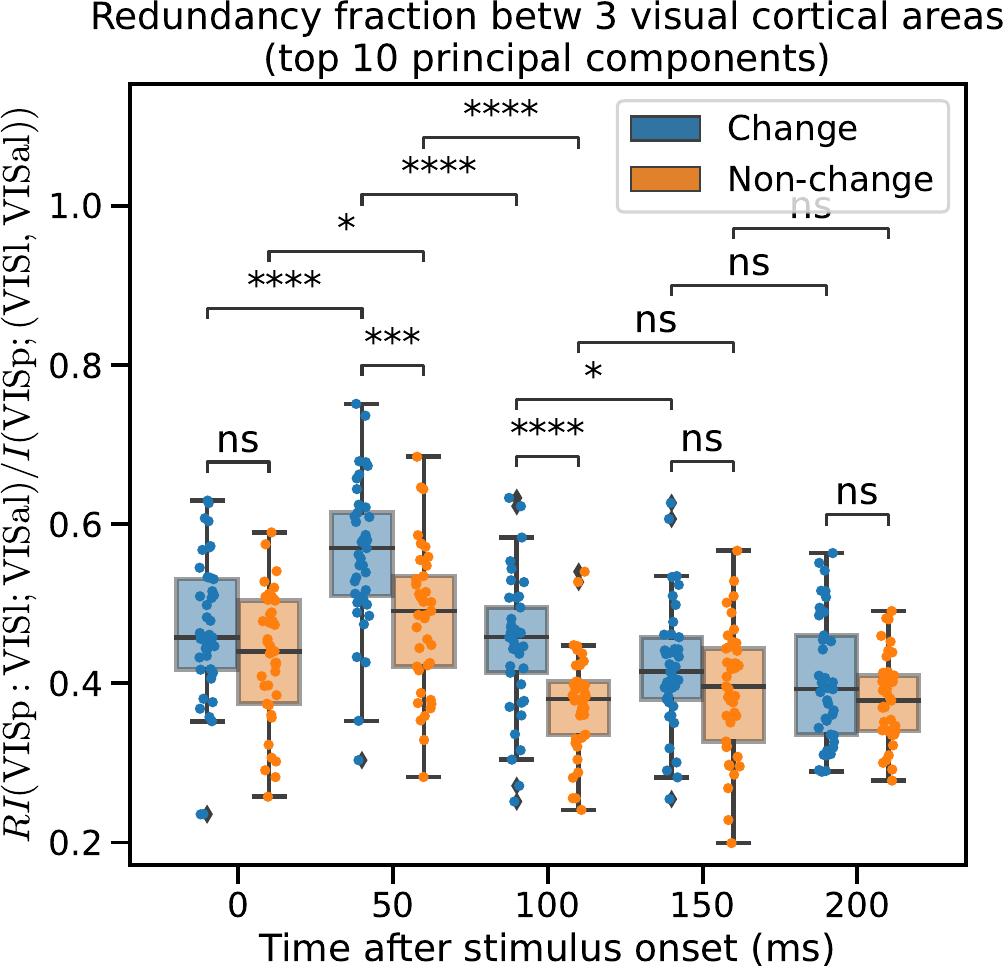}
	\caption{Bias-corrected redundancy estimates for information about VISp activity that is shared between VISl and VISal: redundancy in bits (left) and redundancy as a fraction of total mutual information (right).
		Data spread is across 40 mice.
		Statistical comparisons use a two-sided Mann-Whitney-Wilcoxon test.
	Observe that there is greater and more sustained redundancy on flashes corresponding to image changes, which are behaviorally relevant and linked to rewards in this task.\vspace{-3pt}}
	\label{fig_vbn_ri_visal}
\end{figure}

There is great interest in applying PIDs in neuroscientific applications, to understand how multiple brain regions jointly encode or communicate information~\cite{pica2017quantifying,timme2018tutorial}.
To show that our $\sim_G$-PID estimates provide reasonable results on non-Gaussian spiking neural data, we first simulate spike-count data using Poisson random variables (following~\cite{venkatesh2022partial}; described in the supplementary material).
We evaluate the ground truth $\sim$-PID for this distribution using the discrete PID estimator of \citet{banerjee2018computing}.
The $\sim_G$-PID is estimated from a sample covariance matrix using $10^6$ realizations of $M$, $X$ and $Y$.
We find that the $\sim_G$-PID closely matches the ground truth for a range of parameter values, despite the fact that the $\sim_G$-PID is effectively computed on a Gaussian approximation of a Poisson distribution (Fig.~\ref{fig_mult_poiss}).
We conclude that it is reasonable to use and interpret the $\sim_G$-PID on non-Gaussian spike count data.

We then applied our bias-corrected $\sim_G$-PID estimator to the Visual Behavior Neuropixels dataset collected by \ifarxiv{us at}{} the Allen Institute~\cite{allen2022vbn}.
\ifarxiv{We recorded over 80 mice}{Over 80 mice were recorded} using six neuropixels probes targeting various regions of visual cortex, while the mice were engaged in a visual change-detection task.
In the task, images from a set of 8 natural scenes were presented in 250~ms flashes, at intervals of 750~ms; the image would stay the same for a variable number of flashes after which it would change to a new image.
The mouse had to lick to receive a water reward when the image changed.
Thus, a given image flash could be a behaviorally relevant target if the previous image was different, or not, if the previous image was the same.

We used our bias-corrected PID estimator to understand how information is processed along the visual hierarchy during this task.
We estimated the $\sim_G$-PID to understand how information contained in the spiking activity of primary visual cortex (VISp) was represented in two higher-order visual cortical areas, VISl and VISal.
We aligned trials to the onset of a stimulus flash, binned spikes in 50~ms intervals and considered the top ten principal components (to achieve reasonable estimates at these sample sizes) from each region in each time bin.
We computed the $\sim_G$-PID on the sample covariance matrix of these principal components (shown in Fig.~\ref{fig_vbn_ri_visal}).
We found that there was a significantly larger amount of redundant information about VISp activity between VISl and VISal for stimulus flashes corresponding to an image change, compared to flashes that were not changes (Fig.~\ref{fig_vbn_ri_visal}(l)).
The larger redundancy was also sustained slightly longer for flashes corresponding to changes, than non-change flashes.
Both of these effects were maintained even when the redundancy was normalized by the joint mutual information, suggesting that the effect was not purely due to an increase in the total amount of information (Fig.~\ref{fig_vbn_ri_visal}(r)).
Our results suggest that the visual cortex propagates information throughout the hierarchy more robustly when such information is relevant for behavior.
\todo{Correct the number of mice used in the Figure caption: actually, it was 42 mice, not 40.}


\section{Discussion}

\textbf{Limitations. }
Our work has several limitations that require further theory and simulations to resolve, the most important of which are: \begin{enumerate*}[label=(\arabic*)]
	\item Our estimator is technically a bound on the PID values because we assume Gaussian optimality in Definition~\ref{def_gauss_tilde_pid};
	\item Our bias-correction method is heuristic: we do not provide a rigorous theoretical characterization of the bias of PID values.
\end{enumerate*}

\textbf{Broader impacts. }
Our work is mainly methodological, so the scope for negative impacts depends on how the methods might be used.
For example, incorrect interpretations drawn from our the use of our PID estimators may affect scientific conclusions.
Also, despite our best efforts to explore a variety of systems, we cannot tell how accurate our bias-correction method will be in novel configurations.


\begin{ack}
	We thank \L{}ukasz Ku\'smierz for providing a valuable reference on the bias of Shannon entropy estimates.
	We also thank Gabe Schamberg and Christof Koch for helpful discussions.

	P.~Venkatesh was supported by a Shanahan Family Foundation Fellowship at the Interface of Data and Neuroscience, supported in part by the Allen Institute.
	We thank the Allen Institute founder, Paul~G.~Allen, for his vision, encouragement, and support.
\end{ack}

\def\bibfont{\small}
\setlength{\bibsep}{6pt minus 1pt}
\bibliography{references}

\begin{thebibliography}{38}
\providecommand{\natexlab}[1]{#1}
\providecommand{\url}[1]{\texttt{#1}}
\expandafter\ifx\csname urlstyle\endcsname\relax
  \providecommand{\doi}[1]{doi: #1}\else
  \providecommand{\doi}{doi: \begingroup \urlstyle{rm}\Url}\fi

\bibitem[de~Vries et~al.(2020)de~Vries, Lecoq, Buice, Groblewski, Ocker,
  Oliver, Feng, Cain, Ledochowitsch, Millman, et~al.]{devries2020large}
Saskia~EJ de~Vries, Jerome~A Lecoq, Michael~A Buice, Peter~A Groblewski,
  Gabriel~K Ocker, Michael Oliver, David Feng, Nicholas Cain, Peter
  Ledochowitsch, Daniel Millman, et~al.
\newblock A large-scale standardized physiological survey reveals functional
  organization of the mouse visual cortex.
\newblock \emph{Nature neuroscience}, 23\penalty0 (1):\penalty0 138--151, 2020.

\bibitem[Siegle et~al.(2021)Siegle, Jia, Durand, Gale, Bennett, Graddis,
  Heller, Ramirez, Choi, Luviano, et~al.]{siegle2021survey}
Joshua~H Siegle, Xiaoxuan Jia, S{\'e}verine Durand, Sam Gale, Corbett Bennett,
  Nile Graddis, Greggory Heller, Tamina~K Ramirez, Hannah Choi, Jennifer~A
  Luviano, et~al.
\newblock Survey of spiking in the mouse visual system reveals functional
  hierarchy.
\newblock \emph{Nature}, 592\penalty0 (7852):\penalty0 86--92, 2021.

\bibitem[Stringer et~al.(2019)Stringer, Pachitariu, Steinmetz, Reddy,
  Carandini, and Harris]{stringer2019spontaneous}
Carsen Stringer, Marius Pachitariu, Nicholas Steinmetz, Charu~Bai Reddy, Matteo
  Carandini, and Kenneth~D Harris.
\newblock Spontaneous behaviors drive multidimensional, brainwide activity.
\newblock \emph{Science}, 364\penalty0 (6437):\penalty0 eaav7893, 2019.

\bibitem[Schneidman et~al.(2003)Schneidman, Bialek, and
  Berry]{schneidman2003synergy}
Elad Schneidman, William Bialek, and Michael~J Berry.
\newblock Synergy, redundancy, and independence in population codes.
\newblock \emph{Journal of Neuroscience}, 23\penalty0 (37):\penalty0
  11539--11553, 2003.

\bibitem[Gat and Tishby(1998)]{gat1998synergy}
Itay Gat and Naftali Tishby.
\newblock Synergy and redundancy among brain cells of behaving monkeys.
\newblock \emph{Advances in neural information processing systems}, 11, 1998.

\bibitem[Pica et~al.(2017)Pica, Piasini, Safaai, Runyan, Harvey, Diamond,
  Kayser, Fellin, and Panzeri]{pica2017quantifying}
Giuseppe Pica, Eugenio Piasini, Houman Safaai, Caroline Runyan, Christopher
  Harvey, Mathew Diamond, Christoph Kayser, Tommaso Fellin, and Stefano
  Panzeri.
\newblock Quantifying how much sensory information in a neural code is relevant
  for behavior.
\newblock \emph{Advances in Neural Information Processing Systems}, 30, 2017.

\bibitem[Pica et~al.(2019)Pica, Soltanipour, and Panzeri]{pica2019using}
Giuseppe Pica, Mohammadreza Soltanipour, and Stefano Panzeri.
\newblock Using intersection information to map stimulus information transfer
  within neural networks.
\newblock \emph{BioSystems}, 185:\penalty0 104028, 2019.

\bibitem[B{\'\i}m et~al.(2019)B{\'\i}m, De~Feo, Chicharro, Bieler,
  Hanganu-Opatz, Brovelli, and Panzeri]{bim2019non}
Jan B{\'\i}m, Vito De~Feo, Daniel Chicharro, Malte Bieler, Ileana~L
  Hanganu-Opatz, Andrea Brovelli, and Stefano Panzeri.
\newblock A non-negative measure of feature-related information transfer
  between neural signals.
\newblock \emph{BioRxiv}, page 758128, 2019.

\bibitem[Scagliarini et~al.(2020)Scagliarini, Faes, Marinazzo, Stramaglia, and
  Mantegna]{scagliarini2020synergistic}
Tomas Scagliarini, Luca Faes, Daniele Marinazzo, Sebastiano Stramaglia, and
  Rosario~N Mantegna.
\newblock Synergistic information transfer in the global system of financial
  markets.
\newblock \emph{Entropy}, 22\penalty0 (9):\penalty0 1000, 2020.

\bibitem[Wollstadt et~al.(2021)Wollstadt, Schmitt, and
  Wibral]{wollstadt2021rigorous}
Patricia Wollstadt, Sebastian Schmitt, and Michael Wibral.
\newblock A rigorous information-theoretic definition of redundancy and
  relevancy in feature selection based on (partial) information decomposition.
\newblock \emph{arXiv preprint arXiv:2105.04187}, 2021.

\bibitem[Dutta et~al.(2020)Dutta, Venkatesh, Mardziel, Datta, and
  Grover]{dutta2020information}
Sanghamitra Dutta, Praveen Venkatesh, Piotr Mardziel, Anupam Datta, and Pulkit
  Grover.
\newblock An information-theoretic quantification of discrimination with exempt
  features.
\newblock In \emph{Proceedings of the AAAI Conference on Artificial
  Intelligence}, volume~34, pages 3825--3833, 2020.

\bibitem[Venkatesh and Schamberg(2022)]{venkatesh2022partial}
Praveen Venkatesh and Gabriel Schamberg.
\newblock Partial information decomposition via deficiency for multivariate
  gaussians.
\newblock In \emph{2022 IEEE International Symposium on Information Theory
  (ISIT)}, pages 2892--2897. IEEE, 2022.

\bibitem[Barrett(2015)]{barrett2015exploration}
Adam~B Barrett.
\newblock Exploration of synergistic and redundant information sharing in
  static and dynamical {G}aussian systems.
\newblock \emph{Physical Review E}, 91\penalty0 (5):\penalty0 052802, 2015.

\bibitem[Colenbier et~al.(2020)Colenbier, Van~de Steen, Uddin, Poldrack,
  Calhoun, and Marinazzo]{colenbier2020disambiguating}
Nigel Colenbier, Frederik Van~de Steen, Lucina~Q Uddin, Russell~A Poldrack,
  Vince~D Calhoun, and Daniele Marinazzo.
\newblock Disambiguating the role of blood flow and global signal with partial
  information decomposition.
\newblock \emph{Neuroimage}, 213:\penalty0 116699, 2020.

\bibitem[Boonstra et~al.(2019)Boonstra, Faes, Kerkman, and
  Marinazzo]{boonstra2019information}
Tjeerd~W Boonstra, Luca Faes, Jennifer~N Kerkman, and Daniele Marinazzo.
\newblock Information decomposition of multichannel emg to map functional
  interactions in the distributed motor system.
\newblock \emph{NeuroImage}, 202:\penalty0 116093, 2019.

\bibitem[Krohova et~al.(2019)Krohova, Faes, Czippelova, Turianikova, Mazgutova,
  Pernice, Busacca, Marinazzo, Stramaglia, and Javorka]{krohova2019multiscale}
Jana Krohova, Luca Faes, Barbora Czippelova, Zuzana Turianikova, Nikoleta
  Mazgutova, Riccardo Pernice, Alessandro Busacca, Daniele Marinazzo,
  Sebastiano Stramaglia, and Michal Javorka.
\newblock Multiscale information decomposition dissects control mechanisms of
  heart rate variability at rest and during physiological stress.
\newblock \emph{Entropy}, 21\penalty0 (5):\penalty0 526, 2019.

\bibitem[Pakman et~al.(2021)Pakman, Nejatbakhsh, Gilboa, Makkeh, Mazzucato,
  Wibral, and Schneidman]{pakman2021estimating}
Ari Pakman, Amin Nejatbakhsh, Dar Gilboa, Abdullah Makkeh, Luca Mazzucato,
  Michael Wibral, and Elad Schneidman.
\newblock Estimating the unique information of continuous variables.
\newblock \emph{Advances in neural information processing systems},
  34:\penalty0 20295--20307, 2021.

\bibitem[Bertschinger et~al.(2014)Bertschinger, Rauh, Olbrich, Jost, and
  Ay]{bertschinger2014quantifying}
Nils Bertschinger, Johannes Rauh, Eckehard Olbrich, J{\"u}rgen Jost, and Nihat
  Ay.
\newblock Quantifying unique information.
\newblock \emph{Entropy}, 16\penalty0 (4):\penalty0 2161--2183, 2014.

\bibitem[Rauh et~al.(2022)Rauh, Banerjee, Olbrich, Mont{\'u}far, and
  Jost]{rauh2022continuity}
Johannes Rauh, Pradeep~Kr Banerjee, Eckehard Olbrich, Guido Mont{\'u}far, and
  J{\"u}rgen Jost.
\newblock Continuity and additivity properties of information decompositions.
\newblock \emph{arXiv preprint arXiv:2204.10982}, 2022.

\bibitem[Banerjee et~al.(2018{\natexlab{a}})Banerjee, Rauh, and
  Mont{\'u}far]{banerjee2018computing}
Pradeep~Kr Banerjee, Johannes Rauh, and Guido Mont{\'u}far.
\newblock Computing the unique information.
\newblock In \emph{2018 IEEE International Symposium on Information Theory
  (ISIT)}, pages 141--145. IEEE, 2018{\natexlab{a}}.

\bibitem[Liang et~al.(2023)Liang, Cheng, Fan, Ling, Nie, Chen, Deng, Mahmood,
  Salakhutdinov, and Morency]{liang2023quantifying}
Paul~Pu Liang, Yun Cheng, Xiang Fan, Chun~Kai Ling, Suzanne Nie, Richard Chen,
  Zihao Deng, Faisal Mahmood, Ruslan Salakhutdinov, and Louis-Philippe Morency.
\newblock Quantifying \& modeling feature interactions: An information
  decomposition framework.
\newblock \emph{arXiv preprint arXiv:2302.12247}, 2023.

\bibitem[Williams and Beer(2010)]{williams2010nonnegative}
Paul~L Williams and Randall~D Beer.
\newblock Nonnegative decomposition of multivariate information.
\newblock \emph{arXiv preprint arXiv:1004.2515}, 2010.

\bibitem[Cover and Thomas(2012)]{cover2012elements}
Thomas~M Cover and Joy~A Thomas.
\newblock \emph{Elements of Information Theory}.
\newblock John Wiley \& Sons, 2012.

\bibitem[Griffith and Koch(2014)]{griffith2014quantifying}
Virgil Griffith and Christof Koch.
\newblock Quantifying synergistic mutual information.
\newblock In \emph{Guided self-organization: inception}, pages 159--190.
  Springer, 2014.

\bibitem[Harder et~al.(2013)Harder, Salge, and Polani]{harder2013bivariate}
Malte Harder, Christoph Salge, and Daniel Polani.
\newblock Bivariate measure of redundant information.
\newblock \emph{Physical Review E}, 87\penalty0 (1):\penalty0 012130, 2013.

\bibitem[Kolchinsky(2019)]{kolchinsky2019novel}
Artemy Kolchinsky.
\newblock A novel approach to multivariate redundancy and synergy.
\newblock \emph{arXiv preprint arXiv:1908.08642}, 2019.

\bibitem[Lizier et~al.(2018)Lizier, Bertschinger, Jost, and
  Wibral]{lizier2018information}
Joseph~T Lizier, Nils Bertschinger, J{\"u}rgen Jost, and Michael Wibral.
\newblock Information decomposition of target effects from multi-source
  interactions: perspectives on previous, current and future work.
\newblock \emph{Entropy}, 20\penalty0 (4):\penalty0 307, 2018.

\bibitem[Venkatesh et~al.(2023)Venkatesh, Gurushankar, and
  Schamberg]{venkatesh2023capturing}
Praveen Venkatesh, Keerthana Gurushankar, and Gabriel Schamberg.
\newblock Capturing and interpreting unique information.
\newblock \emph{arXiv preprint arXiv:2302.11873}, 2023.

\bibitem[Banerjee et~al.(2018{\natexlab{b}})Banerjee, Olbrich, Jost, and
  Rauh]{banerjee2018unique}
Pradeep~Kr Banerjee, Eckehard Olbrich, J{\"u}rgen Jost, and Johannes Rauh.
\newblock Unique informations and deficiencies.
\newblock In \emph{2018 56th Annual Allerton Conference on Communication,
  Control, and Computing (Allerton)}, pages 32--38. IEEE, 2018{\natexlab{b}}.

\bibitem[Tishby et~al.(2000)Tishby, Pereira, and Bialek]{tishby2000information}
Naftali Tishby, Fernando~C Pereira, and William Bialek.
\newblock The information bottleneck method.
\newblock \emph{arXiv preprint physics/0004057}, 2000.

\bibitem[Chechik et~al.(2003)Chechik, Globerson, Tishby, and
  Weiss]{chechik2003information}
Gal Chechik, Amir Globerson, Naftali Tishby, and Yair Weiss.
\newblock Information bottleneck for gaussian variables.
\newblock \emph{Advances in Neural Information Processing Systems}, 16, 2003.

\bibitem[Globerson and Tishby(2004)]{globerson2004optimality}
Amir Globerson and Naftali Tishby.
\newblock On the optimality of the gaussian information bottleneck curve.
\newblock \emph{The Hebrew University of Jerusalem, Tech. Rep}, page~22, 2004.

\bibitem[Hinton(2018)]{hinton2018rprop}
Geoffrey Hinton.
\newblock Coursera neural networks for machine learning, lecture 6, 2018.
\newblock URL
  \url{https://www.cs.toronto.edu/~tijmen/csc321/slides/lecture_slides_lec6.pdf}.
\newblock Also see
  \url{https://optimization.cbe.cornell.edu/index.php?title=RMSProp}.

\bibitem[Paninski(2003)]{paninski2003estimation}
Liam Paninski.
\newblock Estimation of entropy and mutual information.
\newblock \emph{Neural computation}, 15\penalty0 (6):\penalty0 1191--1253,
  2003.

\bibitem[Cai et~al.(2015)Cai, Liang, and Zhou]{cai2015law}
T~Tony Cai, Tengyuan Liang, and Harrison~H Zhou.
\newblock Law of log determinant of sample covariance matrix and optimal
  estimation of differential entropy for high-dimensional gaussian
  distributions.
\newblock \emph{Journal of Multivariate Analysis}, 137:\penalty0 161--172,
  2015.

\bibitem[Wasserman(2004)]{wasserman2004all}
Larry Wasserman.
\newblock \emph{All of statistics: a concise course in statistical inference},
  volume~26.
\newblock Springer, 2004.

\bibitem[Timme and Lapish(2018)]{timme2018tutorial}
Nicholas~M Timme and Christopher Lapish.
\newblock A tutorial for information theory in neuroscience.
\newblock \emph{eneuro}, 5\penalty0 (3), 2018.

\bibitem[Institute(2022)]{allen2022vbn}
Allen Institute.
\newblock Visual behavior neuropixels dataset overview, 2022.
\newblock URL
  \url{https://portal.brain-map.org/explore/circuits/visual-behavior-neuropixels}.

\end{thebibliography}


\begin{thebibliography}{5}
\providecommand{\natexlab}[1]{#1}
\providecommand{\url}[1]{\texttt{#1}}
\expandafter\ifx\csname urlstyle\endcsname\relax
  \providecommand{\doi}[1]{doi: #1}\else
  \providecommand{\doi}{doi: \begingroup \urlstyle{rm}\Url}\fi

\bibitem[Cover and Thomas(2012)]{cover2012elements_sm}
Thomas~M Cover and Joy~A Thomas.
\newblock \emph{Elements of Information Theory}.
\newblock John Wiley \& Sons, 2012.

\bibitem[Petersen and Pedersen(2012)]{petersen2012matrix}
K.~B. Petersen and M.~S. Pedersen.
\newblock \emph{The {M}atrix {C}ookbook}.
\newblock Technical University of Denmark, 2012.
\newblock URL \url{http://www2.compute.dtu.dk/pubdb/pubs/3274-full.html}.
\newblock Version: November 15, 2012.

\bibitem[Pedregosa et~al.(2011)Pedregosa, Varoquaux, Gramfort, Michel, Thirion,
  Grisel, Blondel, Prettenhofer, Weiss, Dubourg, Vanderplas, Passos,
  Cournapeau, Brucher, Perrot, and Duchesnay]{pedrogosa2011scikitlearn}
F.~Pedregosa, G.~Varoquaux, A.~Gramfort, V.~Michel, B.~Thirion, O.~Grisel,
  M.~Blondel, P.~Prettenhofer, R.~Weiss, V.~Dubourg, J.~Vanderplas, A.~Passos,
  D.~Cournapeau, M.~Brucher, M.~Perrot, and E.~Duchesnay.
\newblock Scikit-learn: Machine learning in {P}ython.
\newblock \emph{Journal of Machine Learning Research}, 12:\penalty0 2825--2830,
  2011.

\bibitem[Siegle et~al.(2021)Siegle, Jia, Durand, Gale, Bennett, Graddis,
  Heller, Ramirez, Choi, Luviano, et~al.]{siegle2021survey_sm}
Joshua~H Siegle, Xiaoxuan Jia, S{\'e}verine Durand, Sam Gale, Corbett Bennett,
  Nile Graddis, Greggory Heller, Tamina~K Ramirez, Hannah Choi, Jennifer~A
  Luviano, et~al.
\newblock Survey of spiking in the mouse visual system reveals functional
  hierarchy.
\newblock \emph{Nature}, 592\penalty0 (7852):\penalty0 86--92, 2021.

\bibitem[cod()]{code}
Code for this paper.
\newblock URL \url{https://github.com/praveenv253/gpid}.

\end{thebibliography}

\clearpage


\appendix
\begin{center}
	\Large
	\textbf{Gaussian Partial Information Decomposition:\\Bias Correction and Application to High-dimensional Data} \\
	Supplementary Material
\end{center}

\setlength{\abovedisplayskip}{6pt}
\setlength{\belowdisplayskip}{6pt}


\vspace{1em}
\section{Supplementary Material for Section~\ref{sec_computing_gpid}}

\newcommand{\vpm}{{\vphantom{-1}}}

\subsection{Proofs of Propositions~\ref{prop_gpid_optim} and \ref{prop_gpid_obj_grad_proj}}

\begin{proof}[Proof of Proposition~\ref{prop_gpid_optim}]
	Firstly, the differential entropy of a Gaussian random variable $M$ with covariance matrix $\Sigma_M$ is given by~\citesm[Thm.~8.4.1]{cover2012elements_sm}:
	\begin{equation} \label{eq_gauss_entropy}
		h(M) = \frac{1}{2}\log\det(2 \pi e \Sigma_M).
	\end{equation}

	Secondly, for a joint Gaussian distribution $P_{MXY}$ parameterized by a covariance matrix $\Sigma_{MXY}$, the conditional covariance matrix can be written as~\citesm[Sec.~8.1.3]{petersen2012matrix}:
	\begin{align}
		\Sigma_{XY|M}^{\vpm} &= \Sigma_{XY}^{\vpm} - \Sigma_{XY,M}^{\vpm} \Sigma_{M}^{-1} \Sigma_{XY,M}^{\vpm\T} \label{eq_cond_cov1} \\
		\Rightarrow\qquad \Sigma_{XY}^{\vpm} &= \Sigma_{XY|M}^{\vpm} + \Sigma_{XY,M}^{\vpm} \Sigma_{M}^{-1} \Sigma_{XY,M}^{\vpm\T} \label{eq_cond_cov2}
	\end{align}

	Using these two equations, we can derive the mutual information between $M$ and $(X, Y)$ as follows:
	\begin{align}
		I(M ; (X, Y)) &\overset{\hphantom{(a)}}{=} h(X, Y) - h(X, Y \given M) \\
					  &\overset{(a)}{=} \frac{1}{2}\log\det(2 \pi e \Sigma_{XY}) - \frac{1}{2} \log\det(2 \pi e \Sigma_{XY|M}) \\
					  &\overset{\hphantom{(a)}}{=} \frac{1}{2}\log\bigl((2 \pi e)^{d_M} \det(\Sigma_{XY})\bigr) - \frac{1}{2} \log\bigl((2 \pi e)^{d_M} \det(\Sigma_{XY|M})\bigr) \\
					  &\overset{\hphantom{(a)}}{=} \frac{1}{2}\log\biggl(\frac{\det(\Sigma_{XY})}{\det(\Sigma_{XY|M})}\biggr) \\
					  &\overset{(b)}{=} \frac{1}{2}\log\biggl(\frac{\det(\Sigma_{XY|M}^{\vpm} + \Sigma_{XY,M}^{\vpm} \Sigma_{M}^{-1} \Sigma_{XY,M}^{\vpm\T})}{\det(\Sigma_{XY|M})}\biggr) \\
					  &\overset{\hphantom{(a)}}{=} \frac{1}{2}\log\biggl(\frac{\det(\Sigma_{XY|M}^{\vpm}) \det(I + \Sigma_{XY|M}^{-1} \Sigma_{XY,M}^{\vpm} \Sigma_{M}^{-1} \Sigma_{XY,M}^{\vpm\T})}{\det(\Sigma_{XY|M})}\biggr) \\
					  &\overset{\hphantom{(a)}}{=} \frac{1}{2}\log\det(I + \Sigma_{XY|M}^{-1} \Sigma_{XY,M}^{\vpm} \Sigma_{M}^{-1} \Sigma_{XY,M}^{\vpm\T}) \\
					  &\overset{(c)}{=} \frac{1}{2}\log\det(I + \Sigma_{M}^{-1} \Sigma_{XY,M}^{\vpm\T} \Sigma_{XY|M}^{-1} \Sigma_{XY,M}^{\vpm}),
	\end{align}
	where in (a) we used equation~\eqref{eq_gauss_entropy}, in (b) we used equation~\eqref{eq_cond_cov2}, and in (c) we used the fact that $\det(I + AB) = \det(I + BA)$.

	The remainder of the proof follows from the arguments presented below equation~\eqref{eq_tilde_union_info}.
	The constraint in Proposition~\ref{prop_gpid_optim} arises because, when optimizing over $\Sigma_{X,Y|M}$, we require $\Sigma_{MXY}$ to be a valid positive semidefinite covariance matrix, i.e., $\Sigma_{MXY} \suff 0$.
	This happens if and only if $\Sigma_M$ and its Schur complement in $\Sigma_{MXY}$ are both positive semidefinite, i.e., $\Sigma_M \suff 0$ and $\Sigma_M^\vpm - \Sigma_{M,XY}^\vpm \Sigma_{XY}^{-1} \Sigma_{M,XY}^{\vpm\T} = \Sigma_{XY|M}^\vpm \suff 0$.
\end{proof}

\begin{proof}[Proof of Proposition~\ref{prop_gpid_obj_grad_proj}]
	The proof is divided into three parts consisting of derivations for the objective, the gradient and the projection operator.

\textbf{Objective. }
After whitening the $P_{X|M}$ and the $P_{Y|M}$ channels, and assuming that $\Sigma_M = I$ (see Remark~\ref{rem_whitening}), without loss of generality we have that
\begin{align}
	\Sigma_{X,M} &= \mathbb E\bigl[ (H_X M + N_X) M^\T \bigr] = H_X \mathbb E[MM^\T] = H_X \\
	\Sigma_{XY|M} &= \left[\begin{array}{c c}
			I & \Sigma_{X,Y|M} \\
			\Sigma_{X,Y|M}^\T & I
	\end{array}\right] \\
	\Rightarrow\quad \Sigma_M^{-1} \Sigma_{XY,M}^{\vpm\T} \Sigma_{XY|M}^{-1} \Sigma_{XY,M}^{\vpm} &= \left[\begin{array}{c c}
			H_X^\T & H_Y^\T
		\end{array}\right] \left[\begin{array}{c c}
			I & \Sigma_{X,Y|M} \\
			\Sigma_{X,Y|M}^\T & I
		\end{array}\right]^{-1} \left[\begin{array}{c}
			H_X \\
			H_Y
		\end{array}\right]
\end{align}

For the sake of brevity, let $\Sigma$ represent the optimization variable $\Sigma_{X,Y|M}$, and let $S$ be its Schur complement in $\Sigma_{XY|M}$, $I - \Sigma\Sigma^\T$.
Then, the inverse of $\Sigma_{XY|M}$ can be written as~\citesm[Sec.~9.1.5]{petersen2012matrix}:
\begin{equation}
	\Sigma_{XY|M}^{-1} = \left[\begin{array}{c c}
			S^{-1} & -S^{-1}\Sigma \\
			-\Sigma^\T S^{-1} & I + \Sigma^\T S^{-1} \Sigma
	\end{array}\right]
\end{equation}
Therefore, we get:
\begin{align}
	\Sigma_M^{-1} \Sigma_{XY,M}^{\vpm\T} \Sigma_{XY|M}^{-1} \Sigma_{XY,M}^{\vpm} &=
	\left[\begin{array}{c c}
			H_X^\T & H_Y^\T
	\end{array}\right]
	\left[\begin{array}{c c}
			S^{-1} & -S^{-1}\Sigma \\
			-\Sigma^\T S^{-1} & I + \Sigma^\T S^{-1} \Sigma
	\end{array}\right]
	\left[\begin{array}{c}
			H_X \\
			H_Y
	\end{array}\right] \\
	&= H_Y^\T H_Y + (H_X - \Sigma H_Y)^\T S^{-1} (H_X - \Sigma H_Y)
\end{align}

Thus, setting $B \coloneqq H_X - \Sigma H_Y$, the optimization problem in Proposition~\ref{prop_gpid_optim} reduces to
\begin{equation}
	\begin{aligned}
		\min_{\Sigma} \quad &\frac{1}{2} \log\det \bigl(I + H_Y^\T H_Y + B^\T S^{-1} B\bigr) \\
		\text{s.t.}\quad &\Sigma_{XY|M} \suff 0
	\end{aligned}
\end{equation}

\textbf{Gradient. }
Let the objective derived in the previous section be called $f(\Sigma)$, where $\Sigma \coloneqq \Sigma_{X,Y|M}$ as before.
We can compute the gradient of $f$ with respect to $\Sigma$ using standard identities from matrix calculus.
First, note that the gradient of a scalar function with respect to a matrix is itself a matrix with entries as follows:
\begin{equation}
	\nabla f(\Sigma)\Big\vert_{ij} = \frac{\partial f}{\partial \Sigma_{ij}}(\Sigma).
\end{equation}

Considering each element of this matrix:
\begin{align}
	\frac{\partial f}{\partial \Sigma_{ij}}(\Sigma) &\overset{\hphantom{(a)}}{=} \frac{1}{2} \frac{\partial}{\partial \Sigma_{ij}} \log\det(I + H_Y^\T H_Y + B^\T S^{-1} B) \bigg\vert_\Sigma \\
															  &\overset{(a)}{=} \frac{1}{2} \mathrm{Tr} \Big\{ (I + H_Y^\T H_Y + B^\T S^{-1} B)^{-1} \frac{\partial}{\partial \Sigma_{ij}} (I + H_Y^\T H_Y + B^\T S^{-1} B) \Big\} \bigg\vert_\Sigma \\
															  &\overset{(b)}{=} \frac{1}{2} \mathrm{Tr} \Big\{ (I + H_Y^\T H_Y + B^\T S^{-1} B)^{-1} \frac{\partial}{\partial \Sigma_{ij}} (B^\T S^{-1} B) \Big\} \bigg\vert_\Sigma, \label{eq_delfdelsigij_partial}
\end{align}
where in (a), we have used the identity $\partial \log\det(X) = \mathrm{Tr}\{X^{-1} \partial(X)\}$~\citesm[Sec.~2]{petersen2012matrix}, while in (b), we use the fact that only $B$ and $S$ depend on $\Sigma$ implicitly, with the other terms being constants.

Expanding the partial derivative alone, we get:
\begin{equation} \label{eq_btsinvb_partial}
	\frac{\partial}{\partial \Sigma_{ij}} (B^\T S^{-1} B) \Big\vert_\Sigma = \biggl[\frac{\partial}{\partial \Sigma_{ij}}(B^\T) \cdot S^{-1} B \;+\; B^\T \cdot \frac{\partial}{\partial \Sigma_{ij}} (S^{-1}) \cdot B \;+\; B^\T S^{-1} \cdot \frac{\partial}{\partial \Sigma_{ij}} (B) \biggr]_\Sigma,
\end{equation}
wherein
\begin{align}
	\frac{\partial}{\partial \Sigma_{ij}}(B) \Big\vert_\Sigma &= \frac{\partial}{\partial \Sigma_{ij}} (H_X - \Sigma H_Y) \Big\vert_\Sigma \\
															  &\overset{(b)}{=} - J^{ij} H_Y, \\
	\frac{\partial}{\partial \Sigma_{ij}}(S^{-1}) \Big\vert_\Sigma &\overset{(c)}{=} - S^{-1} \frac{\partial S}{\partial \Sigma_{ij}} S^{-1} \Big\vert_\Sigma \\
																   &= - S^{-1} \frac{\partial}{\partial \Sigma_{ij}}(I - \Sigma \Sigma^\T) S^{-1} \Big\vert_\Sigma \\
																   &\overset{(d)}{=} - S^{-1} (- J^{ij} \Sigma^\T - \Sigma J^{ij}{}^\T) S^{-1},
\end{align}
where $J^{ij}$ is the \emph{single-entry matrix}, containing a 1 at location $(i, j)$ and 0's everywhere else; in (b) and (d), we use the fact that $\partial X / \partial X_{ij} = J^{ij}$~\citesm[Sec.~9.7.6]{petersen2012matrix}; and in (c) we use the identity $\partial (X^{-1}) = X^{-1} \partial(X) X^{-1}$.
Therefore, \eqref{eq_btsinvb_partial} becomes
\begin{align}
	\MoveEqLeft \frac{\partial}{\partial \Sigma_{ij}} (B^\T S^{-1} B) \Big\vert_\Sigma \\*
	&= - (J^{ij} H_Y)^\T S^{-1} B \;+\; B^\T S^{-1} (J^{ij} \Sigma^\T \;+\; \Sigma J^{ij}{}^\T) S^{-1} B \;+\; B^\T S^{-1} (- J^{ij} H_Y) \\
	&= - H_Y^\T J^{ij}{}^\T S^{-1} B \;+\; B^\T S^{-1} J^{ij} \Sigma^\T S^{-1} B \;+\; B^\T S^{-1} \Sigma J^{ij}{}^\T S^{-1} B \;-\; B^\T S^{-1} J^{ij} H_Y.
\end{align}
Putting it all together, and letting $A \coloneqq (I + H_Y^\T H_Y + B^\T S^{-1} B)$, \eqref{eq_delfdelsigij_partial} becomes
\begin{align}
	\frac{\partial f}{\partial \Sigma_{ij}}(\Sigma) &= \frac{1}{2} \mathrm{Tr} \Bigl\{ A^{-1} \bigl(-\; H_Y^\T J^{ij}{}^\T S^{-1} B \;+\; B^\T S^{-1} J^{ij} \Sigma^\T S^{-1} B \notag \\
													&\qquad\qquad\qquad +\; B^\T S^{-1} \Sigma J^{ij}{}^\T S^{-1} B \;-\; B^\T S^{-1} J^{ij} H_Y \bigr) \Bigr\} \\
													&= \frac{1}{2} \Bigl[ -\; \mathrm{Tr} \bigl\{ A^{-1} H_Y^\T J^{ij}{}^\T S^{-1} B \bigr\} \;+\; \mathrm{Tr} \bigl\{ A^{-1} B^\T S^{-1} J^{ij} \Sigma^\T S^{-1} B \bigr\} \notag \\
													&\qquad\; +\; \mathrm{Tr} \bigl\{ A^{-1} B^\T S^{-1} \Sigma J^{ij}{}^\T S^{-1} B \bigr\} \;-\; \mathrm{Tr} \bigl\{ A^{-1} B^\T S^{-1} J^{ij} H_Y \bigr\} \Bigr] \\
													&\overset{(e)}{=} \frac{1}{2} \Bigl[ -\; \mathrm{Tr} \bigl\{ S^{-1} B A^{-1} H_Y^\T J^{ij}{}^\T \bigr\} \;+\; \mathrm{Tr} \bigl\{ \Sigma^\T S^{-1} B A^{-1} B^\T S^{-1} J^{ij} \bigr\} \notag \\
													&\qquad\;\; +\; \mathrm{Tr} \bigl\{ S^{-1} B A^{-1} B^\T S^{-1} \Sigma J^{ij}{}^\T \bigr\} \;-\; \mathrm{Tr} \bigl\{ H_Y A^{-1} B^\T S^{-1} J^{ij} \bigr\} \Bigr],
\end{align}
where in (e), we have used the fact that the trace of a matrix product is invariant under cyclic permutations of the matrices within the product.

Finally, using the fact that $\mathrm{Tr}\{W^\T J^{ij}\} = \mathrm{Tr}\{W J^{ij}{}^\T\} = W_{ij}$ for any matrix $W$~\citesm[Sec.~9.7.5]{petersen2012matrix},
\begin{align}
\frac{\partial f}{\partial \Sigma_{ij}}(\Sigma) &= \frac{1}{2} \Bigl[ - 2 \bigl( S^{-1} B A^{-1} H_Y^\T \bigr)_{ij} \;+\; 2 \bigl(S^{-1} B A^{-1} B^\T S^{-1} \Sigma \bigr)_{ij} \Bigr] \\
												&= \Bigl[ S^{-1} B A^{-1} \bigl( B^\T S^{-1} \Sigma - H_Y^\T \bigr) \Bigr]_{ij} \\
\Rightarrow\quad \nabla f(\Sigma) &= S^{-1} B A^{-1} \bigl( B^\T S^{-1} \Sigma - H_Y^\T \bigr) \\
								  &= S^{-1} B \bigl( I + H_Y^\T H_Y + B^\T S^{-1} B \bigr)^{-1} \bigl( B^\T S^{-1} \Sigma - H_Y^\T \bigr).
\end{align}

\textbf{Projection operator. }
Recall that the optimization variable, $\Sigma \coloneqq \Sigma_{X,Y|M}$ is an off-diagonal block of $\Sigma_{XY|M}$, which is the matrix upon which the constraint is defined:
\begin{equation}
	\Sigma_{XY|M} = \left[\begin{array}{c c}
			I & \Sigma \\
			\Sigma^\T & I
	\end{array}\right],
\end{equation}
wherein the diagonal blocks are identity due to Remark~\ref{rem_whitening}.
For the purposes of this section, let us suppose $\Sigma_{XY|M}$ is a function of $\Sigma$, $\Sigma_{XY|M} \eqqcolon g(\Sigma)$, so that the constraint may be written as $g(\Sigma) \suff 0$.
A suitable projection operator, therefore, will accept a value $\Sigma_0$ (that may violate $g(\Sigma_0) \suff 0$) and find a point $\Sigma^{proj}$ close to it that satisfies the constraint, i.e., $g(\Sigma^{proj}) \suff 0$.

We do not find the ``orthogonal'' projection operator, which has the minimum distance $\lVert \Sigma^{proj} - \Sigma_0 \rVert$ in some norm.
Instead, we propose a simple heuristic to find a $\Sigma^{proj}$ which satisfies the constraint.

If $\Sigma_0$ satisfies the constraint, then we are done, so let us assume that $g(\Sigma_0) \;\; \mathclap{/}\mathclap{\suff} \;\; 0$.
Then, we can find a matrix $\widebar{\Sigma}_{XY|M}$ which is close to $g(\Sigma_0)$ and satisfies $\widebar{\Sigma}_{XY|M} \suff 0$ as follows: let the eigenvalue decomposition of $g(\Sigma_0)$ be given by $V \Lambda V^\T$, with $\Lambda \eqqcolon \mathrm{diag}(\lambda_i)$ being the diagonal matrix consisting of its eigenvalues $\lambda_i$.
Then, since $g(\Sigma_0)$ is not positive semidefinite, $\exists\;i$ s.t. $\lambda_i < 0$.
We set $\widebar{\lambda}_i \coloneqq 0$ for all such $i$; effectively, $\widebar{\lambda}_i = \max\{0, \lambda_i\} \;\forall\; i$.
We then reconstruct the matrix using these ``rectified'' eigenvalues and set it to be $\widebar{\Sigma}_{XY|M} \coloneqq V \widebar{\Lambda} V^\T$, where $\widebar{\Lambda} = \mathrm{diag}(\widebar{\lambda}_i)$.

Now, we need to find $\Sigma$ such that $g(\Sigma) = \widebar{\Sigma}_{XY|M}$.
However, $\widebar{\Sigma}_{XY|M}$ may not have identity matrices on its diagonal blocks, i.e., it might not correspond to a whitened channel.
We therefore whiten $\widebar{\Sigma}_{XY|M}$ as follows:
\begin{equation} \label{eq_sigxym_whitened}
	\widebar{\Sigma}_{XY|M}^{\textit{whitened}} = \left[\begin{array}{c c}
		\widebar{\Sigma}_{X|M}^{-1/2} & 0 \\
		0 & \widebar{\Sigma}_{Y|M}^{-1/2}
	\end{array}\right] \widebar{\Sigma}_{XY|M} \left[\begin{array}{c c}
		\widebar{\Sigma}_{X|M}^{-1/2} & 0 \\
		0 & \widebar{\Sigma}_{Y|M}^{-1/2}
	\end{array}\right],
\end{equation}
where $\widebar{\Sigma}_{X|M}$ and $\widebar{\Sigma}_{Y|M}$ are the diagonal blocks of $\widebar{\Sigma}_{XY|M}$.
Crucially, since the matrix multiplying $\widebar{\Sigma}_{XY|M}$ on either side is itself (the inverse square-root of) a covariance matrix (and hence positive semidefinite), $\widebar{\Sigma}_{XY|M}^{\textit{whitened}}$ is also positive semidefinite.

Now, the off-diagonal block of $\widebar{\Sigma}_{XY|M}^{\textit{whitened}}$ will satisfy $g(\cdot) = \widebar{\Sigma}_{XY|M}^{\textit{whitened}} \suff 0$.
This off-diagonal block forms the output of our projection operation and can be written as
\begin{equation}
	\Sigma_{X,Y|M}^{proj} = \widebar{\Sigma}_{X|M}^{-1/2} \widebar{\Sigma}_{XY|M} \widebar{\Sigma}_{Y|M}^{-1/2},
\end{equation}
which comes directly from equation~\eqref{eq_sigxym_whitened}.
\end{proof}

\subsection{Details of $\sim_G$-PID Optimization and RProp Implementation}

The optimization problem for the $\sim_G$-PID, using projected gradient descent with RProp (mentioned in Section~\ref{sec_computing_gpid}), is implemented as follows:
\begin{enumerate}[leftmargin=2em]
	\item Let $\Sigma \coloneqq \Sigma_{X,Y|M}$ be shorthand for the optimization variable, and let $\mathrm{Proj}(\cdot)$ represent the projection operator defined in Prop.~\ref{prop_gpid_obj_grad_proj}.
		Let $\Sigma^{(i)}$ represent the value of $\Sigma$ at iteration $i$ of the optimization.
		Initialize $\Sigma^{(0)} = \mathrm{Proj}(H_X H_Y^+)$, where $H_Y^+$ is the pseudoinverse of $H_Y$.
	\item Evaluate the objective and the gradient as defined in Prop.~\ref{prop_gpid_obj_grad_proj}, at the current value of $\Sigma^{(i)}$.
		Compute the sign of (each element of) the gradient,
		\begin{equation}
			\psi(\Sigma^{(i)}) \coloneqq \mathrm{Sgn}\bigl(\nabla f(\Sigma^{(i)})\bigr).
		\end{equation}
		When computing the objective and the gradient, add a small regularization term to the computation of $S^{-1}$ (as defined in Prop.~\ref{prop_gpid_obj_grad_proj}): $S^{-1} = \bigl((1 + \epsilon) I - \Sigma \Sigma^\T\bigr)^{-1}$, where we take $\epsilon = 10^{-7}$.
	\item Update:
		\begin{equation}
			\Sigma^{(i+1)} = \mathrm{Proj}\bigl(\Sigma^{(i)} - \alpha^i \eta^{(i)} \odot \psi(\Sigma^{(i)})\bigr),
		\end{equation}
		where $\eta^{(i)}$ is a time-varying learning rate vector of the same dimension as $\Sigma$, describing the learning rate for each element of $\Sigma$; $\odot$ represents an element-wise (or Hadamard) product between vectors; and $\alpha \coloneqq 0.999$ is a constant, which when raised to the power of $i$, imposes a slow overall decay of the learning rate to promote convergence.
	\item $\eta^{(0)}$ is initialized to $10^{-3}$ and $\eta^{(i)}$ is updated as follows:
		\begin{equation}
			\eta^{(i+1)} = \eta^{(i)} \odot \beta^{- \psi(\Sigma^{(i+1)}) \odot \psi(\Sigma^{(i)})},
		\end{equation}
		where $\beta \coloneqq 0.9$ is a constant that determines how fast the learning rate increases or decreases; and all operations are carried out element-wise.
		Note that when some element of the gradient changes in sign, that element of $- \psi(\Sigma^{(i+1)}) \odot \psi(\Sigma^{(i)})$ will be positive, resulting in a decrease in that element of $\eta^{(i)}$.
		On the other hand, if the sign of some element of the gradient remains the same, then the learning rate for that component will increase by a factor of $1 / 0.9$.
	\item Stop when the absolute differences between the current objective and the previous objectives from the last 20 consecutive iterations are all less than $10^{-6}$ (``patience''), or when the maximum number of iterations is exceeded (set to $10^4$ iterations).
\end{enumerate}


\section{Supplementary Material for Section~\ref{sec_examples}}
\label{sec_sm_examples}

First, observe that by subtracting equation~\eqref{eq_pidx} from equation \eqref{eq_pid}, we have
\begin{equation} \label{eq_cmi_ui_si}
	\begin{aligned}
		I(M ; (X, Y)) - I(M ; X) &= UI_Y + SI \\
		\Rightarrow \qquad\qquad\qquad I(M ; Y \given X) &= UI_Y + SI.
	\end{aligned}
\end{equation}
Similarly, subtracting equations~\eqref{eq_pid} and \eqref{eq_pidy}, we get that $I(M ; X \given Y) = UI_X + SI$.
These two equations hold in general, and will be used in what follows.

\subsection{Details and Derivations for Examples in Section~\ref{sec_examples}}

\textbf{Example~\ref{ex_gauss_unique_only}} (Pure uniqueness)\textbf{.}
\begin{align}
	M &\sim \mathcal N(0, 1) \\
	X &= M + N_X               & N_X, N_Y &\sim \text{ i.i.d. } \mathcal N(0, 1) \\
	Y &= N_Y                   & (N_X, N_Y) &\indept M
\end{align}
\begin{proof}[Derivation of PID values in Example~\ref{ex_gauss_unique_only}]
	\begin{alignat}{3}
		Y \indept M& &\quad &\Rightarrow &\quad I(M ; Y) &= 0 \\
		  &&&\Rightarrow &\quad UI_Y + RI &= 0 \\
		UI_Y, RI \geq 0& & &\Rightarrow &\quad UI_Y = RI &= 0 \\
		  &&&\Rightarrow &\quad UI_X &= I(M ; X) \\
		  &&&\Rightarrow &\quad SI &= I(M ; (X, Y)) - I(M ; X) \\
		  &&&&&= I(M ; X) - I(M ; X) = 0.
	\end{alignat}
\end{proof}

\textbf{Example~\ref{ex_gauss_redundant_only}} (Pure redundancy)\textbf{.}
\begin{align}
	M &\sim \mathcal N(0, 1) \\
	X &= M + N_X                   & N_X &\sim \mathcal N(0, 1) \\
	Y &= M + N_X                   & N_X &\indept M
\end{align}
\begin{proof}[Derivation of PID values in Example~\ref{ex_gauss_redundant_only}]
	\begin{alignat}{3}
		I(M ; X \given Y) &= 0 &\quad &\Rightarrow &\quad UI_X + SI &= 0 \\
		I(M ; Y \given X) &= 0 &\quad &\Rightarrow &\quad UI_Y + SI &= 0 \\
						  &    &      &\Rightarrow &\quad RI &= I(M ; (X, Y)) = I(M ; X).
	\end{alignat}
\end{proof}

\textbf{Example~\ref{ex_gauss_synergy_only}} (Pure synergy)\textbf{.}
\begin{align}
	M &\sim \mathcal N(0, 1) \\
	X &= M + N_X                   & N_X &\sim \mathcal N(0, \sigma^2) \\
	Y &= N_X                       & N_X &\indept M
\end{align}
\begin{proof}[Derivation of PID values in Example~\ref{ex_gauss_synergy_only}]
	\begin{alignat}{3}
		Y \indept M& &\quad &\Rightarrow &\quad I(M ; Y) &= 0 \\
		           & &      &\Rightarrow &\quad UI_Y + RI &= 0 \\
		UI_Y, RI \geq 0& &  &\Rightarrow &\quad UI_Y = RI &= 0 \\
		           & &      &\Rightarrow &\quad UI_X &= I(M ; X) \\
		           & &      &\Rightarrow &\quad SI &= I(M ; (X, Y)) - I(M ; X) \\
		           & &      &            &         &= \infty - I(M ; X) = \infty.
	\end{alignat}
\end{proof}

It should be noted that certain nuances have been omitted in discussing Examples~\ref{ex_gauss_unique_only}--\ref{ex_gauss_synergy_only} above.
For instance, in Example~\ref{ex_gauss_redundant_only}, $\Sigma_{XY|M}$ is rank deficient and hence non-invertible, which would be an issue when computing the objective in Equation~\eqref{eq_union_info_obj}.
Also, in Example~\ref{ex_gauss_synergy_only}, $I(M ; (X, Y)) = \infty$, however this could be corrected by adding some noise to either $X$ or $Y$ so that their difference is a noisy representation of $M$.

\begin{figure}[t]
	\centering
	\includegraphics[width=0.25\linewidth]{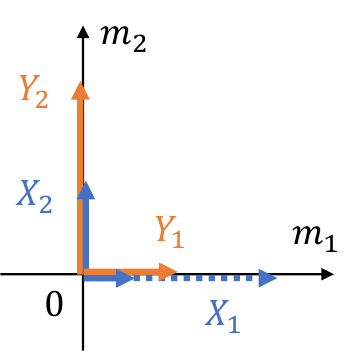}\qquad%
	\includegraphics[width=0.25\linewidth]{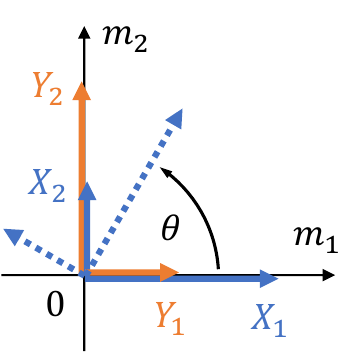}
	\caption{Diagrams explaining Examples~\ref{ex_gain_sweep} and \ref{ex_angle_sweep}. See Section~\ref{sec_gain_angle_cartoons} for details.}
	\label{fig_gain_angle_cartoons}
\end{figure}

\textbf{Example~\ref{ex_gauss_ui_ri}} (Unique and redundant information)\textbf{.}
\begin{alignat}{3}
	M &\sim \mathcal N(0, 1) \\
	X &= M + N_X              &  N_X &\sim \mathcal N(0, 1)              &  N_X &\indept M \\
	Y &= M + N_X + N_Y' \qquad& N_Y' &\sim \mathcal N(0, \sigma^2) \qquad& N_Y' &\indept (N_X, M)
\end{alignat}
\begin{proof}[Derivation of PID values in Example~\ref{ex_gauss_ui_ri}]
	Essentially, $X$ is a noisy representation of $M$, while $Y$ is a noisy representation of $X$.
	Since $M$---$X$---$Y$ forms a Markov chain, $I(M ; Y \given X) = 0$, and hence $UI_Y = SI = 0$.
	When $\sigma^2 = 0$, this example reduces to Example~\ref{ex_gauss_redundant_only} with only redundancy being present.
	For any finite non-zero value of $\sigma^2$, both $RI$ and $UI_X$ are present and are non-zero.
	Since $M$ is scalar, the redundancy for the $\sim$-PID is identical to the MMI-PID's redundancy~\cite{barrett2015exploration}:
	\begin{align}
		RI &= \min\{ I(M ; X), I(M ; Y) \} = I(M ; Y),
	\end{align}
	since $I(M ; Y) < I(M ; X)$, by the data processing inequality.
	At the limit when $\sigma^2 \to \infty$, $I(M ; Y) \to 0$, and therefore $RI \to 0$, while $UI_X$ will become equal to $I(M ; X)$.
\end{proof}

\textbf{Example~\ref{ex_gauss_ui_si}} (Unique and synergistic information)\textbf{.}
\begin{alignat}{2}
	M &\sim \mathcal N(0, 1) \\
	X &= M + N_X   \qquad& N_X, N_Y &\sim \mathcal N(0, \sigma^2), \qquad (N_X, N_Y) \indept M \\
	Y &= N_Y             & \mathrm{Corr}(N_X, N_Y) &= \rho
\end{alignat}
\begin{proof}[Derivation of PID values in Example~\ref{ex_gauss_ui_si}]
	When $\rho = 1$ and $\sigma^2 \to \infty$, this example reduces to Example~\ref{ex_gauss_synergy_only}, with only synergy being present.
	In general, $Y \indept M$, therefore, $I(M ; Y) = UI_Y + RI = 0$, meaning $UI_Y = RI = 0$.
	For any finite value of $\sigma^2$, $X$ will have some unique information about $M$ given by $UI_X = I(M ; X) > 0$.
	Correspondingly, $SI = I(M ; X \given Y) - UI_X = I(M ; X \given Y) - I(M ; X)$.
	When $\rho = 0$, $I(M ; X \given Y) = I(M ; X)$ and therefore $SI = 0$.
	As $\rho \to 1$, $X - Y \to M$; so the total mutual information $I(M ; (X, Y)) \to \infty$, driven by synergy growing unbounded, while the unique component remains finite at $I(M ; X)$.
\end{proof}

\begin{figure}[t]
	\centering
	\includegraphics[width=0.9\linewidth]{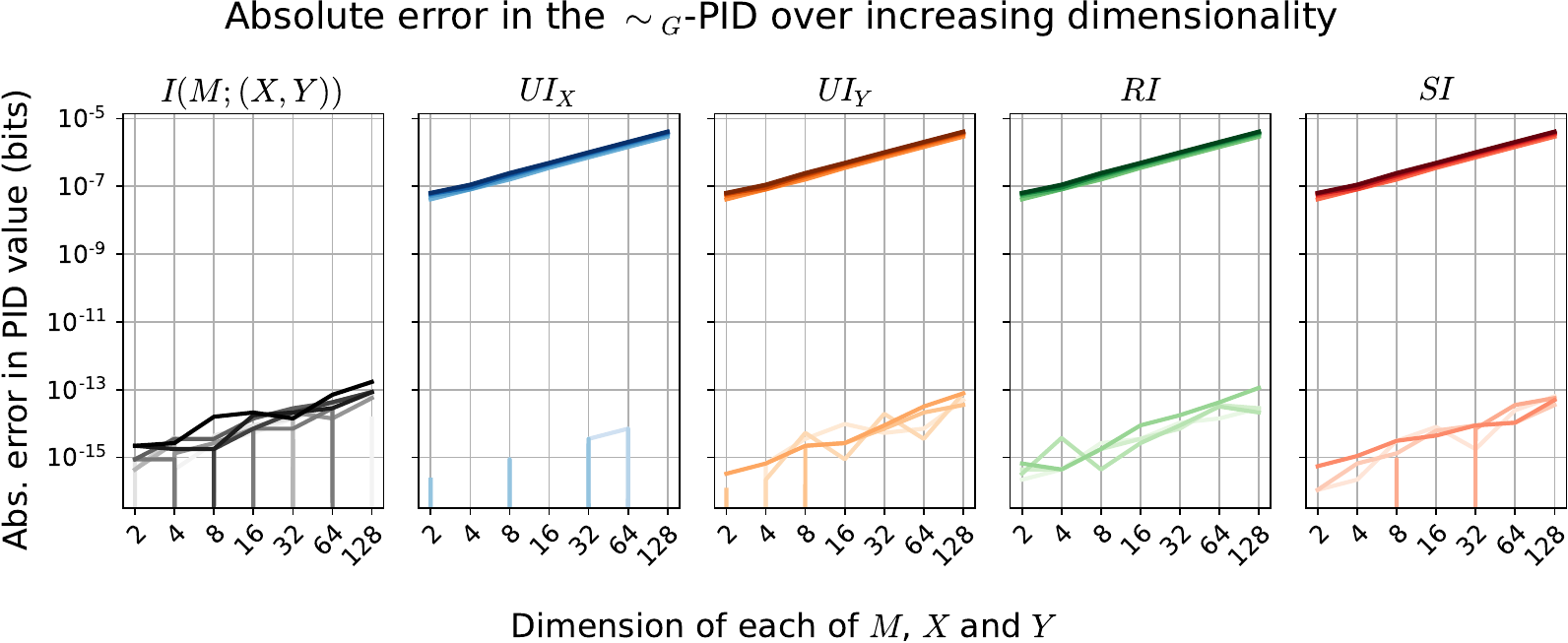}\\[12pt]
	\includegraphics[width=0.9\linewidth]{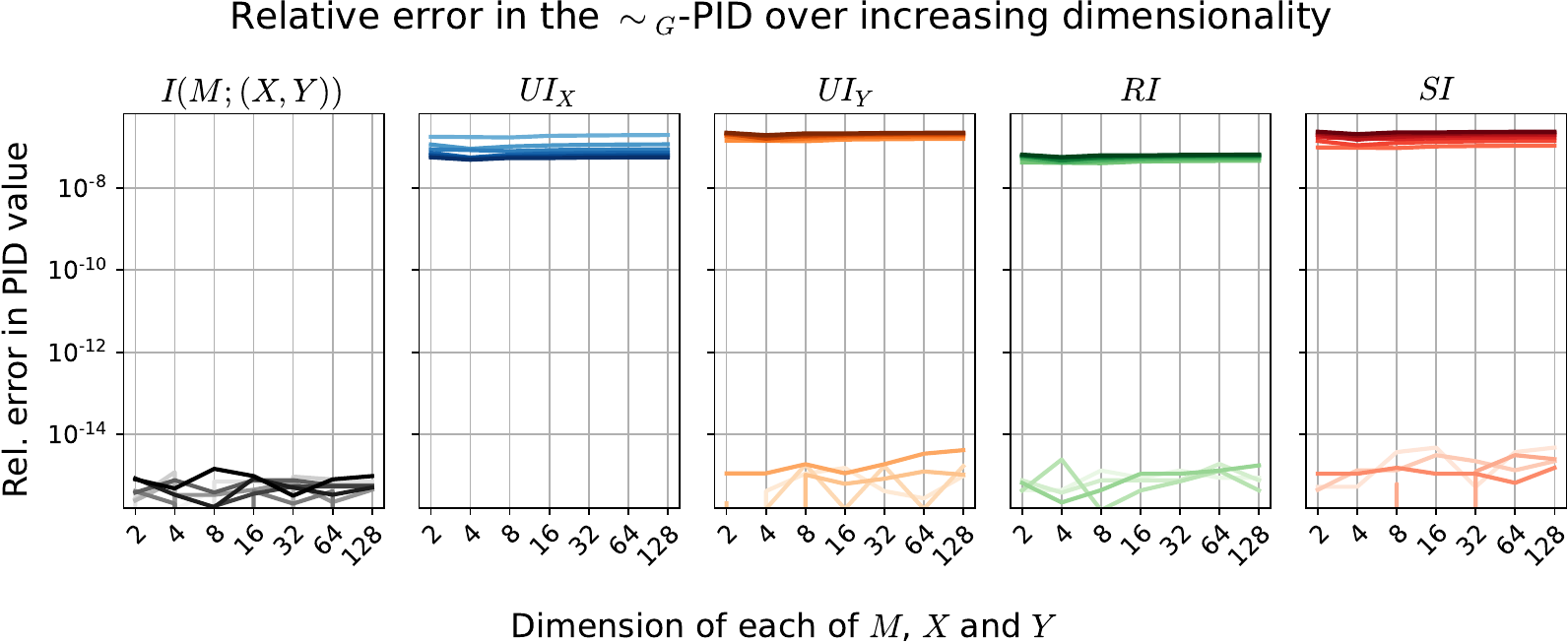}
	\caption{Absolute (top) and relative (bottom) errors in computed PID values from Example~\ref{ex_doubling}.}
	\label{fig_doubling_err}
\end{figure}

\textbf{Example~\ref{ex_gauss_ri_si}} (Redundant and synergistic information)\textbf{.}
\begin{alignat}{2}
	M &\sim \mathcal N(0, 1) \\
	X &= M + N_X   \qquad& N_X, N_Y &\sim \mathcal N(0, 1), \qquad (N_X, N_Y) \indept M \\
	Y &= M + N_Y         & \mathrm{Corr}(N_X, N_Y) &= \rho
\end{alignat}
\begin{proof}[Derivation of PID values in Example~\ref{ex_gauss_ri_si}]
	When $\rho = 1$, we once again reduce to Example~\ref{ex_gauss_redundant_only} with only redundancy.
	When $\rho < 1$, we cannot infer the PID values using Equation~\eqref{eq_pid} and non-negativity alone, since none of the individual mutual information values (or conditional mutual information values) go to zero.
	Instead, we can determine the redundancy using the MMI-PID since $M$ is scalar.
	Note that $I(M ; X)$ and $I(M ; Y)$ are both equal by symmetry, and thus equal to $RI$.
	This also implies that both $UI_X$ and $UI_Y$ must be equal to zero.
	As $\rho$ reduces, the two channels $X$ and $Y$ have noisy representations of $M$ with increasingly independent noise terms.
	Therefore, their average, $(X + Y) / 2$ will be more informative about $M$ than either one of them individually, meaning that $X$ and $Y$ jointly contain more information than any one individually.
	This extra information about $M$ is synergistic, given by $SI = I(M ; X \given Y)$, and increases as $\rho$ decreases, attaining its maximum possible value at $\rho = 0$.
\end{proof}

\subsection{Diagrams Explaining Examples~\ref{ex_gain_sweep} and \ref{ex_angle_sweep}}
\label{sec_gain_angle_cartoons}

Examples~\ref{ex_gain_sweep} and \ref{ex_angle_sweep} can be understood diagrammatically as shown in Fig.~\ref{fig_gain_angle_cartoons}(l) and Fig.~\ref{fig_gain_angle_cartoons}(r), respectively.
In both diagrams, we represent the two-dimensional plane describing $M$, with axes $m_1$ and $m_2$.
The colored vectors shown on this plane represent $H_X$ and $H_Y$, i.e., the gain with which $X$ and $Y$ represent each value of $M$.
For example, $Y_2$ captures only $M_2$, with a gain corresponding to its length.
The gains are directly representative of the signal-to-noise ratio (and hence the amount of information) in each variable, since the noise in each variable is i.i.d., with unit variance.
In Example~\ref{ex_gain_sweep}, the gain in $X_1$ is variable, while in Example~\ref{ex_angle_sweep}, the angle at which $X_1$ and $X_2$ sample $M_1$ and $M_2$ is variable.

\subsection{Absolute and Relative Errors in Example~\ref{ex_doubling}}

Figure~\ref{fig_doubling_err} shows how the absolute and relative errors in PID values scale with increasing dimensionality in Example~\ref{ex_doubling}.
The absolute errors are all less than $10^{-5}$, but increase in proportion to dimensionality.
The relative errors are all roughly constant, and remain under $10^{-6}$.


\section{Supplementary Material for Section~\ref{sec_estimation}}

\subsection{Implementation Details for Bias-correction}
\label{sec_bias_corr_desc}

We use a number of different setups based on sampling from random connectivity matrices for bias correction in Section~\ref{sec_estimation}.
All of these setups assume that $d_X = d_Y$.

The \textbf{both-unique}, \textbf{fully-redundant} and \textbf{high-synergy} setups have the following in common:
\begin{align}
	\Sigma_M &= I \\
	\Sigma_{X|M} &= I \\
	\Sigma_{Y|M} &= I \\
	\Sigma_{MXY} &= \left[\begin{array}{c c c}
		I & H_X^\T & H_Y^\T \\
		H_X & H_X H_X^\T + \Sigma_{X|M} & H_X H_Y^\T + \Sigma_W \\
		H_X & H_Y H_X^\T + \Sigma_{W}^\T & H_Y H_Y^\T + \Sigma_{Y|M}
	\end{array}\right].
\end{align}
Also, the elements of $H_X$ are either zero or one, $H_X(i,j) \sim$ i.i.d.~Ber(0.1).
These three setups differ in their definitions of $H_Y$ (the channel gain from $M$ to $Y$) and $\Sigma_W$ (which controls the extent of correlation between $X$ and $Y$).

The \textbf{both-unique} setup draws $H_Y(i,j) \sim$ i.i.d.~Ber(0.1), with all elements of $H_Y$ independent of the elements of $H_X$, and sets $\Sigma_W = 0$.

The \textbf{fully-redundant} setup is similar to Example~\ref{ex_gauss_ri_si}, by setting $H_Y = H_X$ and $\Sigma_W = 0.9 I$ (note that $\Sigma_W$ is square, since $d_X = d_Y$).
By keeping $\Sigma_W$ close to the identity matrix, we are effectively in the regime with high correlation $\rho$ in Example~\ref{ex_gauss_ri_si}.
This allows us to come close to emulating Example~\ref{ex_gauss_redundant_only}, without suffering from the issue of non-invertibility of $\Sigma_{XY|M}$, mentioned in Section~\ref{sec_sm_examples}.

The \textbf{high-synergy} setup is similar to Example~\ref{ex_gauss_ui_si}, by setting $H_Y = 0$ and $\Sigma_W = 0.8 I$.
As with the fully-redundant setup, by keeping $\Sigma_W$ close to the identity matrix, we are in the high-$\rho$ regime.
This allows us to come close to emulating Example~\ref{ex_gauss_synergy_only}, while not making the synergy or the total mutual information infinite.

The \textbf{zero-synergy} setup is similar to Example~\ref{ex_gauss_ui_ri}, and uses the following setup:
\begin{align}
	\Sigma_M &= \Sigma_{X|M} = I \\
	\Sigma_X &= H_X H_X^\T + \Sigma_{X|M} \\
	\Sigma_{Y|X} &= I \\
	\Sigma_{MXY} &= \left[\begin{array}{c c c}
		I & H_X^\T & H_Y^\T \\
		H_X & \Sigma_X & \Sigma_X H_Y'^\T \\
		H_X & H_Y' \Sigma_X^\T & H_Y' \Sigma_X H_Y'^\T + \Sigma_{Y|X}
	\end{array}\right].
\end{align}
Here, $H_X(i,j) \sim$ i.i.d.~Ber(0.1), while $H_Y = H_Y' H_X$, with $H_Y'(i, j) \sim$ i.i.d.~Ber(0.1), $H_Y' \indept H_X$.
Defined this way, $M$---$X$---$Y$ form a Markov chain, ensuring that $I(M ; X \given Y) = 0$, so that $SI = 0$ (Refer equation~\eqref{eq_cmi_ui_si}).

The \textbf{bit-of-all} setup is a combination of equal parts of the high-synergy and zero-synergy setups.
The variables $X$ and $Y$ are swapped in the zero-synergy setup, so that both $X$ and $Y$ can have some unique information.

\begin{remark}[Rectification]
	In practice, we observed that the bias correction procedure prescribed in Definition~\ref{def_bias_corr_union_info} could lead to negative values for certain PID quantities.
	This occurred because the bias-corrected union information was not guaranteed to satisfy certain bounds, which we enforce below.
	To prevent the occurrence of negative PID values after bias-correction, we require a form of rectification:
	\begin{align}
		\widetilde{I_G^\cup}\Big\vert_{\text{rect}}^{(1)} &\coloneqq \max\Bigl\{\widetilde{I_G^\cup}\big\vert_{\text{bias-corr}} \;,\; \hat I(M ; X)\big\vert_{\text{bias-corr}} \;,\; \hat I(M ; Y)\big\vert_{\text{bias-corr}}\Bigr\} \\
		\widetilde{I_G^\cup}\Big\vert_{\text{rect}}^{(2)} &\coloneqq \min\Bigl\{\widetilde{I_G^\cup}\big\vert_{\text{rect}}^{(1)} \;,\; \hat I(M ; X)\big\vert_{\text{bias-corr}} + \hat I(M ; Y)\big\vert_{\text{bias-corr}} \;,\; \hat I(M ; (X, Y))\big\vert_{\text{bias-corr}} \Bigr\},
	\end{align}
	where $\hat I(\cdot)\vert_{\text{bias-corr}}$ represents a bias-corrected mutual information estimate.
	After the second rectification equation above, the union information is bounded from below by the individual (bias-corrected) mutual information values, and bounded from above by the sum of the individual mutual information values, and by the total mutual information.
\end{remark}

\subsection{Bias-correction Performance in Additional Setups and at Higher Dimensionality}

\begin{figure}[p]
	\centering
	\includegraphics[width=0.49\linewidth]{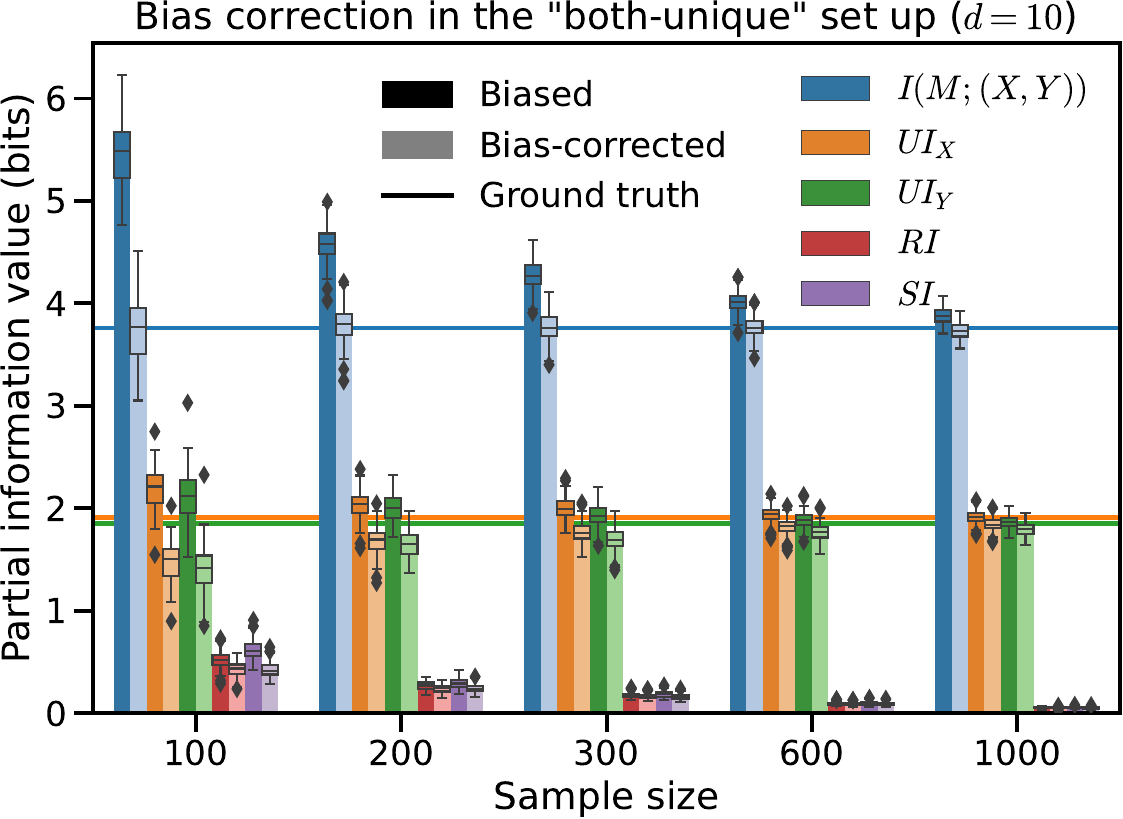}\hfill%
	\includegraphics[width=0.49\linewidth]{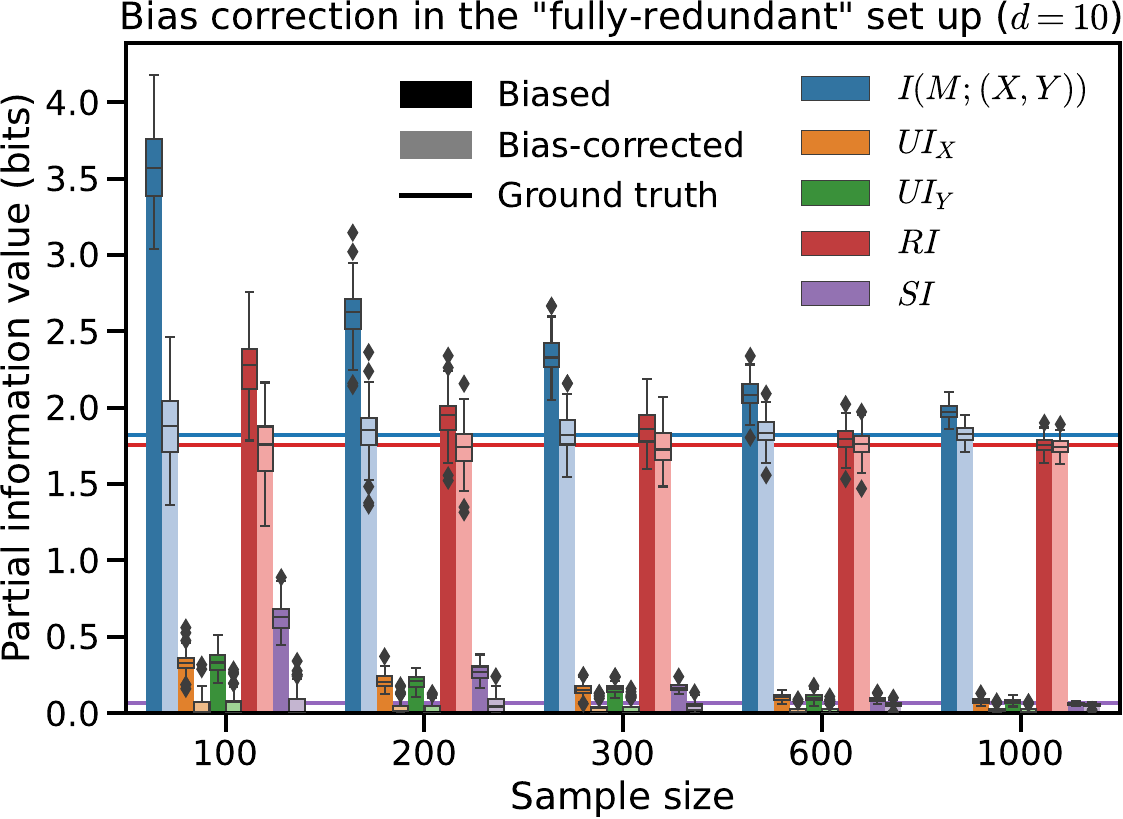}\\[12pt]
	\includegraphics[width=0.49\linewidth]{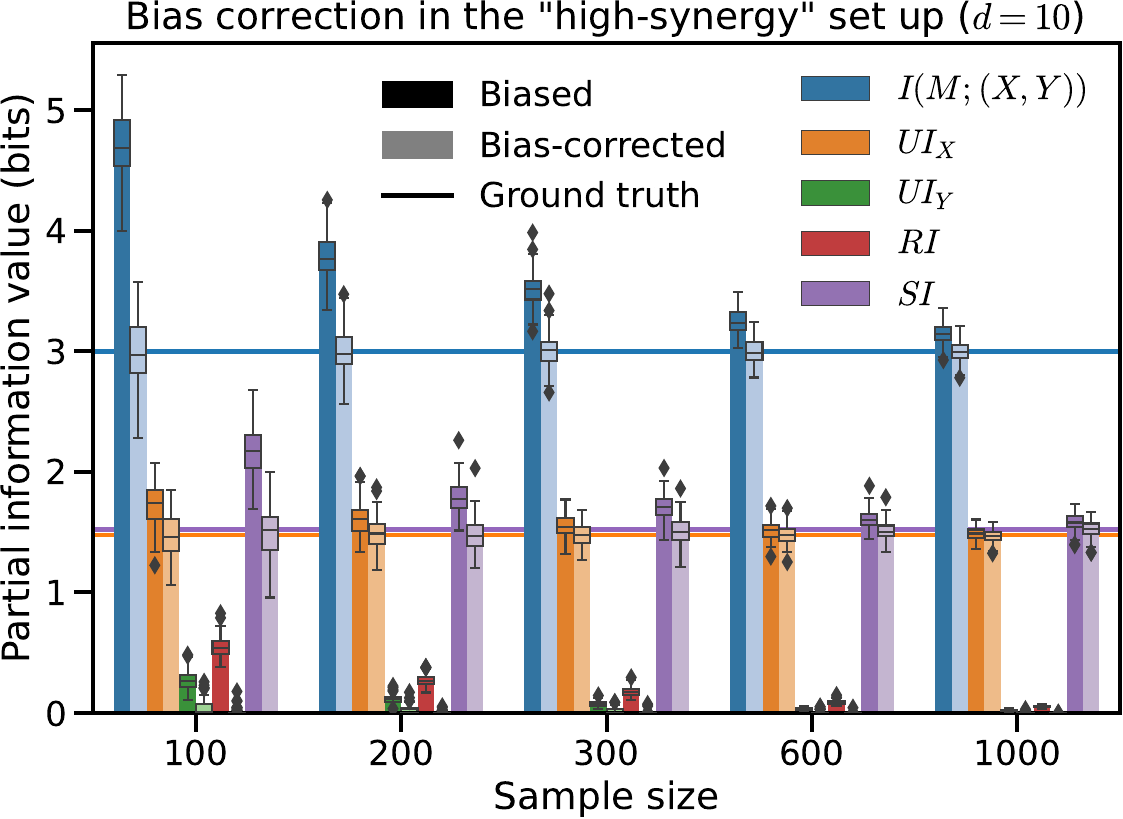}\hfill%
	\includegraphics[width=0.49\linewidth]{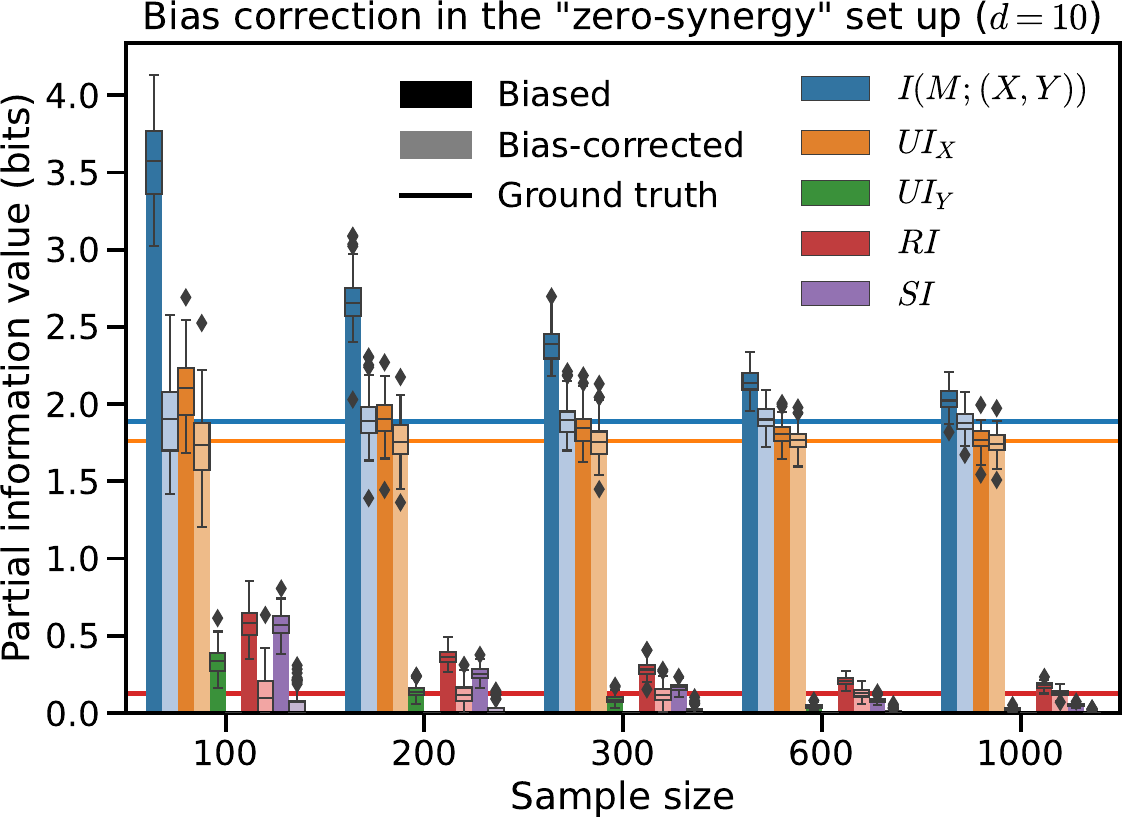}\\[12pt]
	\includegraphics[width=0.49\linewidth]{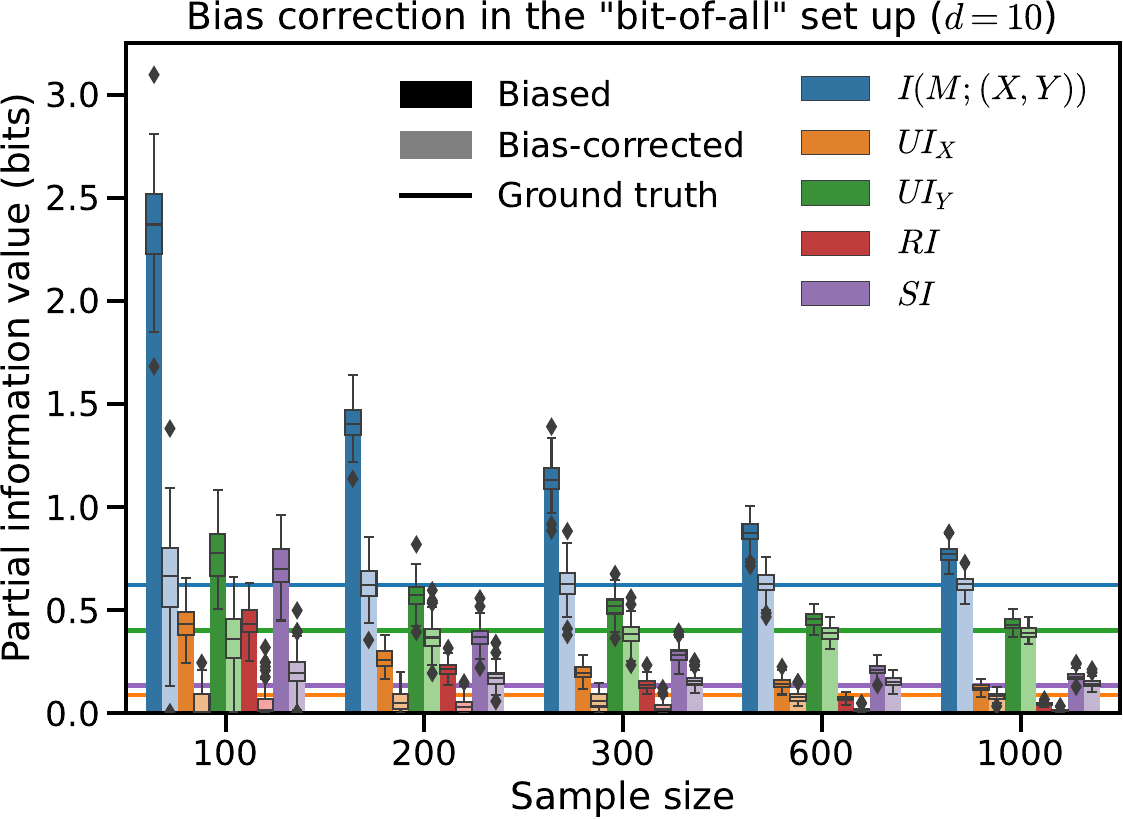}
	\caption{Bias correction for various setups described in Section~\ref{sec_bias_corr_desc} with $d \coloneqq d_M = d_X = d_Y = 10$.}
	\label{fig_bias_corr_extra_d10}
\end{figure}

\begin{figure}[p]
	\centering
	\includegraphics[width=0.49\linewidth]{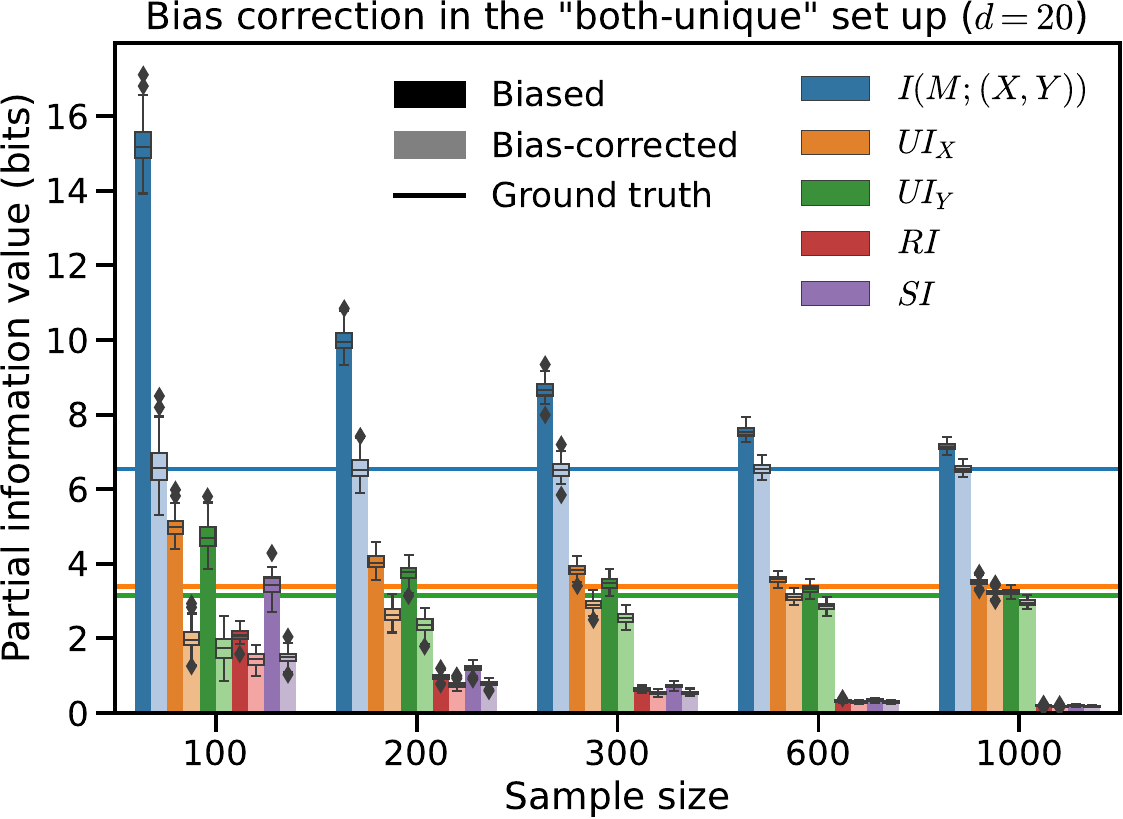}\hfill%
	\includegraphics[width=0.49\linewidth]{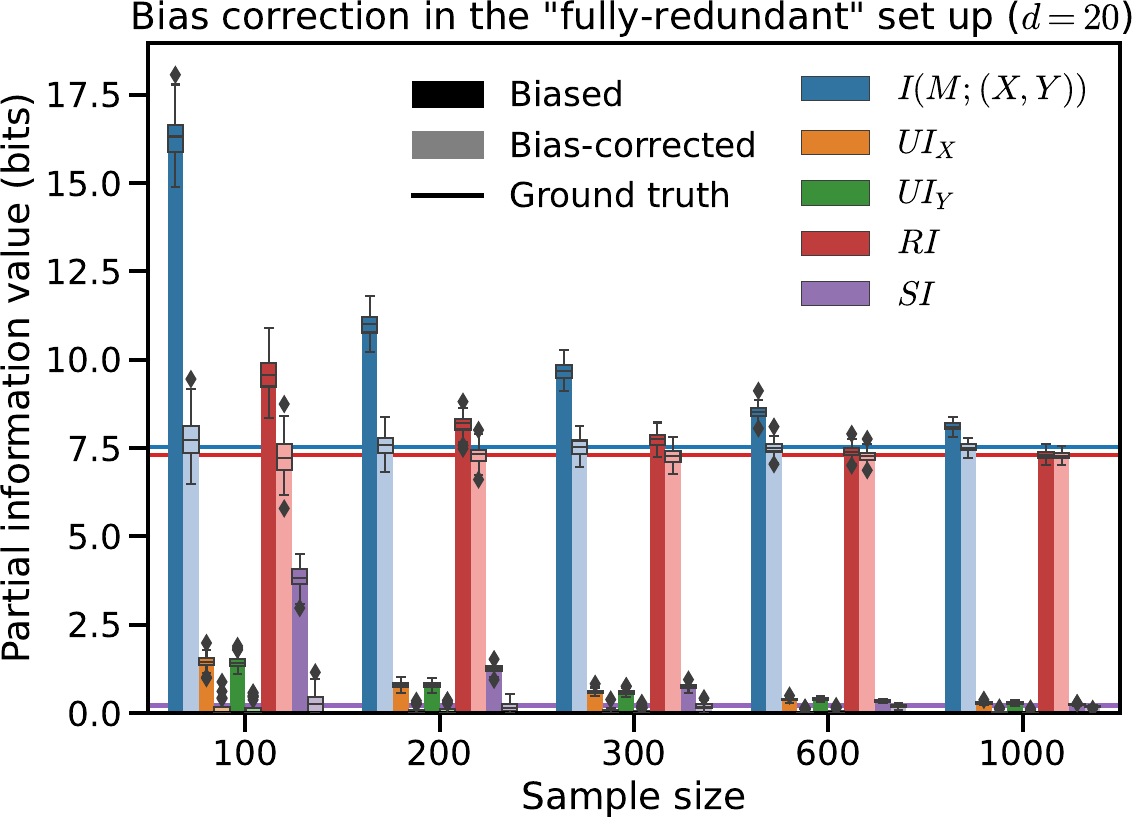}\\[12pt]
	\includegraphics[width=0.49\linewidth]{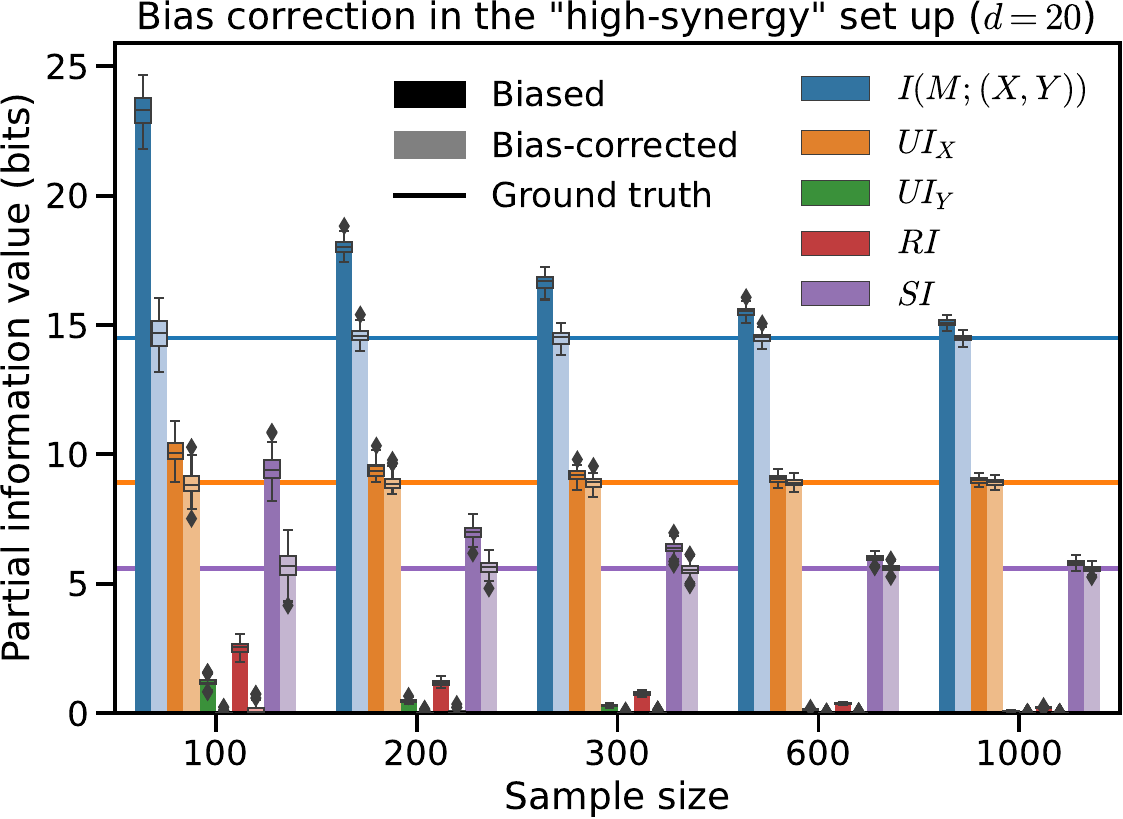}\hfill%
	\includegraphics[width=0.49\linewidth]{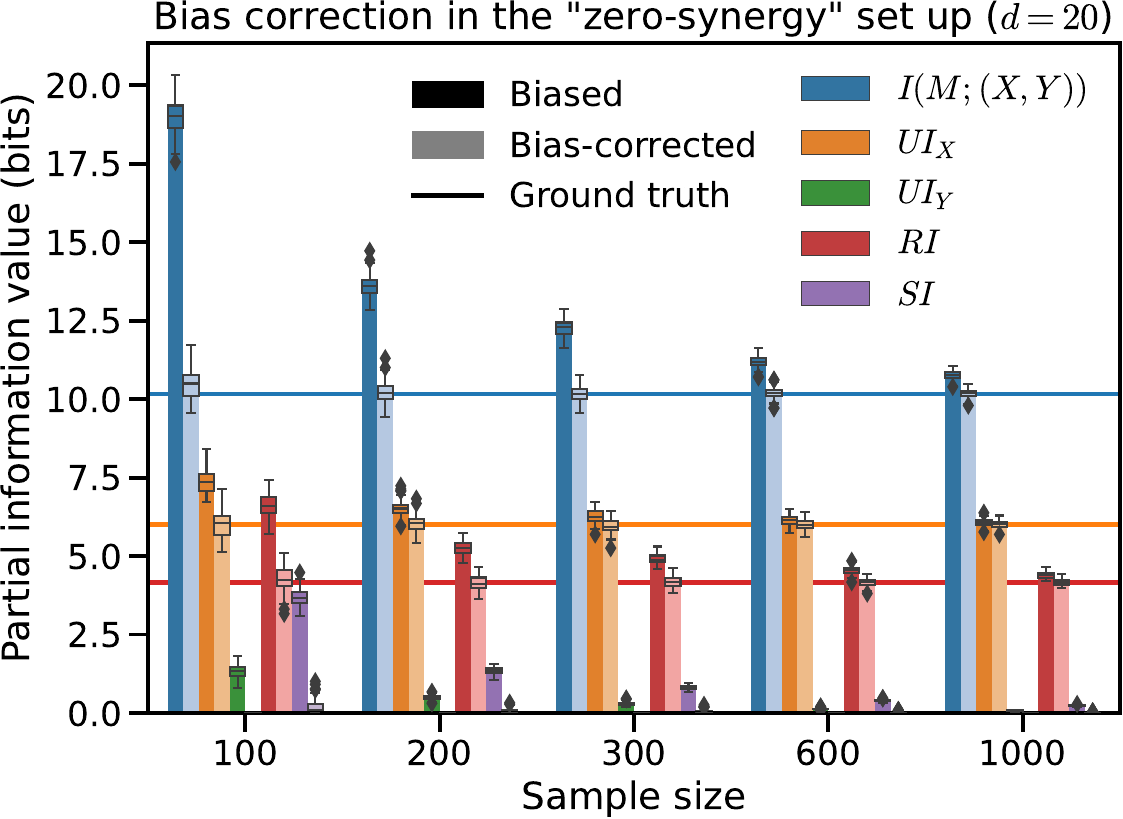}\\[12pt]
	\includegraphics[width=0.49\linewidth]{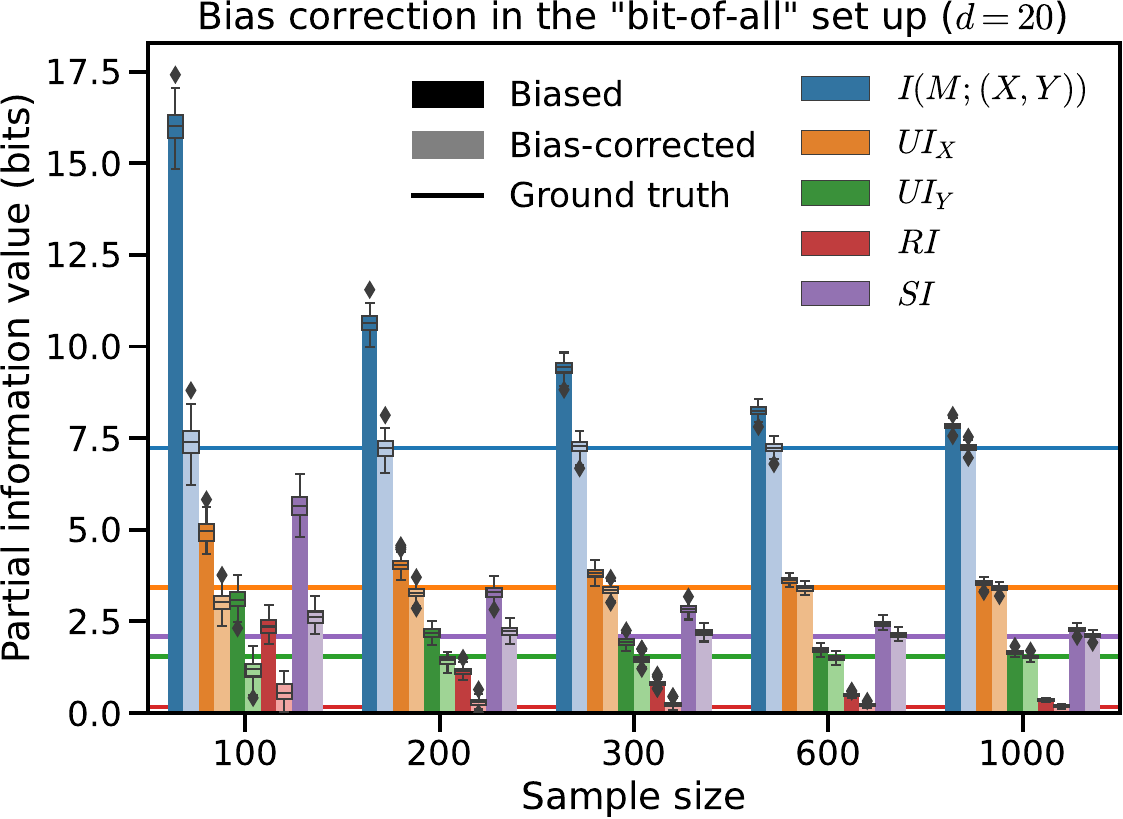}
	\caption{Bias correction for various setups described in Section~\ref{sec_bias_corr_desc} with $d \coloneqq d_M = d_X = d_Y = 20$.}
	\label{fig_bias_corr_extra_d20}
\end{figure}

Plots showing bias correction performance for all setups described in Section~\ref{sec_bias_corr_desc} are shown in Fig.~\ref{fig_bias_corr_extra_d10} for 10-dimensional, and in Fig.~\ref{fig_bias_corr_extra_d20} for 20-dimensional $M$, $X$ and $Y$.

Of the setups we examine, only the case with both $X$ and $Y$ having purely unique information appears to have somewhat poor performance, where our bias-correction method appears to over-correct the bias in unique information, while insufficiently correcting the bias in redundancy and synergy.

\subsection{A Preliminary Analysis of the Variance of PID Estimates}

\begin{figure}[p]
	\centering
	\includegraphics[width=0.49\linewidth]{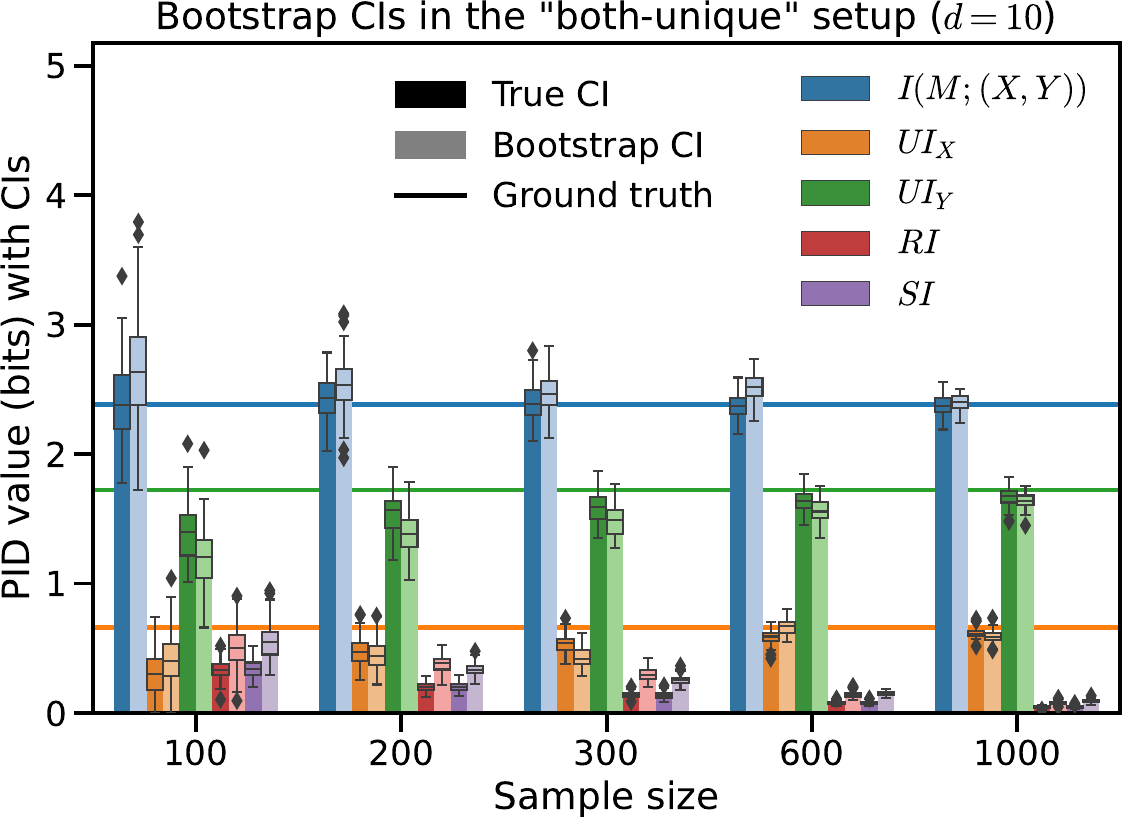}\hfill%
	\includegraphics[width=0.49\linewidth]{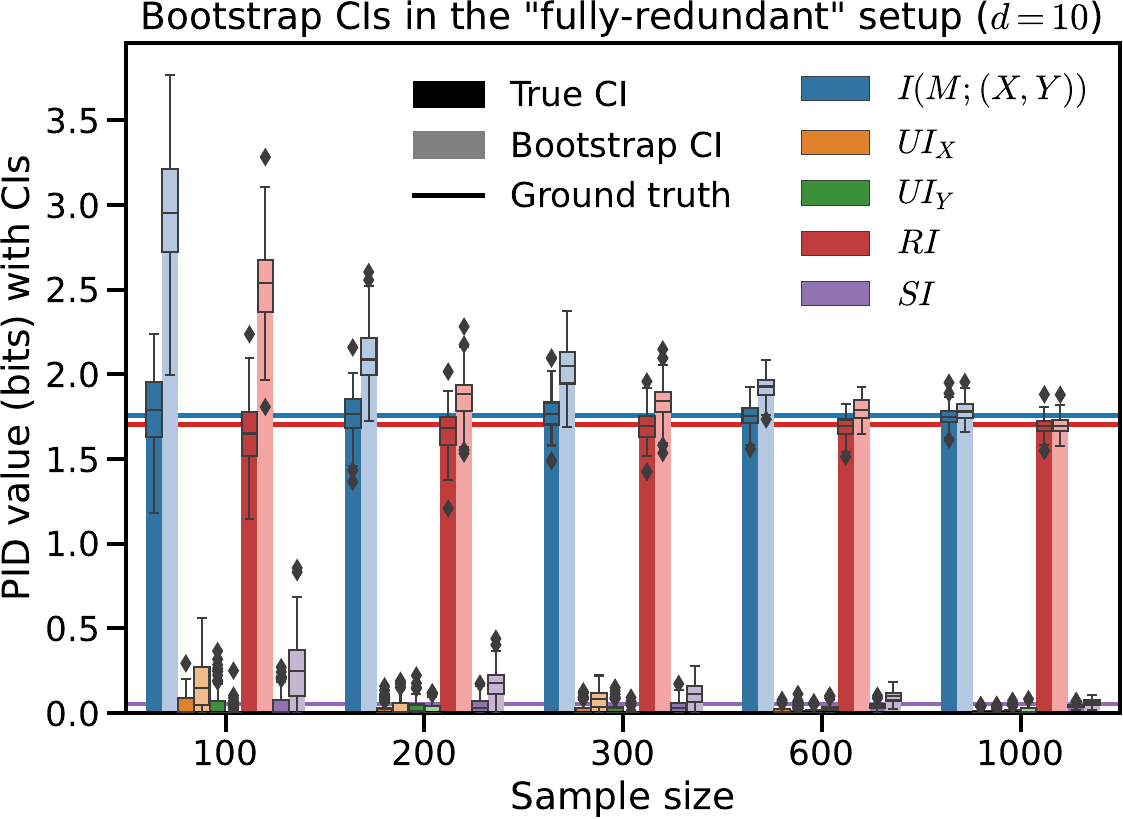}\\[12pt]
	\includegraphics[width=0.49\linewidth]{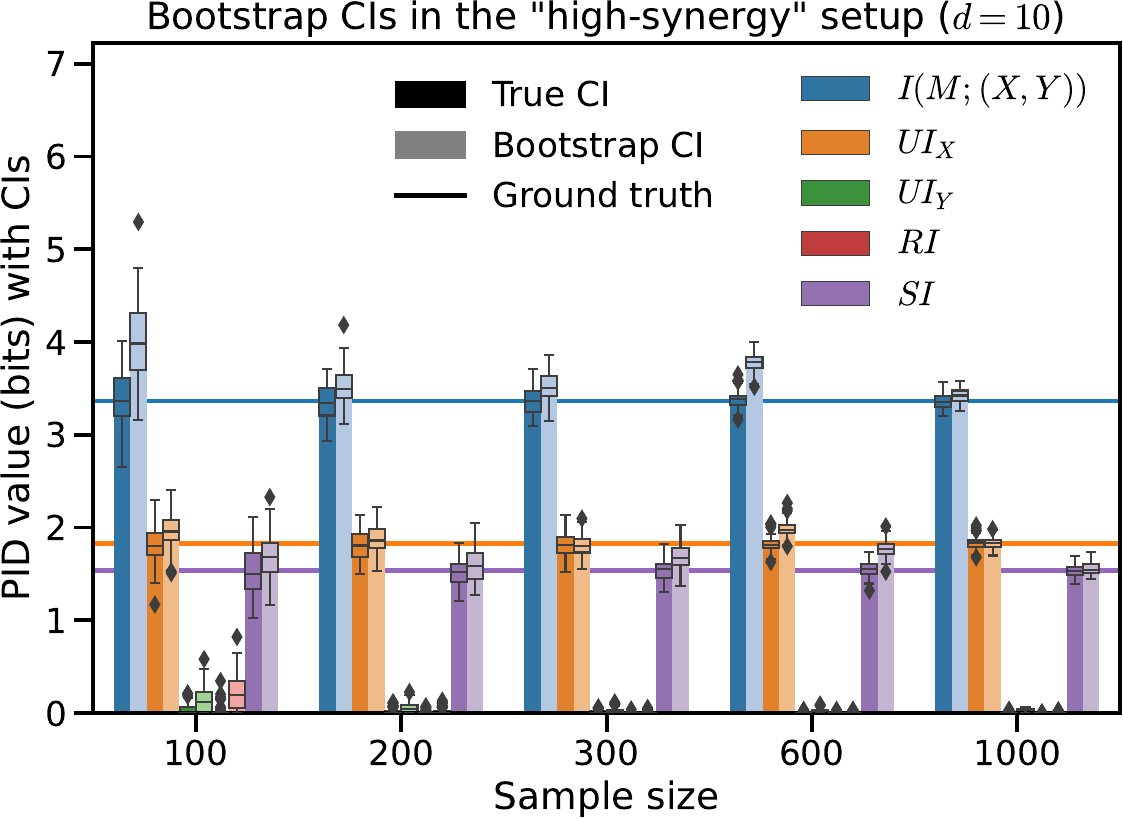}\hfill%
	\includegraphics[width=0.49\linewidth]{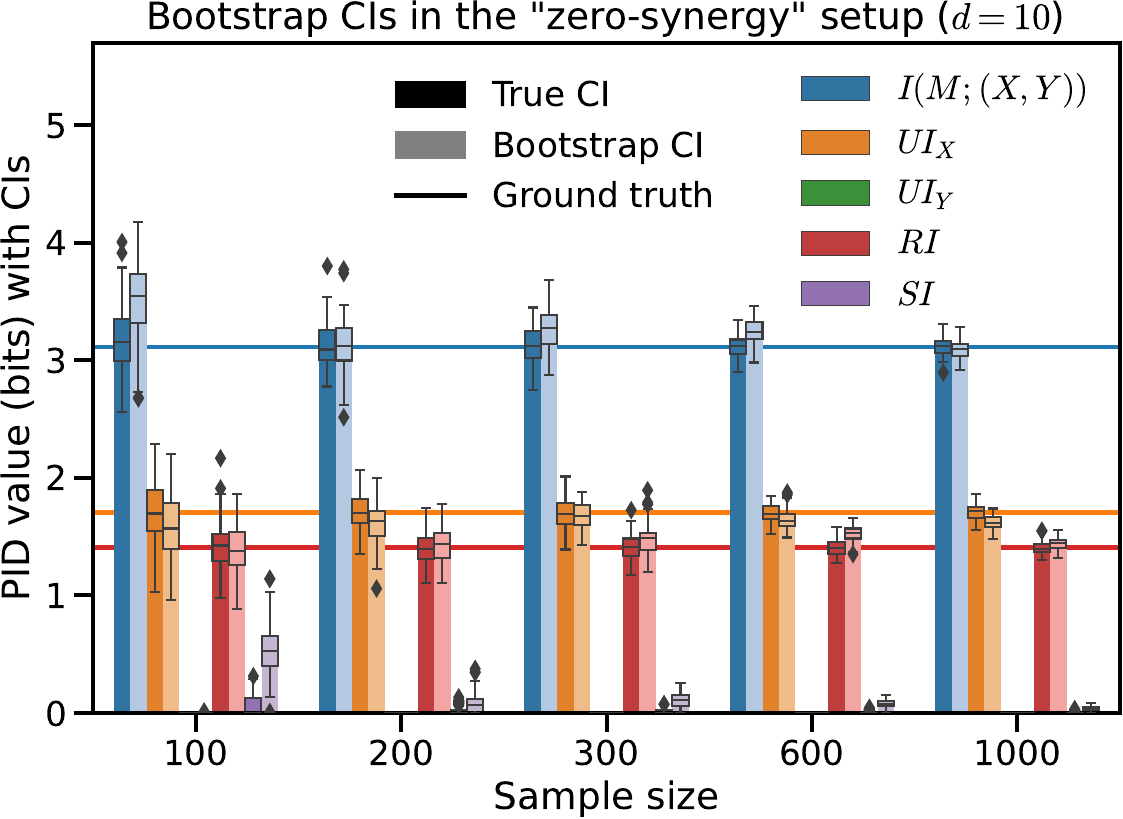}\\[12pt]
	\includegraphics[width=0.49\linewidth]{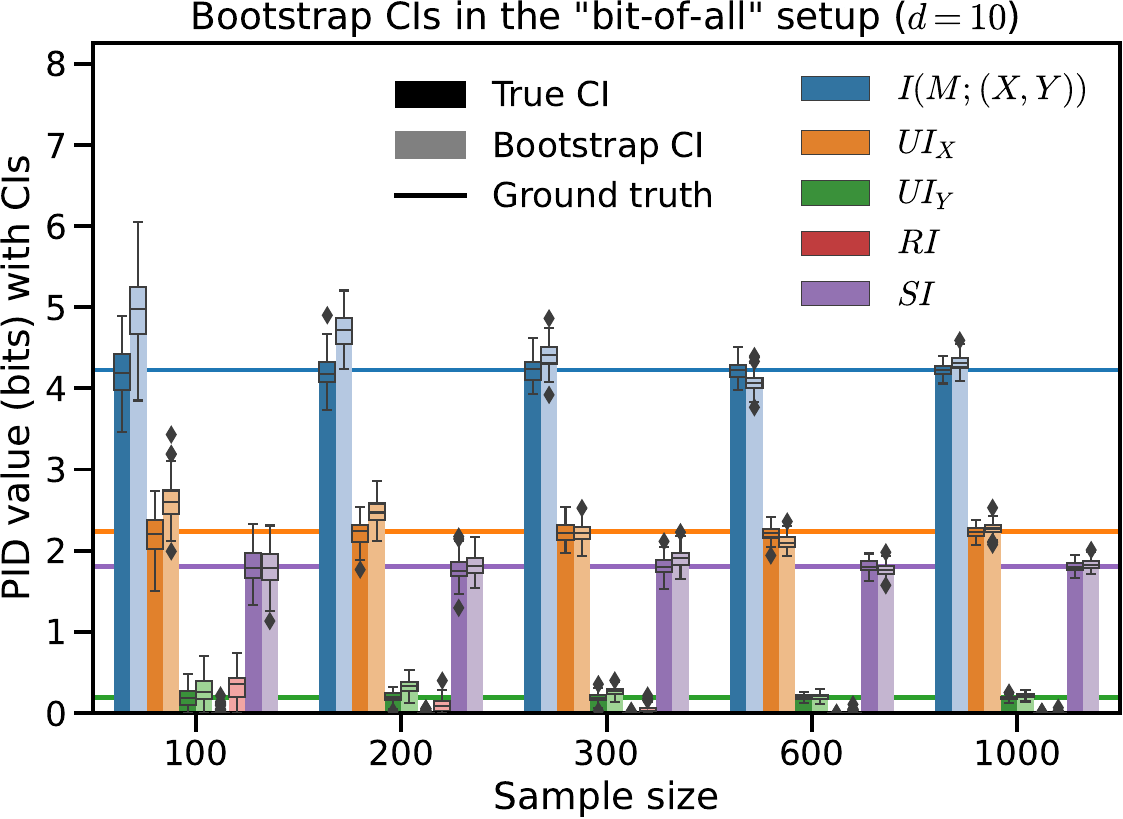}
	\caption{Bootstrap ``confidence intervals'' for various setups described in Section~\ref{sec_bias_corr_desc} with $d \coloneqq d_M = d_X = d_Y = 10$.
	Note that these are not true confidence intervals, but box-plot representations of the true variance of the estimator (over 100 runs) and the bootstrap estimate of the estimate's variance (from 100 bootstrap resamplings of a single random sample).}
	\label{fig_bootstrap_ci_d10}
\end{figure}

\begin{figure}[p]
	\centering
	\includegraphics[width=0.49\linewidth]{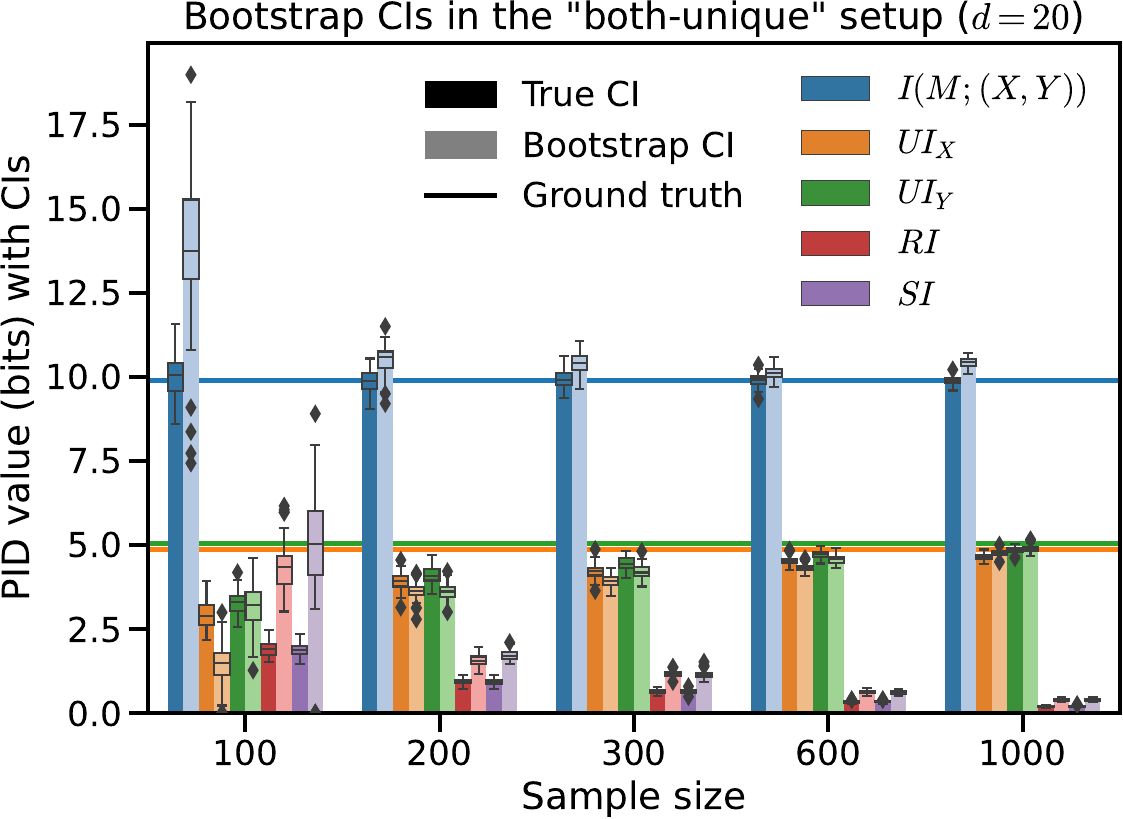}\hfill%
	\includegraphics[width=0.49\linewidth]{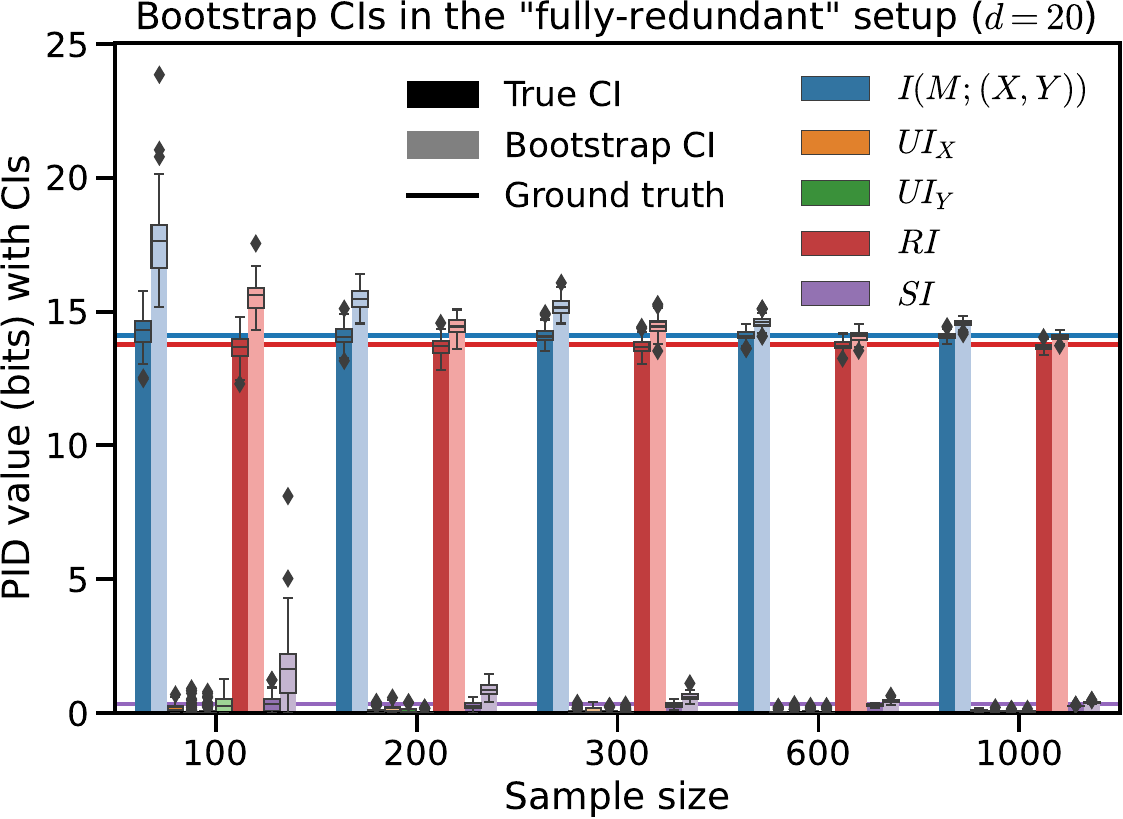}\\[12pt]
	\includegraphics[width=0.49\linewidth]{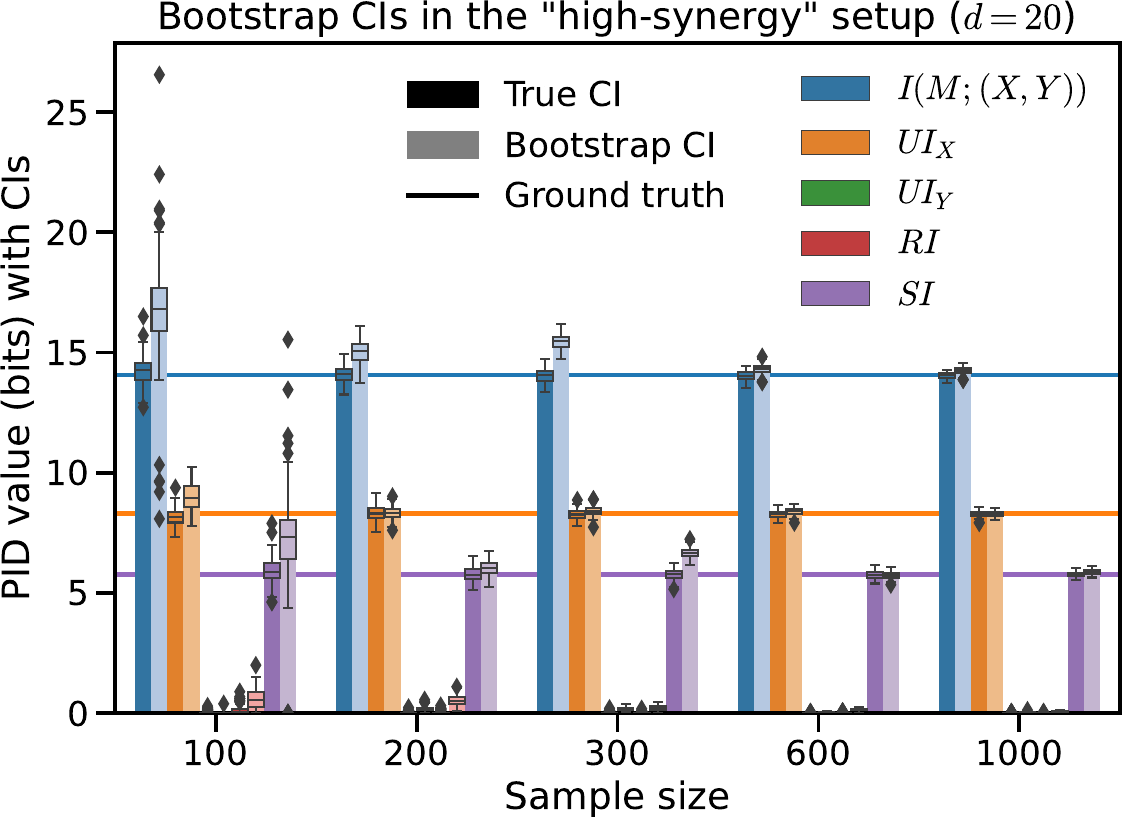}\hfill%
	\includegraphics[width=0.49\linewidth]{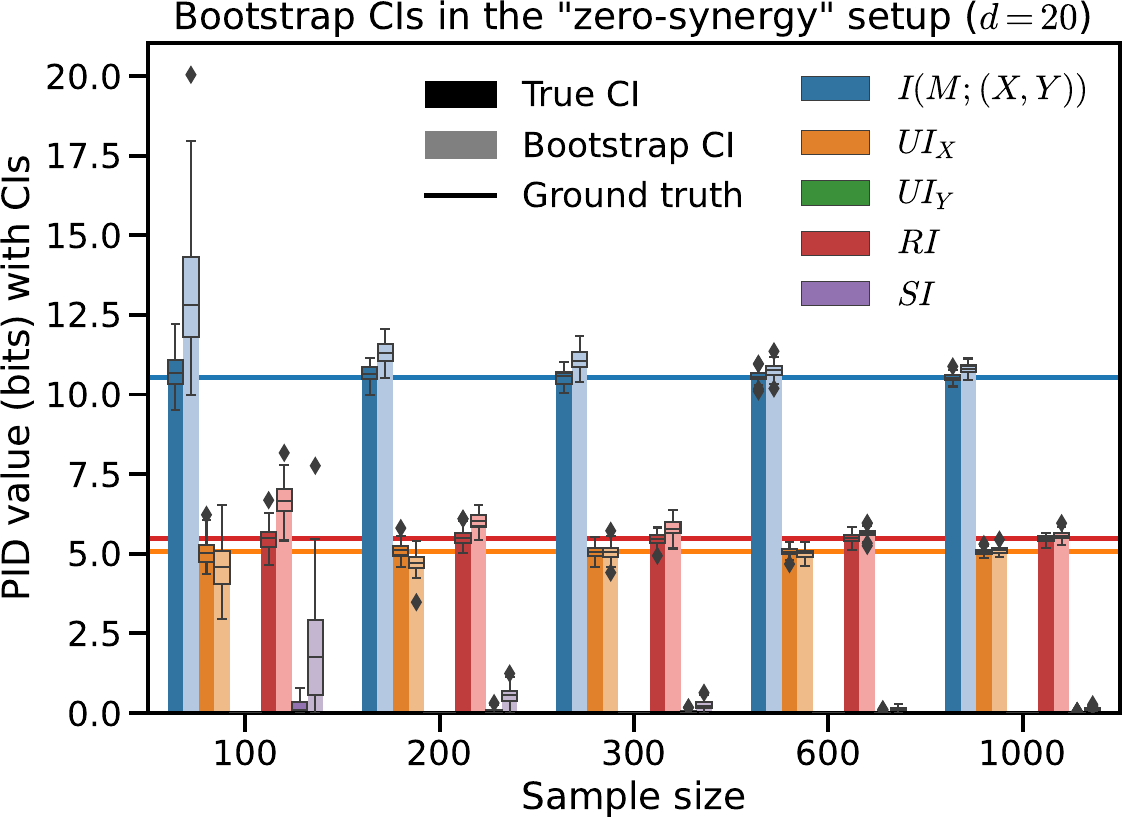}\\[12pt]
	\includegraphics[width=0.49\linewidth]{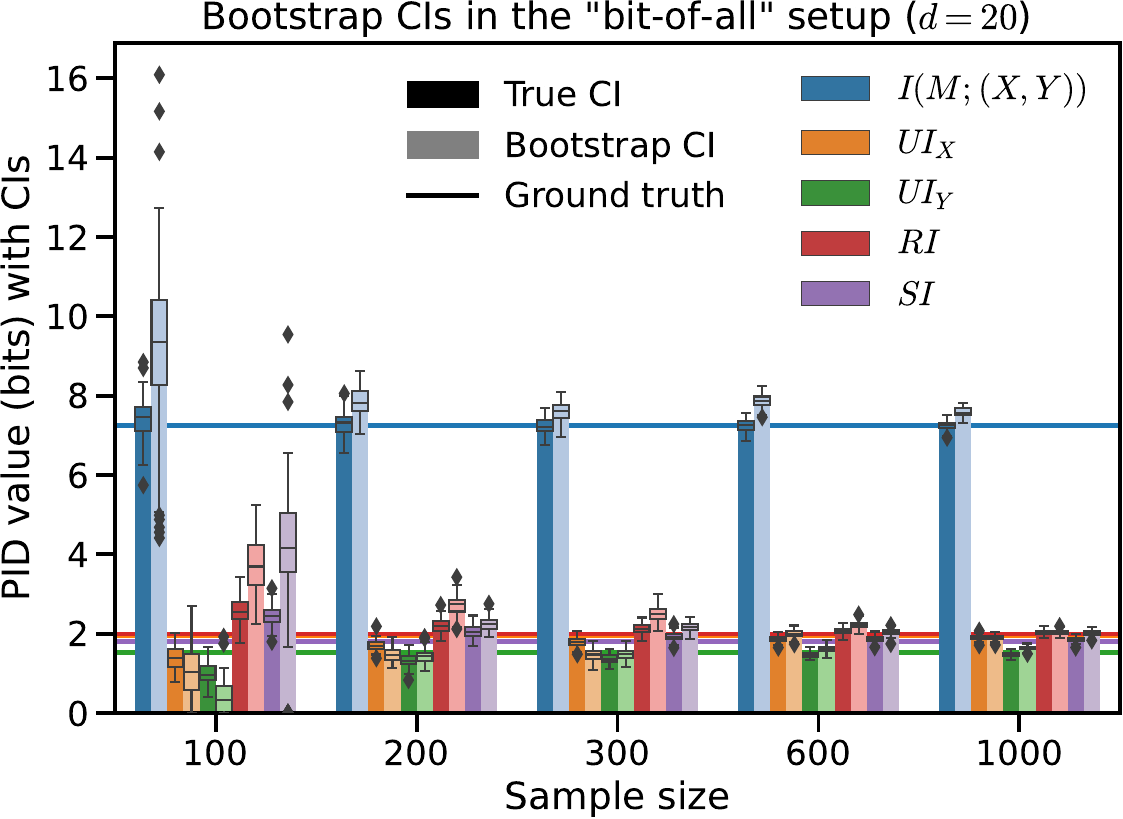}
	\caption{Bootstrap ``confidence intervals'' for various setups described in Section~\ref{sec_bias_corr_desc} with $d \coloneqq d_M = d_X = d_Y = 20$.
	Note that these are not true confidence intervals, but box-plot representations of the true variance of the estimator (over 100 runs) and the bootstrap estimate of the estimate's variance (from 100 bootstrap resamplings of a single random sample).}
	\label{fig_bootstrap_ci_d20}
\end{figure}

In Figures~\ref{fig_bootstrap_ci_d10} and \ref{fig_bootstrap_ci_d20}, we present a preliminary analysis of the variance of our PID estimates using bootstrap.
The figures represent the true distribution of the PID estimates over multiple sample draws, or over multiple bootstrap sample draws, in the form of box plots.
In what follows, we colloquially refer to these box plots as ``confidence intervals''.
The true ``confidence intervals'' were estimated using 100 runs of bias-corrected PID estimates, i.e., by drawing 100 different samples, each of size $n$.
The bootstrap ``confidence intervals'' were estimated using 100 bootstrap samples that were \emph{resampled} from a \emph{single} randomly drawn sample of size $n$.

When correcting for bias in the PID estimates on bootstrap samples, we use the number of \emph{unique} data points in each bootstrap sample in place of $n$ (refer Corollary~\ref{cor_imxy_bias}), rather than the total sample size.
This leads to more stable bootstrap-PID estimates.

The quality of the bootstrap ``confidence interval'' is affected greatly by the quality of the individual sample used for bootstrap resampling.
Nevertheless, we observe a reasonable degree of qualitative agreement between the true ``confidence interval'' and the bootstrap ``confidence interval'', particularly as the sample size increases.
Future work will assess confidence intervals with greater care, using well-defined metrics, and assess how well these confidence intervals are calibrated.


\section{Supplementary Material for Section~\ref{sec_neuroscience}}

\subsection{Details Regarding the Multivariate Poisson Spike-count Simulation}

We follow \ifarxiv{our previous paper}{} \cite{venkatesh2022partial}, where this analysis was first presented.
In this simulation, $M$ is two-dimensional, consisting of two independent and identically distributed Poisson random variables, $M_1$ and $M_2$.
$X$ and $Y$ are each generated through a linear combination of binomially thinning $M_1$ and $M_2$, along with some Poisson noise:
\begin{align}
	M_1, M_2 &\sim \mathrm{Poiss}(2) \\
	X &\sim \mathrm{Binom}(M_1, \alpha) + \mathrm{Binom}(M_2, 0.5) + \mathrm{Poiss}(1) \\
	Y &\sim \mathrm{Binom}(M_1, 0.5) + \mathrm{Binom}(M_2, 0.5) + \mathrm{Poiss}(1)
\end{align}

\subsection{Implementation Details of the Analysis Pipeline}

\begin{figure}[p]
	\centering
	\includegraphics[width=0.473\linewidth]{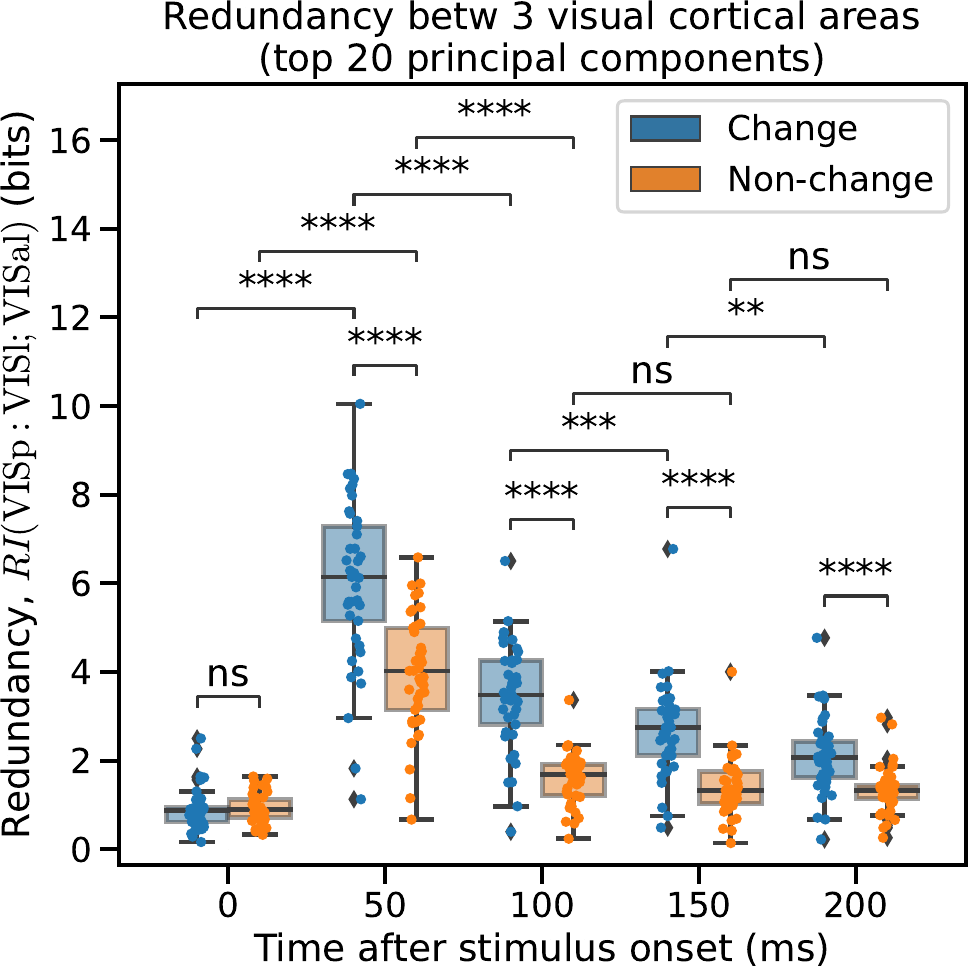}\hfill%
	\includegraphics[width=0.49\linewidth]{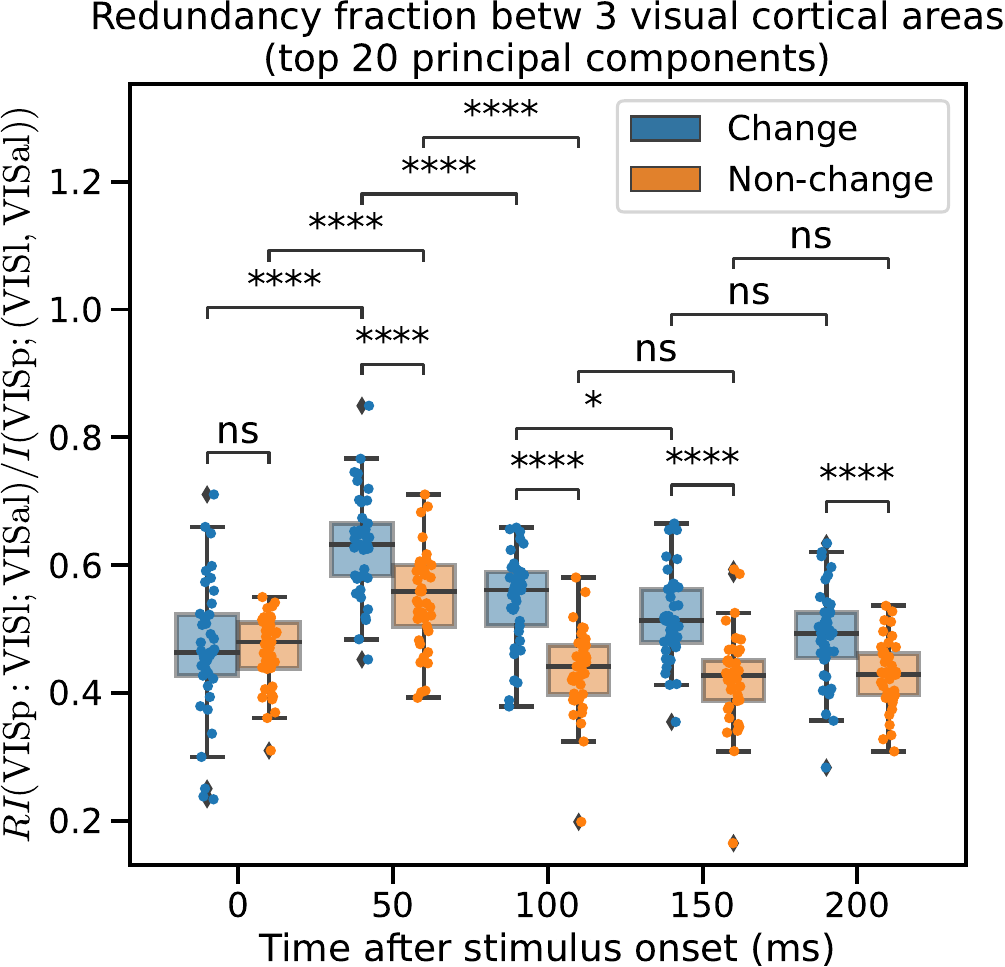}
	\caption{Redundancy about VISp activity between VISl and VISal, as a function of time, for flashes corresponding to an image change (blue) and flashes corresponding to a non-change (orange).
		Data points are across 42 mice.
		The plot on the left shows the raw redundancy in bits, while the plot on the right shows the redundancy normalized by the total mutual information.
	}
	\label{fig_vbn_ri_visal_20}
\end{figure}
\begin{figure}[p]
	\centering
	\includegraphics[width=0.473\linewidth]{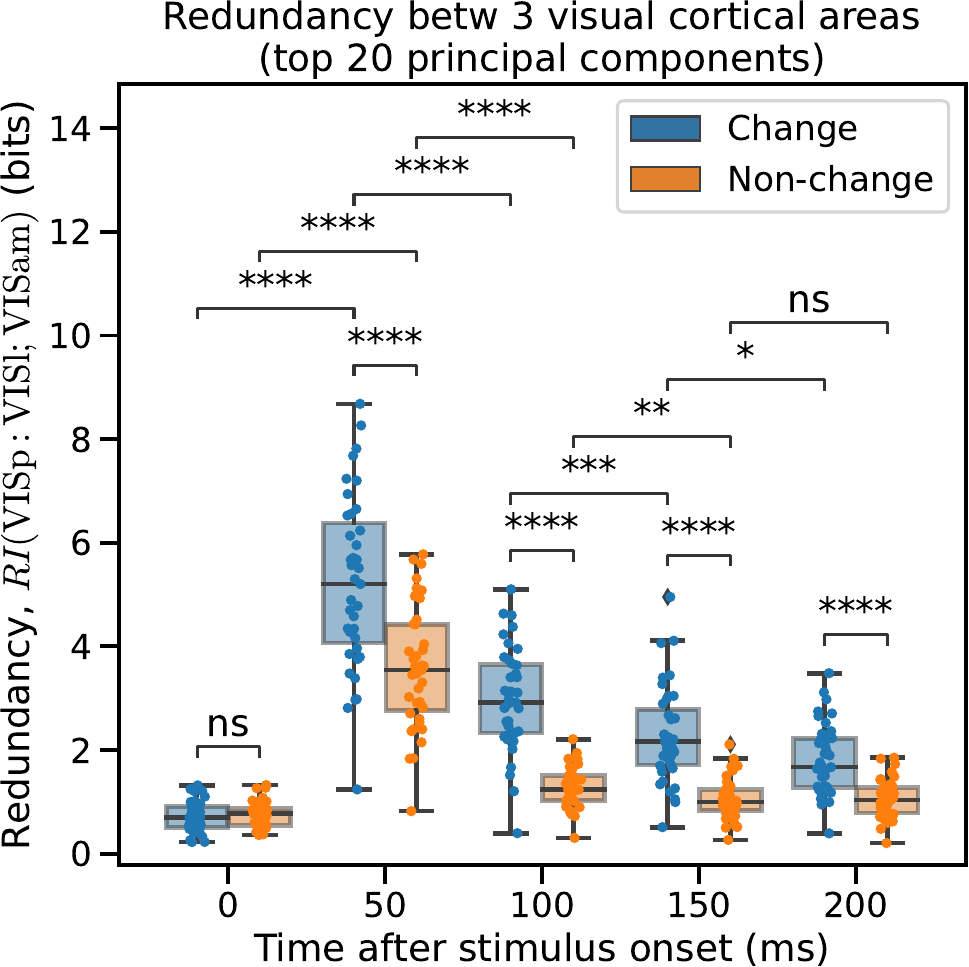}\hfill%
	\includegraphics[width=0.49\linewidth]{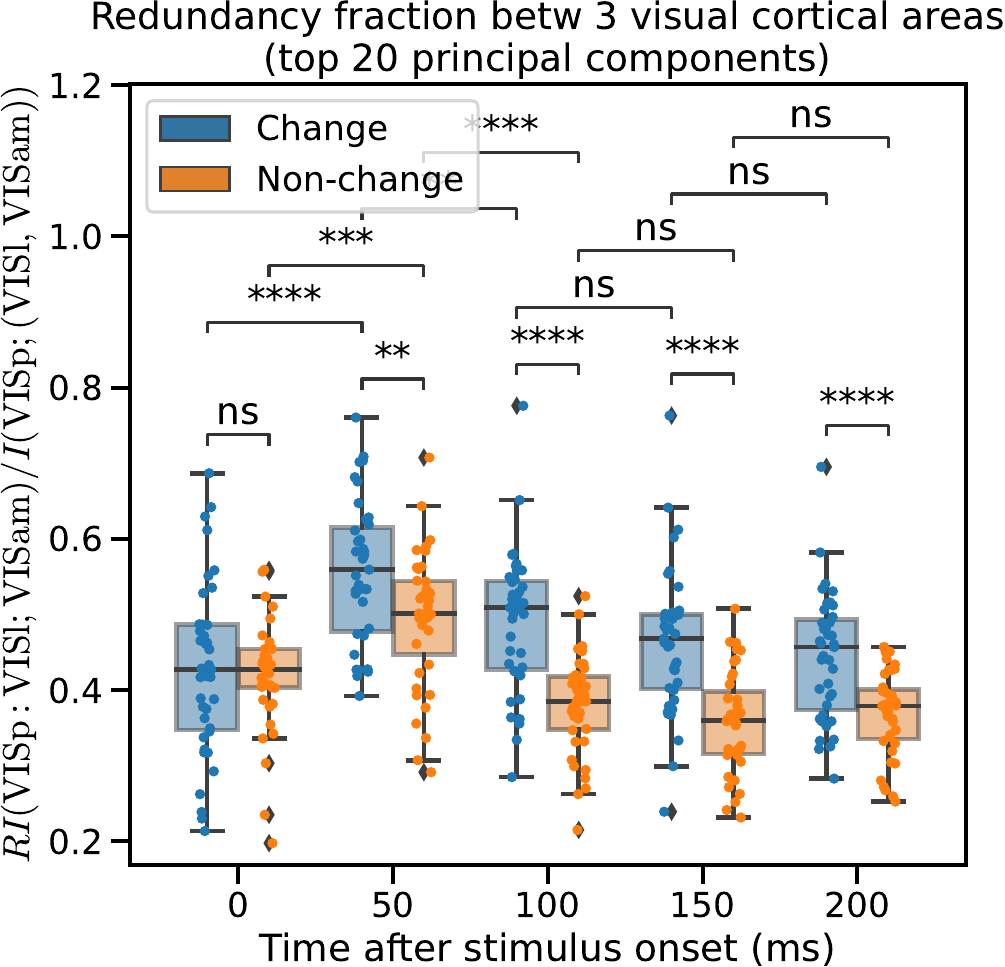}
	\caption{Redundancy about VISp activity between VISl and VISam, as a function of time, for flashes corresponding to an image change (blue) and flashes corresponding to a non-change (orange).
		Data points are across 40 mice.
		The plot on the left shows the raw redundancy in bits, while the plot on the right shows the redundancy normalized by the total mutual information.
	}
	\label{fig_vbn_ri_visam_20}
\end{figure}
\begin{figure}[p]
	\centering
	\includegraphics[width=\linewidth]{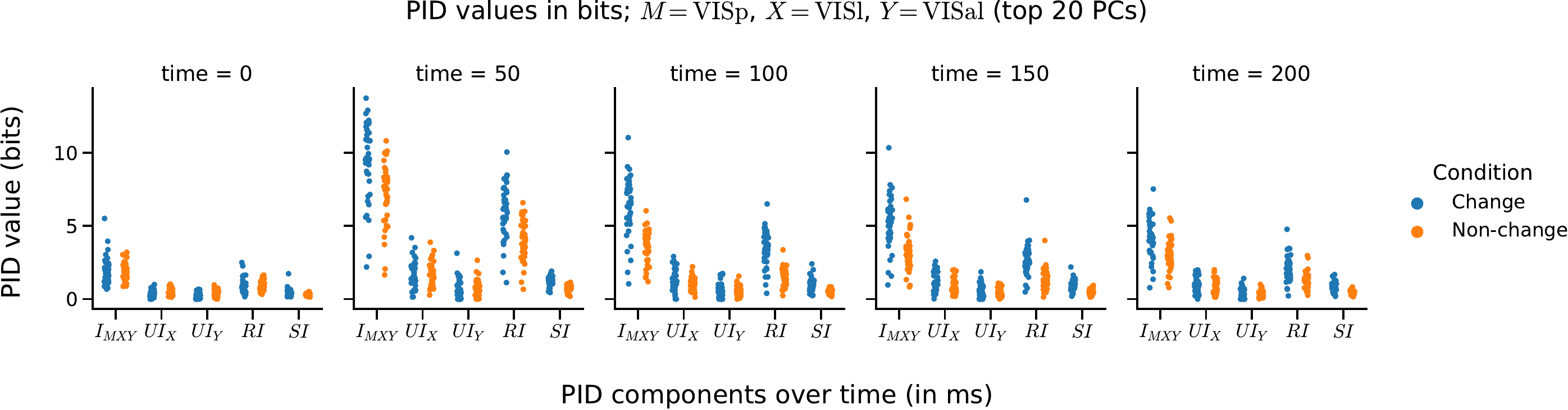}\\[12pt]
	\includegraphics[width=\linewidth]{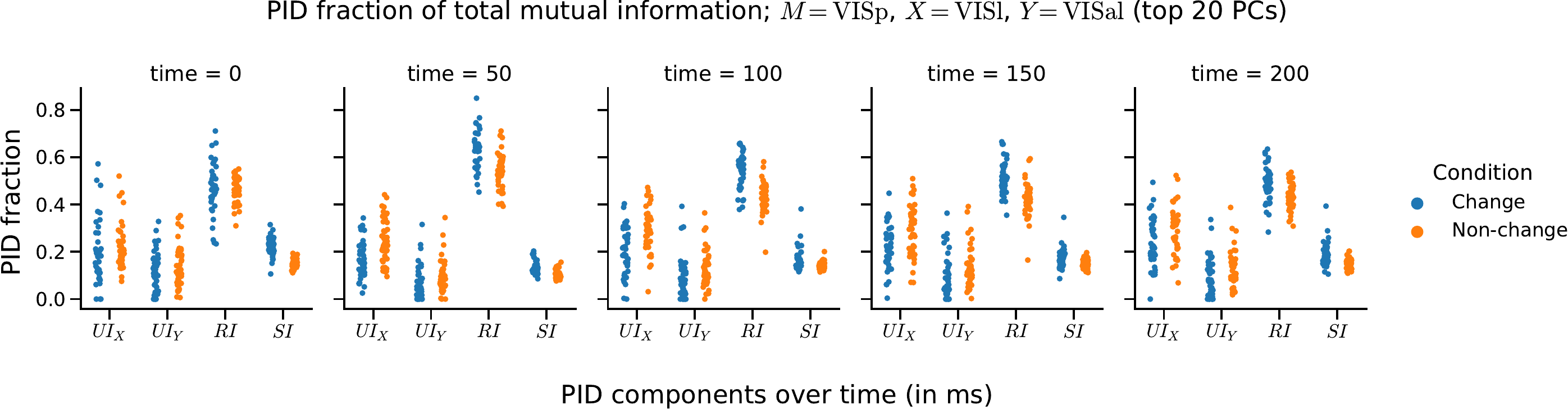}
	\caption{All PID components---about VISp activity, between VISl and VISal---in bits (top), and as a fraction of total mutual information (bottom), at various times after stimulus onset, for change and non-change flashes.
		Here, the label $I_{MXY}$ in the x-axis of the top plot refers to $I(M ; (X, Y))$.
		These plots show that redundancy is the primary driver of mutual information.
	}
	\label{fig_vbn_all_visal_20}
\end{figure}
\begin{figure}[p]
	\centering
	\includegraphics[width=\linewidth]{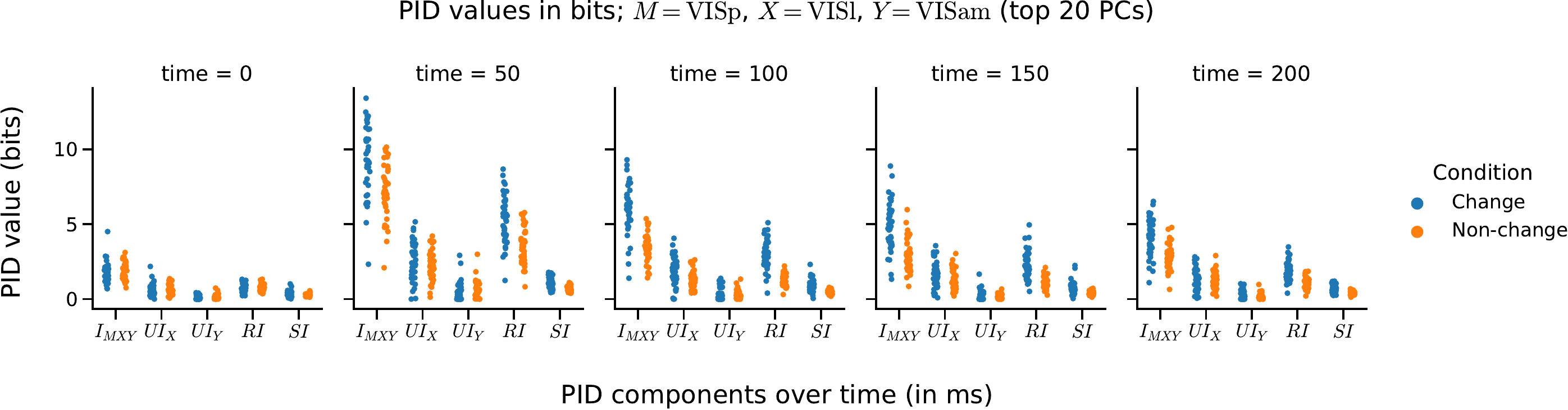}\\[12pt]
	\includegraphics[width=\linewidth]{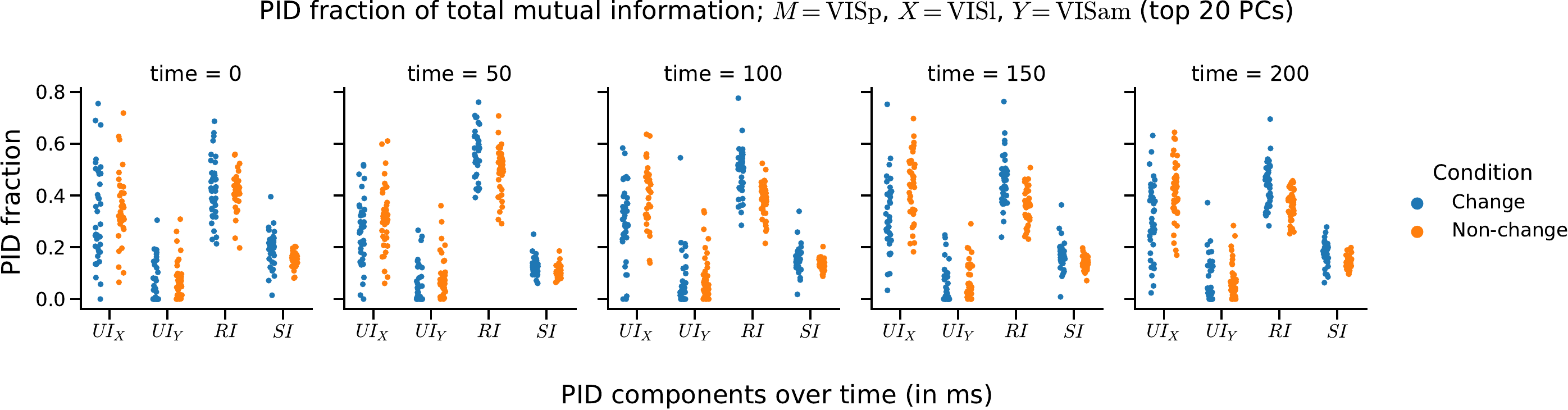}
	\caption{All PID components---about VISp activity, between VISl and VISam---in bits (top), and as a fraction of total mutual information (bottom), at various times after stimulus onset, for change and non-change flashes.
		Here, the label $I_{MXY}$ in the x-axis of the top plot refers to $I(M ; (X, Y))$.
		These plots show that redundancy is the primary driver of mutual information.
	}
	\label{fig_vbn_all_visam_20}
\end{figure}

The Visual Behavior Neuropixels data was analyzed as follows:
\begin{enumerate}[leftmargin=2em, topsep=0pt]
	\item We selected mice that had at least 20 units in each brain region of interest.
		Only mice with both familiar and novel sessions were selected.
	\item From each region we selected units of `good' quality, with SNR at least 1, and with fewer than 1 inter-spike interval violations.
	\item Trials were aligned to the start of each stimulus flash, and spikes were counted in bins of 50 ms, between 0 and 250 ms after stimulus onset (0-50 ms, 50-100 ms, etc.).
	\item Trials corresponding to a non-change flash were defined as those that occurred between 4 and 10 flashes after the start of a behavioral trial, such that the image remained the same as the original image in this behavioral trial.
		Flashes corresponding to an omission, flashes after an omission, and flashes during which the animal licked, were all removed.
		Only flashes that occurred while the animal was engaged (as measured by an average reward rate of at least 2 rewards/min) were selected.
	\item Trials corresponding to a change flash were defined as those during which an image change occurred, and the animal was engaged (as above).
	\item The top 10 or 20 principal components of neural activity were selected at each time bin, and for each brain region under consideration.
		Principal component analysis was carried out using the Scikit-learn~\citesm{pedrogosa2011scikitlearn} package in Python.
	\item $\sim_G$-PID estimates were computed on the covariance matrix between principal components across regions, for each time bin.
	\item Data were aggregated across 42 mice for the figures with VISal, and over 40 mice for the figures with VISam.
	\item Statistical significance was assessed using a two-sided unpaired Mann-Whitney-Wilcoxon test.
\end{enumerate}

\subsection{Additional Results}

Figures~\ref{fig_vbn_ri_visal_20} and \ref{fig_vbn_ri_visam_20} show results for the redundancy between three visual cortical areas over time, using the top-20 principal components in each region (rather than the top-10, as used in Fig.~\ref{fig_vbn_ri_visal}).
Fig.~\ref{fig_vbn_ri_visal_20} shows redundancy about VISp activity, between VISl and VISal.
Fig.~\ref{fig_vbn_ri_visam_20} shows redundancy about VISp activity, between VISl and a different higher-order cortical region, VISam (see, e.g.,~\citesm{siegle2021survey_sm}).

These figures show an even greater, and more sustained redundancy (as well as redundant fraction of information) about VISp activity, between VISl and the higher-order cortical region (either VISal or VISam), when the stimulus shown is a behaviorally relevant target.

Figures~\ref{fig_vbn_all_visal_20} and \ref{fig_vbn_all_visam_20} show all PID components, not just redundancy, for the same settings as in Figures~\ref{fig_vbn_ri_visal_20} and \ref{fig_vbn_ri_visam_20} respectively.
These show that redundancy is the dominant partial information component, and appears to be the main driver of changes in the overall mutual information.
This justifies why we include only a plot of redundancy in Figs.~\ref{fig_vbn_ri_visal}, \ref{fig_vbn_ri_visal_20} and \ref{fig_vbn_ri_visam_20}.

\subsection{Differences between Change and Non-change Conditions are not an Artifact of Bias-correction}

\begin{figure}[t]
	\centering
	\includegraphics[width=0.473\linewidth]{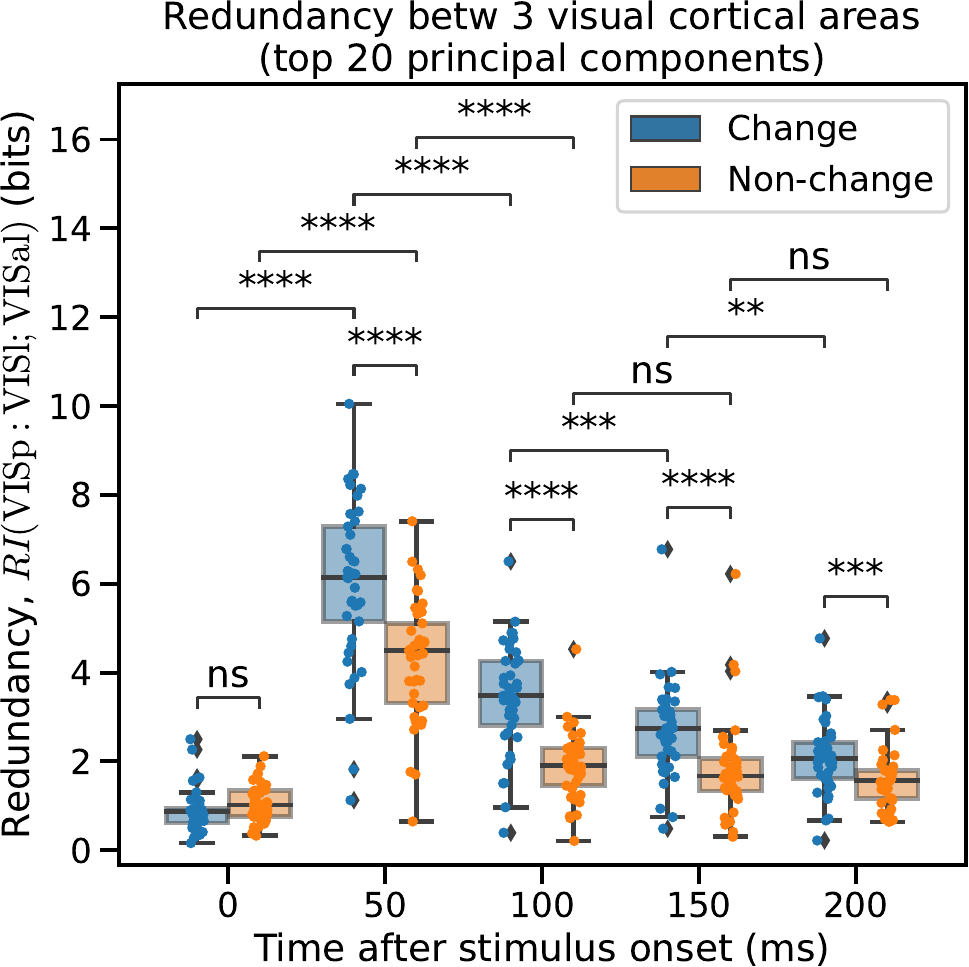}\hfill%
	\includegraphics[width=0.49\linewidth]{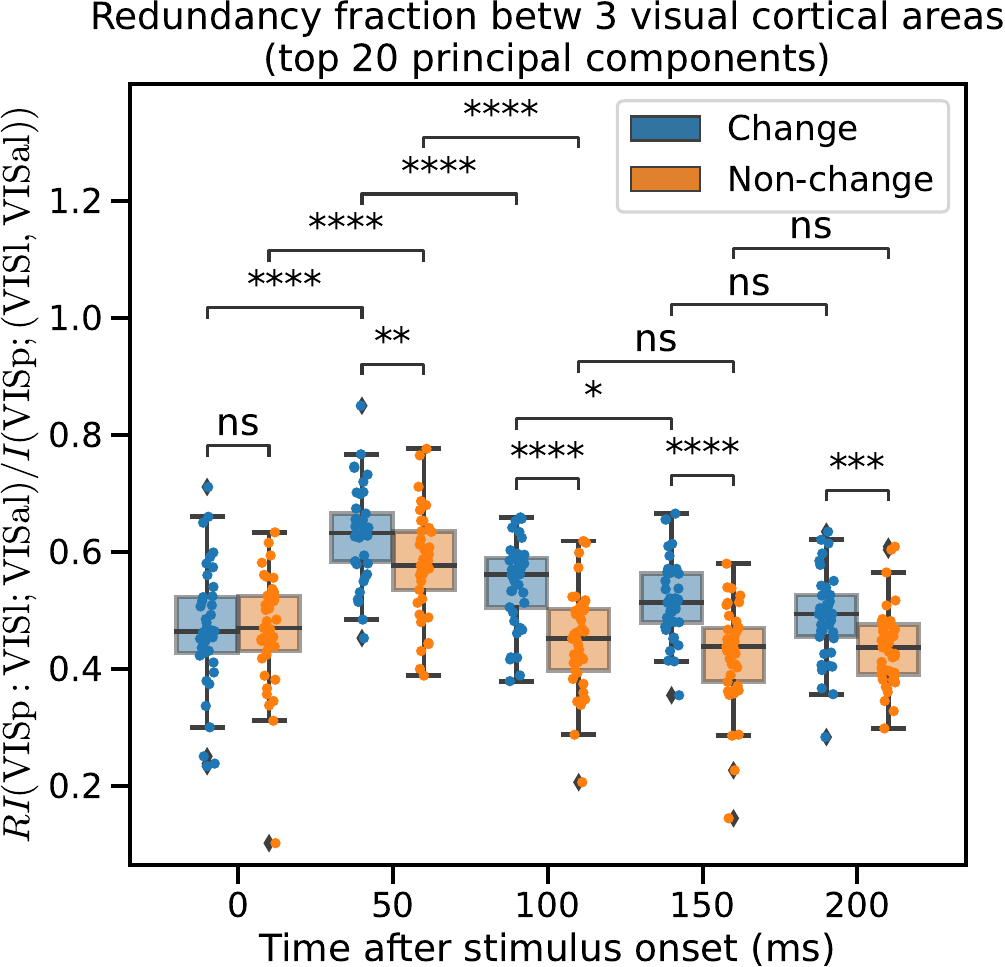}
	\caption{Redundancy about VISp activity between VISl and VISal, as a function of time, for flashes corresponding to an image change (blue) and flashes corresponding to a non-change (orange) with an equal number of samples.
		Data points are across 42 mice.
		The plot on the left shows the raw redundancy in bits, while the plot on the right shows the redundancy normalized by the total mutual information.
		The observations made in the other figures continue to hold; the differences seen between the two conditions are, therefore, not a result of differences in sample size.
	}
	\label{fig_vbn_ri_eqsamp_visal_20}
\end{figure}

The number of trials corresponding to change flashes is much smaller than the number of trials corresponding to non-change flashes.
Accordingly, the sample size used to estimate the covariance matrix is different in each of the two conditions.
Bias correction was performed using the appropriate sample size; however, as noted in Section~\ref{sec_estimation}, our bias correction process is not perfect, and may leave some residual bias.

In order to show that the results we observed were not an artifact of differences in residual bias caused by different sample sizes, we randomly subsampled the non-change flashes to produce a dataset with equal numbers of trials for change and non-change flashes.
Repeating the analysis as before, we found that our conclusions continued to hold even in the setting where both conditions have equal sample sizes, as shown in Figure~\ref{fig_vbn_ri_eqsamp_visal_20}.

\section{Compute Configuration Used and Code Availability}

All analyses were performed on a workstation equipped with an Intel Core i7-10700KF CPU with 8~cores (16~threads), 48 GiB of RAM and data stored on a 1 TB PCIe NVMe solid state drive.

Analysis of the Visual Behavior Neuropixels data (for 84 sessions) with $d = 20$ took approximately 9 minutes to run.
This included loading data for each session and computing 840 PID values on $60 \times 60$ covariance matrices, implying an average run-time of about 0.64s for each PID estimate (including amortized data-load time).

All code used to compute and estimate the $\sim_G$-PID and correct for bias, including all examples in this paper and code for neural data analysis, \ifarxiv{is}{will be made} available on Github\ifarxiv{~\citesm{code}}{}.

\bibliographysm{supplementary-refs}

\end{document}